%%CF replace the following line without ",acmtocl" by the one after
%%that with ",acmtocl" for the TOCL submission; atm the name of the
%%journal on the title page is empty.
%\documentclass[prodmode]{acmsmall} % Aptara syntax
\documentclass[prodmode,acmtocl]{acmsmall} % Aptara syntax
%%%\journalOrReport{\documentclass[prodmode,acmtocl]{acmsmall}}{\documentclass[prodmode]{acmsmall}} % Aptara syntax
\pdfoutput=1

% AutoTeX as used by arXiv does not seem to like \newcommands before the
% actual \documentclass statement
\newcommand{\journalOrReport}[2]{#1} % for TOCL version
%\renewcommand{\journalOrReport}[2]{#2} % for arXiv version

% Package to generate and customize Algorithm as per ACM style
\usepackage[ruled]{algorithm2e}

\SetAlFnt{\small}
\SetAlCapFnt{\small}
\SetAlCapNameFnt{\small}
\SetAlCapHSkip{0pt}
\IncMargin{-\parindent}

\usepackage[english]{babel}
\usepackage{amsmath}
\usepackage{amssymb}
\usepackage{latexsym}
\usepackage{amstext}
\usepackage{stmaryrd}
\usepackage{url}
\usepackage{color}
\usepackage{enumerate}
\usepackage{wrapfig}
\usepackage{fancyvrb}
\usepackage{multirow}

% Metadata Information - WILL NEED CHANGING!
% \acmVolume{42}
% \acmNumber{1}
% \acmArticle{42}
% \acmYear{2015}
% \acmMonth{1}

\acmVolume{V}
\acmNumber{N}
\acmArticle{A}
\acmYear{YYYY}
\acmMonth{0}

% Generic Macros

\newcommand{\setsorts}{\mathcal{S}}
\newcommand{\setvars}{\mathcal{V}}
\newcommand{\setivars}{\mathcal{V}_{\mathtt{init}}}
\newcommand{\arrtype}{\Rightarrow}
\newcommand{\arrfunc}{\Longrightarrow}
\newcommand{\T}{\mathcal{I}}
\newcommand{\Terms}{\mathcal{T}\!\!\mathit{erms}}
\newcommand{\J}{\mathcal{J}}

\newcommand{\interpret}[1]{\llbracket #1 \rrbracket}
\newcommand{\Bool}{\mathbb{B}}

\newcommand{\Constructors}{\mathcal{C}\mathit{ons}}
\newcommand{\Defineds}{\mathcal{D}}

\newcommand{\arrz}{\to}
\newcommand{\arr}[1]{\arrz_{#1}}
\newcommand{\arrr}[1]{\arrz_{#1}^*}

\newcommand{\lrarr}[1]{\leftrightarrow_{#1}}
\newcommand{\lrarrr}[1]{\leftrightarrow_{#1}^*}
\newcommand{\arrzbase}{\arrz_{\mathtt{base}}}
\newcommand{\arrzrule}{\arrz_{\mathtt{rule}}}
\newcommand{\arrzcalc}{\arrz_{\mathtt{calc}}}

\newcommand{\constraint}[1]{[#1]}

\newcommand{\FV}{\mathit{Var}}
\newcommand{\domain}{\mathit{Dom}}
\newcommand{\Rules}{\mathcal{R}}
\newcommand{\Rulescalc}{\mathcal{R}_{\mathtt{calc}}}
\newcommand{\subst}[2]{#1#2}

\newcommand{\Sigmaterms}{\Sigma_{\mathit{terms}}}
\newcommand{\Sigmalogic}{\Sigma_{\mathit{theory}}}
\newcommand{\Sigmatheory}{\Sigma_{\mathit{theory}}}
\newcommand{\Sigmacore}{\Sigma_{\mathit{theory}}^{\mathit{core}}}
\newcommand{\Sigmaint}{\Sigma_{\mathit{theory}}^{\mathit{int}}}

\newcommand{\Values}{\mathcal{V}al}

\newcommand{\E}{\mathcal{E}}
\renewcommand{\H}{\mathcal{H}}

\newcommand{\flag}{\mathit{flag}}

\newcommand{\vdashrinoblank}{\vdash_{\mathtt{ri}}}
\newcommand{\vdashri}{\ \vdash_{\mathtt{ri}}\ }
\newcommand{\vdashristar}{\vdash_{\mathtt{ri}}^*}
\newcommand{\Expd}{\mathit{Expd}}
\newcommand{\Expdapp}[4]{\Expd(#1 \approx #2\ \constraint{#3},#4)}

\newcommand{\LVar}{\mathit{LVar}}

\newcommand{\equalgen}{\sim}
\newcommand{\coterm}[2]{#1\,\constraint{#2}}
\newcommand{\subpos}[2]{#1_{|#2}}
\newcommand{\subreplace}[3]{#1[#2]_{#3}}
\newcommand{\posroot}{\epsilon}
\newcommand{\seq}[1]{\overrightarrow{\!#1}}
\newcommand{\complflag}{\textsc{complete}}
\newcommand{\incomplflag}{\textsc{incomplete}}
\newcommand{\afun}{f}
\newcommand{\bfun}{g}

\newcommand{\avar}{x}
\newcommand{\bvar}{y}
\newcommand{\cvar}{z}
\newcommand{\dvar}{u}
\newcommand{\aterm}{s}
\newcommand{\bterm}{t}
\newcommand{\cterm}{u}
\newcommand{\dterm}{w}
\newcommand{\eterm}{q}
\newcommand{\asort}{\iota}
\newcommand{\bsort}{\kappa}

\newcommand{\symb}[1]{\mathsf{#1}}
\newcommand{\N}{\mathbb{N}}
\newcommand{\Z}{\mathbb{Z}}

\newcommand{\nul}{\symb{0}}
\newcommand{\one}{\symb{1}}
\newcommand{\two}{\symb{2}}

\newcommand{\summ}{\symb{sum}}
\newcommand{\fact}{\symb{fact}}

\newcommand{\strue}{\symb{true}}
\newcommand{\sfalse}{\symb{false}}

\newcommand{\uu}{\symb{u}}
\newcommand{\error}{\symb{error}}
\newcommand{\return}{\symb{return}}
\newcommand{\Int}{\symb{int}}
\newcommand{\IArr}{\symb{array(int)}}
\newcommand{\BOOL}{\symb{bool}}

\newcommand{\result}[1]{\symb{result}_{#1}}
\newcommand{\ret}[1]{\symb{return}_{#1}}

\newcommand{\intint}[2]{\{#1,\ldots,#2\}}

\newcommand{\aaa}{[}

\newcommand{\quant}[3]{#1#2 (#3)}
\newcommand{\bquant}[3]{\quant{#1}{#2}{#3}}
\newcommand{\parlr}[1]{~\longleftrightarrow\!\!\!\!\!\!\!\!\|~~_{#1}~}

\newcommand{\appshort}{Appendix~}
\newcommand{\secshort}{Section~}
\newcommand{\secsshort}{Sections~}
\newcommand{\exshort}{Example~}
\newcommand{\exsshort}{Examples~}
\newcommand{\thmshort}{Theorem~}
\newcommand{\lemshort}{Lemma~}
\newcommand{\defshort}{Definition~}

\renewcommand{\secshort}{\S~} % see https://en.wikipedia.org/wiki/Section_sign
\renewcommand{\secsshort}{\S\S~}
\renewcommand{\exshort}{Ex.~}
\renewcommand{\thmshort}{Thm.~}
\renewcommand{\lemshort}{Lemma~}
\renewcommand{\defshort}{Def.~}

\newcommand{\tool}[1]{\textsf{#1}}

\newcommand{\ctrl}{\tool{Ctrl}}
\newcommand{\infer}{\tool{Infer}}
\newcommand{\teetwo}{\tool{T2}}
\newcommand{\zeethree}{\tool{Z3}}

\newcommand{\llreve}{\tool{llr\^eve}}

\newcommand{\Prime}{${}^\prime$}

\newcommand{\myparagraph}[1]{\medskip\noindent\textit{#1.}}
\definecolor{comment}{rgb}{0.92, 0.92, 0.92}
\newcommand{\commentbox}[1]{\colorbox{comment}{\parbox{13cm}{
\emph{Comment:} #1}}}

% for use in the appendix
\newcounter{tmp-section-Counter}
\newcounter{tmp-lemma-Counter}

% Document starts
\begin{document}

% Page heads
\markboth{C. Fuhs, C. Kop, N. Nishida}{Verifying Procedural Programs via Constrained Rewriting Induction}

% Title portion
\title{Verifying Procedural Programs via Constrained Rewriting Induction}
\author{%
CARSTEN FUHS
\affil{Birkbeck, University of London}
CYNTHIA KOP
\affil{University of Innsbruck and University of Copenhagen}
NAOKI NISHIDA
\affil{Nagoya University}
}

\begin{abstract}
This paper aims to develop a verification method for procedural
programs via a transformation into
Logically Constrained Term Rewriting Systems (LCTRSs).
To this end, we
extend
transformation methods based
on integer TRSs
to
handle arbitrary data types, global variables, function
calls and arrays, as well as encode safety checks.
Then we adapt existing rewriting induction methods to LCTRSs and
propose a simple yet effective method to generalize equations.
We show that we can automatically verify memory safety and prove
correctness of realistic functions.
Our approach proves equivalence between two implementations, so in
contrast to other works, we do not require an explicit specification
in a separate specification language.
\end{abstract}

\category{D.2.4}{Software Engineering}{Software/Program Verification}
\category{I.2.3}{Artificial Intelligence}{Deduction and Theorem Proving}

\terms{Formal Verification}

\keywords{constrained term rewriting, inductive theorem proving,
rewriting induction, lemma generation, program analysis}

\acmformat{Carsten Fuhs, Cynthia Kop, and Naoki Nishida, 2017.
Verifying Procedural Programs via Constrained Rewriting Induction.}

\begin{bottomstuff}
This work is supported by
Austrian Science Fund (FWF) international project I963,
Marie Sk\l{}odowska-Curie action ``HORIP'' (H2020-MSCA-IF-2014, 658162),
the Japan Society for the Promotion of
Science (JSPS), and Nagoya University's Graduate Program for
Real-World Data Circulation Leaders from \emph{MEXT}, Japan.

Authors' addresses:
 C.~Fuhs, Dept.\ of Comp.\ Sci.\ and Inf.\ Sys., Birkbeck, Univ.\ of London, UK;
 C.~Kop, Dept.\ of Comp.\ Sci., Univ.\ of Copenhagen, Denmark;
 N.~Nishida, Grad.\ School of Informatics, Nagoya Univ., Japan.
\end{bottomstuff}

\maketitle

\bigskip
\colorbox{comment}{\parbox{13cm}{This is an author copy of the paper \emph{Verifying Procedural
Programs via Constrained Rewriting Induction}, published at ACM TOCL in 2017.  The published paper
can be found at \url{https://dl.acm.org/doi/10.1145/3060143}.

\medskip
The contents are the same as in the published paper, except for one modification: in
Definition~\ref{def:expansion} (the definition of \textsc{Expansion}), we had erroneously failed to
include the condition that $\gamma(x) \in \Values \cup \setvars$ for all $x \in \FV(\varphi) \cup
\FV(\psi)$.
Without this condition, the definition is not necessarily well-defined, since for instance an
equation $\afun(x,x) \approx \symb{a}$ and a rule $\afun(y, \bfun(z)) \arrz \symb{b}\ [y > 0]$ would
cause an equation to be created with $\bfun(z) > 0$ in the constraint, which is not legal (assuming
that $\bfun$ is not a theory symbol).}}

\newpage

\section{Introduction}
\label{sec:intro}

Ensuring with certainty that a program always behaves correctly is a
hard problem. One approach to this is formal verification---proving
with mathematical rigor that all executions of the program will have
the expected outcome.
Several methods for this have been investigated (see
e.g., \cite{hut:rya:00}).  However, classically many of them
require expert knowledge to
manually prove relevant properties about the code.

Instead, we hope to raise the degree of automation, ideally
creating a fully automatic verification / refutation process
and tools to
raise
developer productivity. Indeed, over the last years
automatic provers for program verification have flourished, as
witnessed, e.g., by tool competitions like
SV-COMP
\citeA{sv-comp} and the Termination Competition
(\url{http://termination-portal.org/wiki/Termination_Competition}).
Program verification is also recognized in industry,
cf.\ e.g.\ 
Facebook's safety prover \infer\ 
\cite{cal:dis:dub:gab:hoo:luc:ohe:pap:pur:rod:15}
or
Microsoft's
temporal prover \teetwo\ \cite{bro:coo:ish:khl:pit:16}.
However, these tools generally use specific reasoning techniques for
imperative programs and benefit from the progress
in automated
theorem proving over the last decades
only to a limited extent.
This suggests
likely
avenues for improvement.

One such avenue
is \emph{inductive theorem proving}.
This method is well investigated in
functional
programming
\cite{bun:01}
and term rewriting,
the underlying core calculus of functional programming.
To check a functional program $f$
against a specification
by a reference implementation
$f_{\mathit{spec}}$,
it suffices that $f(\seq{x}) \approx f_{\mathit{spec}}(\seq{x})$ is an
\emph{inductive theorem}.
Thus,
no explicit specification language is needed:
giving a (possibly not optimized) reference implementation
$f_{\mathit{spec}}$ in the same programming language suffices.

To analyze imperative programs (in C, Java, etc.), recent works have
applied transformations into term rewrite systems
(e.g., \cite{ott:bro:ess:gie:10}).
In particular, \emph{constrained rewriting systems} are popular
as target language, since logical constraints
to model the control flow can be separated from terms
to model intermediate
states~\cite{fur:nis:sak:kus:sak:08,fal:kap:09,sak:nis:sak:sak:kus:09,nak:nis:kus:sak:sak:10,fal:kap:sin:11}.
Unifying existing approaches,
\citeN{kop:nis:13} have proposed
the framework of \emph{logically constrained term
rewriting systems (LCTRSs)}.

\myparagraph{Aims}
The aim of this paper is twofold.  First, we propose a new
transforma\-tion method from procedural programs into constrained term
rewriting.  This transformation makes it possible
to use the many methods available to term rewriting also to analyze
imperative programs.  Unlike previous methods, we do not limit
interest to integer functions.

Second, we develop a verification method for LCTRSs,
based on
rewriting induction~\cite{red:90}---a well-investigated method of
inductive theorem proving---to prove (total) equivalence of two
functions.  We also supply two generalization techniques, the main one
of which is specialized for transformed iterative functions.

The applications are many.
First, checking
equivalence between different
implementations comes to mind. This allows the user
to determine automatically
if a modification in the program has changed
its semantics
(see e.g.\ \cite{god:str:13,lah:haw:kaw:reb:12}).
Proposing equivalent replacements
may even be done automatically,
via algorithm recognition (see e.g.\ \cite{ali:bar:03}).

In compilation, automated equivalence checking
can validate correctness of compiler optimizations on a
per-instance basis
\cite{nec:00,pnu:sie:sin:98}
or
once-and-for-all for
a given
optimization template
\cite{kun:tat:ler:09,lop:mon:16}.
Equivalence checking is also used in proofs of secure information flow
\cite{ter:aik:05}
and can be used to
prove safety properties,
e.g., memory safety.

\myparagraph{Why LCTRSs}
Direct support of basic types like the integers, and of constraints to
restrict evaluation---features
absent in basic TRSs---is essential to handle realistic programs.
Unlike
earlier constrained
rewriting systems,
LCTRSs do not limit the underlying theory to (linear) integer
arithmetic: we might use (combinations of) arbitrary first-order
theories, including, e.g.,
$n$-dimensional integer arrays,
floating point numbers, and bitvectors.  This makes it possible
to natively handle sophisticated programs.

Despite the generality, we
get strong results on LCTRSs by
reducing analysis problems like termination and equivalence to a
sequence of satisfiability problems over the underlying
theories.
Automatic
tools---like our
tool \ctrl\ \cite{kop:nis:15}
for rewriting, termination, and inductive theorem
proving---can defer such queries to an external
\emph{SAT Modulo Theories (SMT)} solver \cite{nie:oli:tin:06}, as a
black box.
Future advances in the SMT world then directly transfer to analysis
of LCTRSs.

\myparagraph{Structure}
We first recall the LCTRS formalism
from~\cite{kop:nis:13} (\secshort\ref{sec:prelim}) and
show a way to translate procedural programs to LCTRSs
(\secshort\ref{sec:transformations}).
Then we
lift
rewriting induction
methods for
constrained rewriting
to LCTRSs (\secshort\ref{sec:ri})
and strengthen them  with two dedicated generalization techniques
(\secshort\ref{sec:lemma-gen}).
Finally we discuss automation and experimental results
(\secshort\ref{sec:experiments})
as well as related and future work
(\secsshort\ref{sec:related-work}--\ref{sec:future}) and conclude.

\myparagraph{Contributions over the conference version}
The present paper provides several additional
contributions over the conference version~\cite{kop:nis:14}:
(1)
We significantly extend our method to translate procedural
programs to LCTRSs.
(2)
We extend
our theory of constrained inductive theorem proving to
\emph{disproving} equivalence (following
\cite{sak:nis:sak:sak:kus:09,fal:kap:12})
and add several
inference rules.
(3)
We provide an additional
generalization
technique and a detailed proof strategy to automate
rewriting induction for translated procedural programs.
(4)
We have improved the implementation
and added an automatic
translation from C programs to LCTRSs.

\subsection{Motivating Example}
\label{subsec:motiv}

Aside from business applications,
automatic equivalence proving can be used as an aid in grading
student programming assignments.  Combining a test run of the
assignments on a set of sample inputs (which identifies many
incorrect programs, but leaves false positives) with an automatic
correctness check can save teachers a lot of time.

\begin{example}\label{exa:motivating}
Consider the following programming assignment.

\begin{center}
\begin{minipage}[t]{0.9\textwidth}
\it
Write a function {\tt sum} which, given an integer array and its
length as input, returns the sum of its elements.  Do not modify the
input array.
\end{minipage}
\end{center}

\noindent
We consider four different C implementations of this exercise:

\medskip\noindent
\begin{tabular}{c|c}
\begin{minipage}[h]{0.48\textwidth}
\begin{verbatim}
int sum1(int arr[],int n){
  int ret=0;
  for(int i=0;i<n;i++)
    ret+=arr[i];
  return ret;
}
\end{verbatim}
\end{minipage}
&
\begin{minipage}[h]{0.48\textwidth}
\begin{verbatim}
 int sum2(int arr[], int n) {
   int ret, i;
   for (i = 0; i < n; i++) {
     ret += arr[i];
   }
   return ret;
 }
\end{verbatim}
\vspace{-9pt}
\end{minipage}
\\
\\
\hline
\\
\begin{minipage}[h]{0.48\textwidth}
\begin{verbatim}
int sum3(int arr[], int len) {
  int i;
  for (i = 0; i < len-1; i++)
    arr[i+1] += arr[i];
  return arr[len-1];
}
\end{verbatim}
\vspace{-12pt}
\end{minipage}
&
\begin{minipage}[h]{0.48\textwidth}
\begin{verbatim}
 int sum4(int *arr, int k) {
   if (k <= 0) return 0;
   return arr[k-1] +
          sum4(arr, k-1);
 }
\end{verbatim}
\end{minipage}
\\
\end{tabular}

\bigskip
The first solution is correct.  The second is not, because
\texttt{ret} is not initialized---which may be missed
in standard tests depending on the compiler used.
The third solution is incorrect
because the array is modified against the instructions, and
moreover, gives a random result or segmentation fault
if \texttt{len} $=0$.
The fourth solution
is correct.

These implementations can be transformed into the following LCTRSs:
\[
\begin{array}{crcl}
\text{(1a)} &
\symb{sum1}(arr,n) & \arrz & \symb{u}(arr,n,\nul,\nul) \\
\text{(1b)} &
\symb{u}(arr,n,ret,i) & \arrz & \symb{error} \hfill \constraint{i < n \wedge (i < \nul \vee i \geq \symb{size}(arr))} \\
\text{(1c)} &
\symb{u}(arr,n,ret,i) & \arrz & \symb{u}(arr,n,ret+\symb{select}(arr,i),i+\one) \hfill \constraint{i \!<\! n \wedge \nul \leq i \!<\! \symb{size}(arr)} \\
\text{(1d)} &
\symb{u}(arr,n,ret,i) & \arrz & \symb{return}(arr,ret) \hfill \constraint{i \geq n} \\
 \\
\text{(2a)} &
\symb{sum2}(arr,n) & \arrz & \symb{u}(arr,n,ret,\nul) \\
& \multicolumn{3}{l}{\phantom{ABC} \symb{u}\ \text{rules as copied from
    above}}\\
\\
\text{(3a)} &
\symb{sum3}(arr,len) & \arrz & \symb{v}(arr,len,\nul) \\
\text{(3b)} &
\symb{v}(arr,len,i) & \arrz &
  \symb{error} \hfill \constraint{i < len - \one \wedge (i < \nul \vee i + 1 \geq \symb{size}(arr))} \\
\text{(3c)} &
\symb{v}(arr,len,i) & \arrz & \symb{v}(\symb{store}(arr,i+1,\symb{select}(arr,i+\one)+\symb{select}(arr,i)), len, i+\one)\ \  \\
  & & & \hfill \constraint{i < len -\one \wedge \nul \leq i \wedge i + 1 < \symb{size}(arr)} \\
\end{array}
\]
\[
\begin{array}{crcll}
\text{(3d)} &
\symb{v}(arr,len,i) & \arrz &
  \symb{return}(arr,\symb{select}(arr,len-\one)) \\
  & & & \hfill \constraint{i \geq len-\one \wedge \nul \leq len -\one < \symb{size}(arr)} \\
\text{(3e)} &
\symb{v}(arr,len,i) & \arrz & \symb{error}
  \hfill \constraint{i \geq len-\one \wedge (len - \one < \nul \vee len
  - \one \geq \symb{size}(arr))} \\
 \\
\text{(4a)} &
\symb{sum4}(arr,k) & \arrz & \symb{return}(arr,\nul) \hfill \constraint{k \leq \nul} \\
\text{(4b)} &
\symb{sum4}(arr,k) & \arrz & \symb{error} \hfill \constraint{k - \one \geq \symb{size}(arr)} \\
\text{(4c)} &
\symb{sum4}(arr,k) & \arrz & \symb{w}(\symb{select}(arr,k-\one), \symb{sum4}(arr,k-\one))\ \  \hfill \constraint{\nul \leq k - \one < \symb{size}(arr)} \\
\text{(4d)} &
\symb{w}(n,\symb{error}) & \arrz & \symb{error} \\
\text{(4e)} &
\symb{w}(n,\symb{return}(a,r)) & \arrz & \symb{return}(a, n + r) \\
\end{array}
\]
Note that arrays
carry an implicit size (their allocated memory)
which is queried
to model the runtime behavior of the
C program
and test for out-of-bound errors.
The fresh variable in the right-hand side of $(\text{2a})$
models
that the third parameter of $\symb{u}$
is assigned an \emph{arbitrary} integer.
The details of this transformation are discussed in
\secshort\ref{sec:transformations}.

Using inductive theorem proving, we can now prove that
\begin{itemize}
\item 
  $\forall arr \in \symb{array}(\Int).\ 
  \forall len \in \Int.\ 
  ~~
  \symb{sum1}(arr,len) \leftrightarrow^* \symb{sum4}(arr,len)\
  \text{if}\ \nul \leq len \leq \symb{size}(arr)$
\item
  $\exists arr \in \symb{array}(\Int).\ 
  \exists len \in \Int.\ 
  ~~
  \symb{sum3}(arr,len) \not\leftrightarrow^* \symb{sum4}(arr,len)\ 
  \text{with}\ \nul \leq len \leq \symb{size}(arr)$
\end{itemize}
So $\symb{sum1}$ and $\symb{sum4}$ return the same result on any input
such that the given length does not cause out-of-bound errors, but
$\symb{sum3}$ and $\symb{sum4}$ do not.  (It seems likely that the
disproof obtained from inductive theorem proving could be used to
extract counterexample inputs, but at present we have not studied a
systematic way of doing so.)

For $\symb{sum2}$, we \emph{do} have $\symb{sum2}(arr,len)
\leftrightarrow^* \symb{sum4}(arr,len)$, since
we \emph{can} always choose to instantiate $ret$ with $\nul$.
The system is not \emph{confluent}; we
can also prove that there exist
$a,n$ such that
$\symb{sum2}(a,n) \rightarrow^* \aterm \neq \bterm \leftarrow^*
\symb{sum4}(a,n)$ for terms $\aterm,\bterm$ in normal form.  As
explained in \secshort\ref{sec:implementation}, we use a
proof strategy which typically proves only the
``$\neq$'' statement.
\end{example}

\subsection{Practical Use}\label{subsec:practical}

The primary application that we see for our technique is the following:

\subsubsection{Comparing a function to a specification}
As in \exshort\ref{exa:motivating},
we can verify correctness of a C function
$\symb{f}$ against a reference implementation $\symb{g}$
by translating both functions to LCTRS rules
(\secshort\ref{sec:transformations}) and
proving that
$\symb{f}(\avar_1,\dots,\avar_n) \approx \symb{g}(\avar_1,\dots,
\avar_n)\ \constraint{\strue}$ is an inductive theorem.
If we only need equivalence under given preconditions
on the input variables---such as
  $\nul \leq \mathit{len}
  \leq \symb{size}(\mathit{arr})$ in
  \exshort\ref{exa:motivating}---we formulate this as a constraint
  $\varphi$ and
  analyze whether
$\symb{f}(\avar_1,\dots,\avar_n) \approx \symb{g}(\avar_1,\dots,
\avar_n)\ \constraint{\varphi}$
is an inductive theorem.

Note that
we do not require a separate specification language---although
if desirable, it is of course possible to specify the
reference implementation directly as an LCTRS.

\medskip
Further possible applications of our technique include:

\subsubsection{Code optimization (or other improvement)}
Sometimes the
``reference implementation'' $\symb{g}$ suggested above can simply be
an existing---and inefficient, or inelegant---version of a function.
Thus, inductive theorem proving can be used to prove that it is safe
to replace a function in a large real-life program by an optimized
alternative.

\subsubsection{Error checking}
As the transformation from C to LCTRSs
includes error checking (as seen for memory safety violations
in \exshort\ref{exa:motivating}), we can use inductive theorem proving
to verify the absence of such errors.  This is done by adding
error-checking rules,
e.g.,
\[
\begin{array}{rclcrcl}
\symb{errorfree}(\symb{return}(a,n)) & \arrz & \strue &
\phantom{ABCDE} &
\symb{errorfree}(\symb{error}) & \arrz & \sfalse
\end{array}
\]
and proving that $\symb{errorfree}(\symb{sum4}(a,n)) \approx \strue\ \constraint{\varphi}$ is an inductive theorem, where $\varphi$ is
the precondition on the input.
Aside from memory safety, this approach can be used to certify the
absence of
for instance divisions
by zero or
integer
overflow.  The key is in the transformation, where we can
choose which constructions result in an error.

\subsubsection{Classical correctness checks}
\label{subsubsec:classical}
Aside from comparisons to
an example implementation, we can also specify a correctness property
directly in SMT.  For instance, given an implementation of the
$\symb{strlen}$ function, its correctness could be verified by proving
that
\[
\symb{strlen}(x) \approx \symb{return}(n)\ \constraint{\nul \leq
n < \symb{size}(x) \wedge \symb{select}(x,n) = \nul \wedge
\bquant{\forall}{i \in \intint{\nul}{n-\one}}{\symb{select}(x,i) \neq \nul}}
\]
is an inductive theorem.  Alternatively, we can use extra rules
to test properties in SMT.

\begin{example}\label{ex:strcpytest}
To analyze correctness of an implementation of $\symb{strcpy}$,
we may use
\[
\begin{array}{rcl}
\symb{test}(\avar,n,\symb{error}) & \arrz & \sfalse \\
\symb{test}(\avar,n,\return(\bvar)) & \arrz & b\ 
  \constraint{b \Leftrightarrow
  \bquant{\forall}{i \in \intint{\nul}{n}}{\symb{select}(\avar,i) =
  \symb{select}(\bvar,i)}} \\
\end{array}
\]
and prove that the following equation is an inductive theorem:
\[
\begin{array}{c}
\symb{test}(\avar,n,\symb{strcpy}(\bvar,\avar)) \approx \strue \\
\aaa\nul \leq n < \symb{size}(\avar) \wedge n < \symb{size}(\bvar)
  \wedge \symb{select}(\avar,n) = \nul \wedge
\bquant{\forall}{i \in \intint{\nul}{n-\one}}{\symb{select}(\avar,i) \neq \nul}]
\end{array}
\]
Note that this more sophisticated test is \emph{needed} in this
case, since correctness of $\symb{strcpy}$ does not require that
$x = y$ if $\symb{strcpy}(x) \arrz^* \symb{return}(y)$ (the sizes
of $x$ and $y$ may differ).
\end{example}

\section{Preliminaries}
\label{sec:prelim}

In this section, we briefly recall \emph{Logically Constrained Term
Rewriting Systems} (usually abbreviated as \emph{LCTRSs}),
following the definitions in~\cite{kop:nis:13}.

\subsection{Logically Constrained Term Rewriting Systems}

\myparagraph{Many-sorted terms}
We introduce terms, typing, substitutions, contexts, and subterms
(with corresponding terminology) in the usual way for many-sorted
term rewriting.

\begin{definition}
We assume given a set $\setsorts$ of \emph{sorts} and an infinite set
$\setvars$ of \emph{variables}, each variable equipped with a sort.
A \emph{signature} $\Sigma$ is a set of \emph{function symbols}
$\afun$, disjoint from $\setvars$, each
equipped with a
\emph{sort declaration} $[\asort_1 \times \cdots \times \asort_n]
\arrtype \bsort$, with all $\asort_i$ and $\bsort$ sorts.
For readability, we often write $\bsort$ instead of
$[] \arrtype \bsort$.
The set $\Terms(\Sigma,\setvars)$ of \emph{terms} over $\Sigma$ and
$\setvars$ contains any expression $\aterm$ such that $\vdash \aterm
: \asort$ can be derived for some sort $\asort$, using:

\vspace{-4pt}
\noindent
\begin{minipage}[b]{0.28\linewidth}
\[
\frac{}{\vdash \avar : \asort}
\ (\avar : \asort \in \setvars)
\]
\end{minipage}
\begin{minipage}[b]{0.71\linewidth}
\[
\frac{\vdash \aterm_1 : \asort_1\ \ \ldots\ \ \vdash
\aterm_n : \asort_n
}{\vdash \afun(\aterm_1,\ldots,\aterm_n) : \bsort}
\ (\afun : [\asort_1 \times \cdots \times \asort_n] \arrtype \bsort 
\in \Sigma)
\]
\end{minipage}
\end{definition}

We fix $\Sigma$ and $\setvars$.
Note that for every term $\aterm$, there is a unique sort $\asort$
with $\vdash \aterm : \asort$.

\begin{definition}
Let $\vdash \aterm : \asort$.
We call $\asort$ the \emph{sort of} $\aterm$.
Let $\FV(\aterm)$ be the set of variables occurring in $\aterm$;
we say that $\aterm$ is \emph{ground} if $\FV(\aterm) = \emptyset$.
\end{definition}

\begin{definition}
A \emph{substitution} $\gamma$ is a sort-preserving total mapping
from $\setvars$ to $\Terms(\Sigma,\setvars)$.
The result $\subst{\aterm}{\gamma}$ of applying a substitution
$\gamma$ to a term $\aterm$ is $\aterm$ with all occurrences of
a variable $\avar$ replaced by $\gamma(\avar)$.
The \emph{domain} of $\gamma$, $\domain(\gamma)$, is the set
of variables $x$ with $\gamma(x) \neq x$.
The notation $[\avar_1:=\aterm_1,\ldots,\avar_k:=\aterm_k]$ denotes
a substitution $\gamma$ with $\gamma(\avar_i) = \aterm_i$ for
$1 \leq i \leq n$, and $\gamma(y) = y$ for $y \notin \{\avar_1,
\dots,\avar_n\}$.
For two substitutions $\gamma$ and $\delta$, their composition
$\gamma \circ \delta$ is given by $(\gamma \circ \delta)(\avar) =
\gamma(\delta(\avar)) = (\avar \delta) \gamma$ for all variables $\avar$.

Two terms $\aterm$ and $\bterm$ are \emph{unifiable}
if there exists a substitution $\gamma$ such that
$\subst{\aterm}{\gamma} = \subst{\bterm}{\gamma}$.
Then $\gamma$ is called a \emph{unifier}
for $\aterm$ and $\bterm$. If moreover for all unifiers $\gamma'$
for $\aterm$ and $\bterm$ there is a substitution
$\delta$ such that $\gamma' = \delta \circ \gamma$, we call $\gamma$
a \emph{most general unifier (mgu)}
for $s$ and $t$.
\end{definition}

\begin{definition}
Given a term $\aterm$, a \emph{position} in $\aterm$ is a sequence
$p$ of positive
integers such that $\subpos{\aterm}{p}$ is defined,
where $\subpos{\aterm}{\posroot} = \aterm$
and $\subpos{\afun(\aterm_1,\ldots,\aterm_n)}{i
\cdot p} = \subpos{(\aterm_i)}{p}$.  We call $\subpos{\aterm}{p}$
a \emph{subterm}~of
$\aterm$.
If $\vdash \subpos{\aterm}{p} : \asort$ and
$\vdash \bterm : \asort$, then $\subreplace{\aterm}{\bterm}{p}$
denotes $\aterm$ with the subterm at position $p$ replaced by
$\bterm$.
A \emph{context} $C$ is a term containing one or more typed
\emph{holes} $\Box_i : \asort_i$.
If $\aterm_1 : \asort_i,\ldots,\aterm_n : \asort_n$, we
define $C[\aterm_1,\ldots,\aterm_n]$ as $C$ with each $\Box_i$
replaced by $\aterm_i$.
\end{definition}

\myparagraph{Logical terms}
Specific to LCTRSs, we consider different kinds of symbols and terms.

\begin{definition}
We assume given:
\begin{itemize}
\item signatures $\Sigmaterms$ and $\Sigmalogic$ such that
  $\Sigma = \Sigmaterms \cup \Sigmalogic$;
\item a mapping $\T$ which assigns to each sort $\asort$ occurring in
  $\Sigmalogic$ a set $\T_\asort$;
\item a mapping $\J$ which assigns to each $\afun : [\asort_1 \times
  \cdots \times \asort_n] \arrtype \bsort \in \Sigmalogic$ a function
  in $\T_{\asort_1} \times \cdots \times \T_{\asort_n} \arrfunc
  \T_\bsort$;
\item for all sorts $\asort$ occurring in $\Sigmalogic$ a set
  $\Values_\asort \subseteq \Sigmalogic$ of \emph{values}: function
  symbols $a : [] \arrtype \asort$ such that $\J$ gives a bijective
  mapping from $\Values_\asort$ to $\T_\asort$.
\end{itemize}
We require that $\Sigmaterms \cap \Sigmalogic \subseteq \Values
= \bigcup_\asort \Values_\asort$.
The sorts occurring in $\Sigmalogic$ are called \emph{theory sorts},
and the symbols \emph{theory symbols}.
Symbols in $\Sigmalogic \setminus \Values$ are \emph{calculation
  symbols}.
A term in $\Terms(\Sigmalogic,\setvars)$ is called a \emph{logical
term}.
\end{definition}

\begin{definition}
For ground logical terms, let
$\interpret{\afun(\aterm_1,\ldots,\aterm_n)} :=
\J_\afun(\interpret{\aterm_1},\ldots,\interpret{\aterm_n})$.
For every ground logical term $\aterm$ there is a unique value $c$
such that $\interpret{\aterm} = \interpret{c}$; we say that $c$ is the
value of $\aterm$.
A \emph{constraint} is a logical term $\varphi$ of some sort $\BOOL$
with $\T_\BOOL = \Bool = \{ \top,\bot \}$, the set of
\emph{booleans}.
A constraint $\varphi$ is \emph{valid} if
$\interpret{\subst{\varphi}{\gamma}} = \top$ for \emph{all}
substitutions
$\gamma$ which map $\FV(\varphi)$ to values,
and \emph{satisfiable} if
$\interpret{\subst{\varphi}{\gamma}} = \top$ for \emph{some}
such substitutions.
A substitution $\gamma$ \emph{respects}
$\varphi$ if $\gamma(\avar)$ is a value
for all $\avar \in \FV(\varphi)$ and
$\interpret{\varphi\gamma} = \top$.
\end{definition}

Terms in $\Terms(\Sigmaterms,\emptyset)$ can be thought of
as the primary objects of rewriting:
a reduction typically begins and ends with such terms,
with elements of $\Sigmalogic \setminus \Values$ (also called
\emph{calculation symbols})
to perform calculations in the underlying theory.

\medskip
We typically choose a theory signature with $\Sigmalogic \supseteq
\Sigmacore$, where $\Sigmacore$ contains
$\strue,\sfalse : \BOOL,\wedge,\vee,\Rightarrow : [\BOOL \times
\BOOL] \arrtype \BOOL$,\ $\neg\!\!: [\BOOL] \arrtype \BOOL$,
and, for all theory sorts $\asort$, symbols $=_\asort, \neq_\asort :
[\asort \times \asort] \arrtype \BOOL$, and an evaluation function
$\J$ that interprets these symbols as expected.  We omit the
sort subscripts from $=$ and $\neq$ when
clear from context.

\begin{definition}
The standard integer signature $\Sigmaint$ is $\Sigmacore \cup \{ +,
-,*,\symb{exp},\symb{div},\linebreak
\symb{mod} : [\Int \times \Int] \arrtype
\Int; \leq, < : [\Int \times \Int] \arrtype \BOOL \} \cup \{\symb{n}
: \Int \mid n \in \Z\}$
with values $\strue,\ \sfalse$ and $\symb{n}$ for all $n \in \Z$.
Thus, we use $\symb{n}$ (in $\symb{sans}\text{-}\symb{serif}$ font)
as the function symbol for $n \in \Z$ (in $\mathit{math}$ font).
We define $\J$ in the natural way, except: since all
$\J_\afun$ must be total functions, we set
$\J_{\symb{div}}(n,0) = \J_{\symb{mod}}(n,0) = \J_{\symb{exp}}(n,k) = 0$ for all
$n$ and all $k < 0$.  Of course, when constructing LCTRSs, we normally
add explicit error checks to prevent such calls.
\end{definition}

\begin{example}
\label{exa:factsignature}
Let $\setsorts = \{ \Int,\BOOL \}$, and
$\Sigma = \Sigmaterms \cup \Sigmaint$, where
\[
\Sigmaterms = \{\ \fact : [\Int] \arrtype \Int\ \} \cup \{\ \symb{n} : 
  \Int \mid n \in \Z\ \}
\]
Then both $\Int$ and $\BOOL$ are theory sorts.
We also define set and function
interpretations, i.e., $\T_\Int = \Z$,\ $\T_\BOOL = \Bool$, and $\J$
is defined as above. With $=$ for $=_\Int$ and infix notation,
examples of logical terms are $\nul = \nul+-\one$ and
$\avar+\symb{3} \geq \bvar + -\symb{42}$.  Both are
constraints.  $\symb{5}+\symb{9}$ is also a (ground) logical term, but
not a constraint.
Expected starting terms are, e.g.,
$\fact(\symb{42})$ or $\fact(\fact(\symb{-4}))$:
ground terms fully built using symbols in $\Sigmaterms$.
\end{example}

\myparagraph{Rules and rewriting}
We adapt the standard notions of rewriting (see, e.g.,
\cite{baa:nip:98}) by including constraints
and adding rules to perform calculations.

\begin{definition}
A \emph{rule} is a triple $\ell \arrz r\ \constraint{\varphi}$
with $\ell$ and $r$ terms of the same sort and $\varphi$
a constraint.  Here, $\ell$ has the form $f(\ell_1,\dots,
\ell_n)$ and contains at least one symbol in $\Sigmaterms \setminus
  \Sigmatheory$ (so $\ell$ is not a logical term).
If $\varphi = \symb{true}$ with $\J(\symb{true}) = \top$,
we may write
$\ell \arrz r$.
We define $\LVar(\ell \arrz r\ \constraint{\varphi})$ as $\FV(\varphi)
\cup (\FV(r) \setminus \FV(\ell))$.
A substitution $\gamma$ \emph{respects} $\ell \arrz r\ 
\constraint{\varphi}$\linebreak if
$\gamma(\avar) \in\Values$
for all $\avar \in
\LVar(\ell \arrz r\ \constraint{\varphi})$, and 
$\interpret{\varphi\gamma} = \top$.
The rule is \emph{left-linear} if $\ell$ is linear, i.e.,
all variables occur at most once in $\ell$, and \emph{irregular}
if $\FV(\varphi) \setminus \FV(\ell) \neq \emptyset$.
\end{definition}

Note that it is allowed to have $\FV(r) \not \subseteq \FV(\ell)$, but
fresh variables in the right-hand side may only be instantiated
with \emph{values}.  This is done to model user input or random
choice.
Otherwise, variables outside the constraint may be instantiated by
any term; we do not impose strategies like innermost or call-by-value 
reduction.

\begin{definition}
We assume given a set of rules $\Rules$ and let
$\Rulescalc$ be the set
$\{ \afun(\avar_1,\ldots,\avar_n) \arrz \bvar\ 
\constraint{\bvar = \afun(\seq{\avar})} \mid \afun : [\asort_1
\times \cdots \times \asort_n] \arrtype \bsort \in \Sigmalogic
\setminus \Values \}$
(writing $\seq{\avar}$ for $\avar_1,\ldots,\avar_n$).
The \emph{rewrite relation} $\arrz_{\Rules}$ is a binary relation
on terms, defined by:
\[
\begin{array}{rcll}
C[\ell\gamma] & \arr{\Rules} & C[r\gamma]\ &
\text{if}\ 
\ell \arrz r\ \constraint{\varphi} \in \Rules \cup \Rulescalc\ 
\text{and}\ 
\gamma\ \text{respects}\ \ell \arrz r\ \constraint{\varphi} \\
\end{array}
\]
Here, $C$ is a context with exactly one hole.
We say that the reduction occurs at position $p$ if $C =
\subreplace{C}{\Box}{p}$.  Let $\aterm \leftrightarrow_\Rules
\bterm$ if $\aterm \arr{\Rules} \bterm$ or $\bterm \arr{\Rules}
\aterm$.
A reduction step with $\Rulescalc$ is called a \emph{calculation}.
A term is in \emph{normal form} if it cannot be reduced with
$\arr{\Rules}$.
We say that $t$ is a \emph{normal form of $s$} if $s \arrr{\Rules} t$ and
$t$ is a normal form.
The relation $\arr{\Rules}$ is \emph{confluent} if whenever $s \arrr{\Rules}
t$ and $s \arrr{\Rules} t'$, there exists also some $u$ with
$t \arrr{\Rules} u$ and $t' \arrr{\Rules} u$.
\end{definition}

We usually call the elements of $\Rulescalc$ rules---or
\emph{calculation rules}--even though their left-hand side is a
logical term.
Note that if $\arr{\Rules}$ is confluent, every term has at most
one normal form (intuitively, then $\Rules$ is deterministic
with respect to big-step semantics).

\begin{definition}
For $\afun(\ell_1,\ldots,\ell_n) \arrz r\ \constraint{\varphi}
\in \Rules$ we call $\afun$ a \emph{defined symbol}; non-defined
elements of $\Sigmaterms$ and all values are \emph{constructors}.
Let $\Defineds$ be the set of all defined symbols
and $\Constructors$ the set of constructors.
A term in $\Terms(\Constructors,\setvars)$ is a
\emph{constructor term}.
\end{definition}

Now we may define a \emph{logically constrained term rewriting
system} (LCTRS) as the abstract rewriting system $(\Terms(\Sigma,
\setvars),\arr{\Rules})$.  An LCTRS is usually given by
supplying $\Sigma$,\ $\Rules$, and an informal description of
$\T$ and $\J$ if these are not clear from context.

\begin{example}\label{exa:factlctrs}
To implement an LCTRS calculating the \emph{factorial} function, we
use the signature $\Sigma$ from \exshort\ref{exa:factsignature}
and the following rules:
\[
\Rules_{\fact}
 = \{\ 
\fact(x) \to \one\ \constraint{x \leq \nul}\ \ ,\ \ 
\fact(x) \to x * \fact(x-\one)\ \constraint{\neg (x \leq \nul)}
\ \}
\]
Using calculation steps, a term $\symb{3}-\symb{1}$ reduces to
$\symb{2}$ in one step (using the calculation rule
$\avar-\bvar\arrz\cvar\ \constraint{\cvar = \avar-\bvar}$), and
$\symb{3} * (\symb{2} * (\symb{1} * \symb{1}))$ reduces to
$\symb{6}$ in three steps.  Using also the rules in $\Rules_{\fact}$,
$\symb{fact}(\symb{3})$ reduces in ten steps to $\symb{6}$.
\end{example}

\begin{example}\label{exa:arraysumlctrs}
To implement an LCTRS calculating the sum of elements in an array,
let $\T_\BOOL = \Bool,\ \T_\Int = \Z,\ \T_\IArr = \Z^*$, so
$\IArr$ is mapped to finite-length integer sequences.
Let $\Sigmalogic = \Sigmaint \cup \{ \symb{size} : [\IArr] \arrtype
\Int,\ \symb{select} : [\IArr \times \Int] \arrtype \Int \}\ \cup\ 
\{ \symb{a} \mid a \in \Z^* \}$.
(We do \emph{not} encode arrays as lists: every ``array''---integer
sequence---$a$ corresponds to a unique symbol $\symb{a}$.)
The interpretation function $\J$ behaves on $\Sigmaint$ as usual,
maps the values $\symb{a}$ to the corresponding integer sequence, and
has:
\[
\begin{array}{rcll@{\!\:}rcll}
\J_{\symb{size}}(a) & = & k & \text{if}\ a = \langle n_0,\ldots,n_{k-1} \rangle\quad &
\J_{\symb{select}}(a,i) & = & n_i & \text{if}\ a = \langle n_0,\ldots, n_{k-1} \rangle\ \text{and}\ 0 \leq i < k \\
& & & & 
  & & 0 & \text{otherwise} \\
\end{array}
\]
In addition, let
  $\Sigmaterms = \{\ \summ, \symb{sum0} : [\IArr] \arrtype \Int
  \ \} \cup
    \{\ \symb{n} : \Int \mid n \in \Z\ \} \cup \{\ \symb{a} \mid a
    \in \Z^*\ \}$
  and let $\Rules$ consist of
  \[
  \begin{array}{rclrcll}
  \summ(\avar) & \arrz & \symb{sum0}(\avar,\symb{size}(\avar)-\one)
  \quad &
  \symb{sum0}(\avar,k) & \arrz & \symb{select}(\avar,k) +
    \symb{sum0}(\avar,k-\one) & \constraint{k \geq \nul} \\
  & & & 
  \symb{sum0}(\avar,k) & \arrz & \nul & \constraint{k < \nul} \\
  \end{array}
  \]
Note that this implementation differs from the ones in
\exshort\ref{exa:motivating}, because there we analyzed encodings
of imperative programs;
on C level there is no functionality for the programmer to explicitly
query the size of an array. Here, we avoided boundary checks.
\end{example}

Values are new in LCTRSs
compared to older styles of constrained rewriting.
These representatives of the underlying theory are always
\emph{constants}
(constructor symbols which do not take arguments),
even if they represent complex structures, as seen in
\exshort\ref{exa:arraysumlctrs}.
Note that variables in a rule's constraint \emph{must} be
instantiated by values; for instance in \exshort\ref{exa:factlctrs},
a term $\fact(\symb{1}+\symb{2})$ must be reduced by a
calculation first.  We also do
not match modulo theories, e.g., we do not equate
$\nul + (\avar + \bvar)$ with $\bvar + \avar$ for matching.

\myparagraph{Differences to~\cite{kop:nis:13}}
In the original definition of LCTRSs,
variables in $\setvars$ are unsorted, and a separate
\emph{variable environment} is used for typing.
Also, $\arr{\Rules}$ is there defined as the union of two
relations $\arrzrule$ and $\arrzcalc$ rather than including
$\Rulescalc$.
These changes give equivalent results, but the current definitions
cause less bookkeeping.
A larger difference is the restriction on rules:
in~\cite{kop:nis:13} left-hand sides must have a root symbol in
$\Sigmaterms \setminus \Sigmalogic$.  We follow~\citeN{kop:13}
and \citeN{kop:nis:14} in weakening this (only asking that
they are not logical terms).

\subsection{Quantification}
\label{subsec:quantify}

The definition of LCTRSs does not permit
constraints with quantifiers (constraints are terms, and
first-order rewriting does not allow quantifiers in terms).  In,
for instance, an LCTRS over integers and arrays, which has
$\symb{addtoend} : [\Int \times \IArr] \arrtype \IArr \in
\Sigmalogic$ and $\symb{extend} : [\IArr \times \Int] \arrtype
\IArr \in \Sigmaterms$, we cannot specify a rule like:
\[
\symb{extend}(\mathit{arr},\avar) \arrz \symb{addtoend}(\avar,
\mathit{arr})\ \constraint{\quant{\forall}{\bvar \in
  \intint{\nul}{\symb{size}(\mathit{arr})-\one}}{\avar \neq
     \symb{select}(\mathit{arr},\bvar)}}
\]

However, one of the key features of LCTRSs is that theory symbols,
including predicates, are not confined to a fixed list.  Therefore,
we \emph{can} add a new symbol to $\Sigmatheory$ (and $\J$).  For
the $\symb{extend}$ rule, we might
introduce a symbol $\symb{notin} : [\Int \times \IArr]
\arrtype \BOOL$ with $\J_{\symb{notin}}(u,\langle a_0,\ldots,a_{n-1}
\rangle) = \top$ iff for all $i$: $u \neq a_i$, and replace the
constraint by $\symb{notin}(\avar,\mathit{arr})$.  This generates
exactly the same reduction relation as the original rule.

Thus, we can permit quantifiers in the constraints of rules and also
on right-hand sides of rules, as an intuitive notation for fresh
predicates. However, an \emph{unbounded} quantification would
likely not be useful, as it would give an undecidable relation
$\arr{\Rules}$.

\medskip
\commentbox{One might argue that adding symbols like this
is problematic in practice:
no SMT solver will support new symbols like $\symb{notin}$.
However, for the \emph{technique} this makes no difference.  In
an implementation, we might allow quantifiers as syntactic sugar
(and pass the same sugar to the SMT solver), or add a layer on top of the
SMT solver which translates the new symbol(s), replacing for instance
\texttt{(notin \emph{u} \emph{a})}
by
\texttt{(forall ((x Int)) (distinct \emph{u} (select \emph{a} x)))}.}

\subsection{Rewriting Constrained Terms}
\label{subsec:constrained-term}

In LCTRSs, the objects of study are \emph{terms}, with
$\arr{\Rules}$ defining the relation between them.  However, for
analysis it is often useful to consider \emph{constrained terms}:

\begin{definition}
\label{def:equivalent-constrained-terms}
A constrained term is a pair $\coterm{\aterm}{\varphi}$ of a
term $\aterm$ and a constraint $\varphi$.
We say $\coterm{\aterm}{\varphi}$ and $\coterm{\bterm}{\psi}$
are \emph{equivalent}, notation $\coterm{\aterm}{\varphi} \equalgen
\coterm{\bterm}{\psi}$, if for all substitutions $\gamma$ which
respect $\varphi$ there is a substitution $\delta$ which respects
$\psi$ such that $\aterm \gamma = \bterm\delta$, and vice versa.
\end{definition}

Intuitively, a constrained term $\coterm{\aterm}{\varphi}$
represents all terms $\aterm\gamma$ where $\gamma$ respects
$\varphi$, and can be used to reason about such terms.
Equivalent constrained terms represent the same set of terms;
for example $\coterm{\symb{f}(\nul)}{\strue} \equalgen
\coterm{\symb{f}(\avar)}{\avar = \nul}$, and $\coterm{\symb{g}(
\avar,\bvar)}{\avar > \bvar} \equalgen \coterm{\symb{g}(\cvar
,\dvar)}{\dvar \leq \cvar - \one}$.
Note that $\coterm{\aterm}{\varphi} \equalgen \coterm{\aterm}{\psi}$
if and only if $\quant{\forall}{\seq{\avar}}{\bquant{\exists}{\seq{\bvar}}{\varphi}
\leftrightarrow \bquant{\exists}{\seq{\cvar}}{\psi}}$ holds, where
$\FV(\aterm) = \{\seq{\avar}\}$,\ $\FV(\varphi)\setminus\FV(\aterm)
= \{\seq{\bvar}\}$ and $\FV(\psi)\setminus\FV(\aterm) =
\{\seq{\cvar}\}$.

\begin{definition}
For a rule $\rho := \ell \arrz r\ \constraint{\psi}\in \Rules \cup
\Rulescalc$ and position $q$,
we let $\coterm{\aterm}{\varphi} \arr{\rho,q}
\coterm{\bterm}{\varphi}$ if
there exists a substitution $\gamma$ such that
$\subpos{\aterm}{q} = \ell\gamma$,
$\bterm = \subreplace{\aterm}{r\gamma}{q}$,
$\gamma(\avar)$ is a value or variable in $\FV(\varphi)$ for
all $\avar \in \LVar(\ell \arrz r\ \constraint{\psi})$, and
$\varphi \Rightarrow (\psi\gamma)$ is valid.
Let $\coterm{\aterm}{\varphi} \arrzbase \coterm{\bterm}{\varphi}$ if
$\coterm{\aterm}{\varphi} \arr{\rho,q} \coterm{\bterm}{\varphi}$ for
some $\rho,q$.
The relation $\arr{\Rules}$ on constrained terms is defined as
$\equalgen \cdot \arrzbase \cdot \equalgen$. We say that
$\coterm{\aterm}{\varphi} \arr{\Rules} \coterm{\bterm}{\psi}$
\emph{at position $q$ by rule $\rho$}
if $\coterm{\aterm}{\varphi}
\equalgen \cdot \arr{\rho,q} \cdot \equalgen \coterm{\bterm}{\psi}$.
\end{definition}

\begin{example}\label{ex:factconstrained}
In the
LCTRS from \exshort\ref{exa:factlctrs}, we have
$\coterm{\fact(\avar)}{\avar > \symb{3}} \arr{\Rules}
\coterm{\avar * \fact(\avar-\one)}{\avar > \symb{3}}$.
Now we can use a calculation
rule $\avar - \bvar \arrz \cvar\ \constraint{\cvar = \avar - \bvar}$,
with a non-empty $\equalgen$-step, as follows:
$\coterm{\avar * \fact(\avar-\one)}{\avar > \symb{3}} \equalgen
\coterm{\avar * \fact(\avar-\one)}{\avar > \symb{3} \wedge \cvar =
\avar - \one} \arrzbase \coterm{\avar * \fact(\cvar)}{\avar >
\symb{3} \wedge \cvar = \avar - \one}$.  The $\equalgen$-relation
holds because indeed $\quant{\forall}{\avar}{\avar >
\symb{3} \leftrightarrow \bquant{\exists}{\cvar}{\avar > \symb{3} \wedge \cvar = \avar -
\one}}$.
\end{example}

\begin{example}\label{ex:constrainedirregular}
The $\equalgen$-relation also allows us to reformulate the constraint
after a reduction.
For example, with the rule
$\symb{f}(\avar) \arrz \symb{g}(\bvar)\ \constraint{\bvar > \avar}$, we
have: $\coterm{\symb{f}(\avar)}{\avar > \symb{3}} \equalgen
\coterm{\symb{f}(\avar)}{\avar > \symb{3} \wedge \bvar > \avar}
\arrzbase \coterm{\symb{g}(\bvar)}{\avar > \symb{3} \wedge \bvar >
\avar} \equalgen \coterm{\symb{g}(\bvar)}{\bvar > \symb{4}}$.
We do \emph{not} have
that $\symb{f}(\avar)\ \constraint{\strue} \arr{\Rules} \symb{g}(\avar+\one)\ 
\constraint{\strue}$, as $\avar+\one$ cannot be instantiated
to a value.
\end{example}

\begin{example}\label{ex:constrainedredchoice}
A constrained term does not always need to be reduced in the most
general way.  With the rule $\symb{f}(\avar) \arrz \symb{g}(\bvar)\ 
\constraint{\bvar > \avar}$, we have $\symb{f}(\nul)\ \constraint{\strue}
\equalgen \symb{f}(\nul)\ \constraint{\bvar > \nul} \arrzbase
\symb{g}(\bvar)\ \constraint{\bvar > \nul}$, but we also have
$\symb{f}(\nul)\ \constraint{\strue} \equalgen
\symb{f}(\nul)\ \constraint{\one > \nul} \arrzbase
\symb{g}(\one)\ \constraint{\one > \nul} \equalgen
\symb{g}(\one)\ \constraint{\strue}$.
\end{example}

\noindent
As intended, constrained reductions give information about usual
reductions:

\begin{theorem}
\label{thm:constrainedterm}
If $\coterm{\aterm}{\varphi} \arr{\Rules} \coterm{\bterm}{\psi}$,
then for all substitutions $\gamma$ which respect $\varphi$ there
exists $\delta$ which respects $\psi$ such that $\aterm\gamma
\arr{\Rules} \bterm \delta$.
Both steps use the same rule and position.
\end{theorem}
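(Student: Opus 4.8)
The plan is to unwind the definition of $\arr{\Rules}$ on constrained terms. By hypothesis the step has the form $\coterm{\aterm}{\varphi} \equalgen \cdot \arr{\rho,q} \cdot \equalgen \coterm{\bterm}{\psi}$, so there are constrained terms $\coterm{\aterm'}{\varphi'}$ and $\coterm{\bterm'}{\varphi'}$ with $\coterm{\aterm}{\varphi} \equalgen \coterm{\aterm'}{\varphi'}$, then $\coterm{\aterm'}{\varphi'} \arr{\rho,q} \coterm{\bterm'}{\varphi'}$, and finally $\coterm{\bterm'}{\varphi'} \equalgen \coterm{\bterm}{\psi}$. Write $\rho = \ell \arrz r\ \constraint{\chi}$ and let $\sigma$ be the substitution witnessing the middle step, so $\subpos{\aterm'}{q} = \ell\sigma$, $\bterm' = \subreplace{\aterm'}{r\sigma}{q}$, each $\sigma(\avar)$ with $\avar \in \LVar(\rho)$ is a value or a variable in $\FV(\varphi')$, and $\varphi' \Rightarrow \chi\sigma$ is valid. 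Now I fix an arbitrary $\gamma$ respecting $\varphi$; the first $\equalgen$ supplies a substitution $\gamma'$ respecting $\varphi'$ with $\aterm\gamma = \aterm'\gamma'$, and the goal becomes producing $\delta$ respecting $\psi$ with $\aterm\gamma \arr{\Rules} \bterm\delta$ at position $q$ by $\rho$.

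The core of the argument is to show that $\aterm'\gamma' \arr{\Rules} \bterm'\gamma'$ is an \emph{ordinary} rewrite step at position $q$ by $\rho$, with matcher $\tau := \sigma\gamma'$ (apply $\sigma$, then $\gamma'$). Since substitutions preserve positions, $\subpos{(\aterm'\gamma')}{q} = (\ell\sigma)\gamma' = \ell\tau$ and $\bterm'\gamma' = \subreplace{(\aterm'\gamma')}{r\tau}{q}$, so it suffices to check that $\tau$ respects $\rho$. For $\avar \in \LVar(\rho)$: if $\sigma(\avar)$ is a value, then $\tau(\avar) = \sigma(\avar)$ is a value; if $\sigma(\avar)$ is a variable in $\FV(\varphi')$, then $\tau(\avar) = \gamma'(\sigma(\avar))$ is a value because $\gamma'$ respects $\varphi'$. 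For the constraint, I use $\FV(\chi) \subseteq \LVar(\rho)$, so every free variable of $\chi\sigma$ lies in $\FV(\varphi')$; hence $\FV(\varphi' \Rightarrow \chi\sigma) = \FV(\varphi')$, which $\gamma'$ maps to values. Validity of $\varphi' \Rightarrow \chi\sigma$ then gives $\interpret{(\varphi' \Rightarrow \chi\sigma)\gamma'} = \top$, and since $\interpret{\varphi'\gamma'} = \top$ this forces $\interpret{\chi\tau} = \interpret{(\chi\sigma)\gamma'} = \top$. Thus $\tau$ respects $\rho$, and writing $C := \subreplace{(\aterm'\gamma')}{\Box}{q}$ we get $\aterm'\gamma' = C[\ell\tau] \arr{\Rules} C[r\tau] = \bterm'\gamma'$, the step occurring at position $q$ by $\rho$.

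To finish, I chain this with $\aterm\gamma = \aterm'\gamma'$ on the left and the second $\equalgen$ on the right: since $\gamma'$ respects $\varphi'$, the equivalence $\coterm{\bterm'}{\varphi'} \equalgen \coterm{\bterm}{\psi}$ yields a substitution $\delta$ respecting $\psi$ with $\bterm'\gamma' = \bterm\delta$, so $\aterm\gamma \arr{\Rules} \bterm\delta$ at position $q$ by $\rho$, as claimed. The argument is uniform in whether $\rho \in \Rules$ or $\rho \in \Rulescalc$, since calculation rules are matched exactly like ordinary rules. I do not anticipate a genuine obstacle; the only delicate points are bookkeeping: keeping straight which constraint each instance of ``respects'' refers to, and verifying $\FV(\chi\sigma) \subseteq \FV(\varphi')$, which is precisely what allows validity of $\varphi' \Rightarrow \chi\sigma$ to be applied against the value assignment $\gamma'$.
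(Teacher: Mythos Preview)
Your proof is correct and follows essentially the same approach as the paper: both factor the step as $\equalgen \cdot \arrzbase \cdot \equalgen$, show that the middle $\arrzbase$ step instantiates to an ordinary $\arr{\Rules}$ step under any substitution respecting the constraint, and then chain through the two $\equalgen$ links. Your argument is in fact more explicit than the paper's on the key verification that the composed substitution $\tau = \sigma\gamma'$ respects $\rho$ (the paper's observation (**) asserts $\aterm\gamma \arr{\Rules} \bterm\gamma$ without spelling out why $\eta$ maps $\LVar(\rho)$ to values and validates $\chi$), so your treatment of $\FV(\chi\sigma) \subseteq \FV(\varphi')$ and the ensuing use of validity is a welcome clarification rather than a deviation.
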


\begin{proof}
We first observe (**):
\emph{
  If $\coterm{\cterm}{\xi} \arrzbase
  \coterm{\eterm}{\xi}$, then
  for any substitution $\gamma$
  which respects $\xi$ also $\cterm\gamma \arr{\Rules}
  \eterm\gamma$.
}
Proof: if $\coterm{\cterm}{\xi} \arrzbase
\coterm{\eterm}{\xi}$, then
there are $p, \ell \arrz r\ \constraint{c}$ and $\delta$
such that
$\subpos{\cterm}{p} = \ell\delta$, $\eterm =
\subreplace{\cterm}{r\delta}{p}$, $\delta(x) \in \FV(\xi)
\cup \Values$ for all $x \in \LVar(\ell \arrz r~[c])$ and $\xi
\Rightarrow (c\delta)$ is valid.
With
$\eta = \gamma \circ \delta$,
we have
$\subpos{(\cterm\gamma)}{p} =
\subpos{\cterm}{p}\gamma = \ell\delta\gamma = \ell\eta$ and
$\eterm\gamma = \subreplace{\cterm}{r\delta}{p}\gamma =
\subreplace{(\cterm\gamma)}{r\delta\gamma}{p} =
\subreplace{(\cterm\gamma)}{r\eta}{p}$.
We also have $\eta(x) = \delta(x)\gamma \in \Values$ for $x
\in \LVar(\ell \arrz r~[c])$ because $\gamma$ respects $\xi$ and,
since $\interpret{\xi\gamma} = \top$ and $\xi \Rightarrow (c\delta)$
is valid, also $\interpret{(c\delta)\gamma} = \interpret{c\eta} =
\top$.  So indeed $\cterm\gamma \arr{\Rules} \eterm\gamma$.

Now, suppose $\coterm{\aterm}{\varphi}
\arr{\Rules} \coterm{\bterm}{\varphi}$, so $\coterm{\aterm}{\varphi}
\equalgen \coterm{\aterm'}{\xi} \arrzbase
\coterm{\bterm'}{\xi} \equalgen \coterm{\bterm}{\psi}$, and let $\gamma$
respect $\varphi$.  By definition of $\equalgen$, there is some
substitution $\eta$ which respects $\xi$ such that $\aterm\gamma =
\aterm'\eta$.  By (**)
$\aterm'\eta \arr{\Rules} \bterm'\eta$.  Again by definition
of $\equalgen$, we find $\delta$ which respects $\psi$ such that
$\bterm'\eta = \bterm\delta$.
\qed
\end{proof}

\begin{theorem}
\label{thm:constrainedterm:reverse}
If $\coterm{\aterm}{\varphi} \arr{\Rules} \coterm{\bterm}{\psi}$,
then for all substitutions $\delta$ which respect $\psi$ there
exists $\gamma$ which respects $\varphi$ such that $\aterm\gamma
\arr{\Rules} \bterm \delta$.
Both steps use the same rule and position.
\end{theorem}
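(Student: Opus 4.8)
The plan is to mirror the proof of Theorem~\ref{thm:constrainedterm}, exploiting the fact that the auxiliary observation (**) established there is symmetric in the two constraints, and that $\equalgen$ is an equivalence relation. First I would reuse (**): if $\coterm{\aterm}{\varphi} \arrzbase \coterm{\bterm}{\psi}$, then $\varphi = \psi$, and for \emph{any} substitution $\eta$ respecting $\varphi$ (equivalently $\psi$) we have $\aterm\eta \arr{\Rules} \bterm\eta$, by the same rule and at the same position as the $\arrzbase$-step. Since $\varphi = \psi$, it is irrelevant whether $\eta$ is presented to us as a substitution respecting the source constraint or the target constraint; this is precisely what lets the ``reverse'' statement go through with essentially the same argument as the forward one.

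Next, given a constrained step $\coterm{\aterm}{\varphi} \arr{\Rules} \coterm{\bterm}{\psi}$, I unfold it as $\coterm{\aterm}{\varphi} \equalgen \coterm{\aterm'}{\varphi'} \arrzbase \coterm{\bterm'}{\psi'} \equalgen \coterm{\bterm}{\psi}$, where $\varphi' = \psi'$ by (**). Let $\delta$ respect $\psi$. Applying the definition of $\equalgen$ to the final equivalence $\coterm{\bterm'}{\psi'} \equalgen \coterm{\bterm}{\psi}$ (in the ``vice versa'' direction), I obtain a substitution $\eta$ respecting $\psi' = \varphi'$ with $\bterm'\eta = \bterm\delta$. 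By (**), $\aterm'\eta \arr{\Rules} \bterm'\eta$, using rule $\rho$ at position $q$. Finally, applying $\equalgen$ to the first equivalence $\coterm{\aterm}{\varphi} \equalgen \coterm{\aterm'}{\varphi'}$ (again in the ``vice versa'' direction), I obtain $\gamma$ respecting $\varphi$ with $\aterm\gamma = \aterm'\eta$. Chaining the syntactic equalities with the ground step yields $\aterm\gamma = \aterm'\eta \arr{\Rules} \bterm'\eta = \bterm\delta$ by rule $\rho$ at position $q$, which is exactly the claim.

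Honestly, this is almost verbatim the proof of Theorem~\ref{thm:constrainedterm} with the roles of source and target interchanged, so I do not expect a genuine obstacle. The one point that needs care is checking that the substitution $\eta$ produced from the \emph{target}-side $\equalgen$-step actually respects $\varphi'$: this relies on the equality $\varphi' = \psi'$ coming out of (**), without which $\eta$ could not be fed back into (**). I would also double-check the rule-and-position clause: a constrained step ``using $\rho$ at $q$'' means, by definition, that the chosen decomposition has its $\arrzbase$-component applying $\rho$ at $q$, and (**) transports precisely that rule and position to the ground step, so nothing extra is required there.
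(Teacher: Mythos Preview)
Your proof is correct and follows essentially the same approach as the paper's own proof, which simply states that the argument is ``parallel to the proof of Theorem~\ref{thm:constrainedterm}'' and gives the same chain $\bterm\delta = \bterm'\eta \leftarrow_\Rules \aterm'\eta = \aterm\gamma$ obtained via the two applications of $\equalgen$ and the observation~(**). Your write-up is in fact more explicit than the paper's about why $\eta$ respects $\varphi'$ (namely because $\varphi' = \psi'$), which is exactly the point that makes the reversal work.
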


\begin{proof}
Parallel to the proof of \thmshort\ref{thm:constrainedterm}: if
$\coterm{\aterm}{\varphi} \equalgen \coterm{\aterm'}{\xi}
\arrzbase \coterm{\bterm'}{\xi} \equalgen \coterm{\bterm}{\psi}$,
then by definition of $\equalgen$ there are suitable $\eta,
\gamma$ such that $\bterm\delta = \bterm'\eta \leftarrow_\Rules
\aterm'\eta = \aterm\gamma$.
\qed
\end{proof}

\commentbox{The relation $\arr{\Rules}$
on constrained terms is \emph{not} stable: for instance, in the
system from \exshort\ref{ex:constrainedredchoice}, we can derive
$\coterm{\symb{f}(x)}{\strue} \arr{\Rules} \coterm{\symb{g}(x)}{\strue}$ even
though $\coterm{\symb{f}(\nul)}{\strue} \not\arr{\Rules}
\coterm{\symb{g}(\nul)}{\strue}$.  This is because the variables in a
constrained term $\aterm\: [\varphi]$ are fully changeable;
one can see variables in $\FV(\aterm)$ as universal and
the others as existential. This is not problematic, as
we do not instantiate constrained terms; to reason with
constrained reduction we only use Theorems~\ref{thm:constrainedterm}
and~\ref{thm:constrainedterm:reverse}.}

\section{Transforming Imperative Programs into LCTRSs}
\label{sec:transformations}

Equivalence-preserving
transformations of imperative programs into constrained rewriting
systems operating on integers have been investigated in
e.g.~\cite{fal:kap:09,fal:kap:sin:11,fur:nis:sak:kus:sak:08};
more generally, such translations from imperative to functional programs
have been investigated at least since \cite{mcc:60}.  Although
these papers use different definitions of constrained rewriting, the
proposed transformations can be adapted to produce LCTRSs that operate
on integers, i.e., use $\Sigmalogic$ as in \exshort\ref{exa:factlctrs}.
What is more, we can extend the ideas to also handle more advanced
programming structures, such as arrays and exceptions.

\smallskip
In this section, we will discuss a number of ideas towards a
translation from C to LCTRS.  A more detailed and formal treatment of
the limitation to integers and one-dimensional integer arrays is
available online along with an implementation, at:
\begin{center}
\url{http://www.trs.css.i.nagoya-u.ac.jp/c2lctrs/}
\end{center}
Given the extensiveness of the C specification, we will not
attempt to prove that the result of our transformation
corresponds to the origin.  Instead, we shall
rely on an appeal to intuition.  An advantage is that the same ideas
apply to other programming languages; we should be
able to use similar translations for, e.g., Python or Java.

\subsection{Transforming Simple Integer Functions}
\label{subsec:translate:basic}

The base form of the transformation---limited to integer functions
with no global variables or function calls---is very similar to the
transformations for integer TRSs
in \cite{fal:kap:09,fal:kap:sin:11,fur:nis:sak:kus:sak:08}.
Each function is transformed separately.  We introduce a function
symbol for every statement (including declarations), which
operates on the variables in scope.  The transition from one statement
to another is encoded as a rule, with assignments reflected by
argument updates in the right-hand side, and conditions by the
constraint.  Return statements are encoded by reducing to an
expression $\ret{f}(e)$, where $\ret{f} : [\Int] \arrtype \result{f}$
is a constructor.

\begin{example}
\label{exa:fact1 cnt 1}
Consider the following C function and its translation:

\begin{minipage}[h]{0.53\textwidth}
\begin{verbatim}
int fact(int x) {
  int z = 1;
  for (int i = 1; i <= x; i++) z *= i;
  return z;
}
\end{verbatim}
\end{minipage}
\begin{minipage}[h]{0.45\textwidth}
\vspace{-2pt}
\[
\begin{array}{rl@{~~}l}
\symb{fact}(x) & \to \symb{u}_1(x,\one) \\
\symb{u}_1(x, z) & \to \symb{u}_2(x, z, \one) \\
\symb{u}_2(x, z, i) & \to \symb{u}_3(x, z, i) & \constraint{i \leq x} \\
\symb{u}_2(x, z, i) & \to \symb{u}_5(x, z) &
  \constraint{\neg (i \leq x)} \\
\symb{u}_3(x, z, i) & \to \symb{u}_4(x, z * i, i) \\
\symb{u}_4(x, z, i) & \to \symb{u}_2(x, z, i + \one) \\
\symb{u}_5(x, z) & \to \ret{\symb{fact}}(z) \\
\end{array}
\]
\vspace{-2pt}
\end{minipage}
For $\Sigmalogic$ we assume the standard integer signature;
$\Sigmaterms$ contains $\symb{fact}$, all $\symb{u}_i$ and the
constructor $\ret{f}$, all of which have output sort $\result{f}$ and
argument sorts $\Int$.
\end{example}

A realistic translation of C code must also handle the
absence of a boolean data type,
operator precedence, and expressions with side effects (e.g.,
a loop condition \verb+--+$x$).  All this is
easily doable\footnote{This is discussed in the formal treatment at
\url{http://www.trs.css.i.nagoya-u.ac.jp/c2lctrs/formal.pdf}} (and
included in our implementation), but for the sake of brevity we
will not go into detail here.

\medskip
Finally, the generated system is optimized to make it more amenable to
analysis:\footnote{Variations of such preprocessing steps preserving the
properties of interest to simplify
the output of an automatic translation are
fairly standard in program analysis, see
e.g.\ \cite{alb:are:gen:pue:zan:08,alp:esc:luc:07,bey:cim:gri:ker:seb:09,fal:kap:sin:11,gie:asc:bro:emm:fro:fuh:hen:ott:plu:sch:str:swi:thi:16,spo:lu:mes:09}.}

  \begin{itemize}
  \item rules are combined where possible, e.g., replacing a pair of
    rules $\ell \arrz \uu(r_1,\dots,r_n)\ \constraint{\varphi}$ and
    $\uu(x_1,\dots,x_n) \arrz s\ \constraint{\strue}$ by $\ell \arrz
    s[x_1:=r_1,\dots,x_n:=r_n]$ if $\uu$ is not used elsewhere;
  \item unused arguments of function symbols are removed, such as the
    second (but not the first!) argument of $\uu$ in an LCTRS with
    rules $\uu(x,y,z) \arrz \uu(x-\one,y+\one,z*\symb{2})\ 
    \constraint{x > \nul}$ and $\uu(x,y,z) \arrz \return(z)\ 
    \constraint{\neg (x > \nul)}$;
  \item constraints are simplified, for instance replacing
    $\neg (x > \nul)$ by $x \leq \nul$ in the rules above.
  \end{itemize}

We will use these optimizations also for the extended transformations of
\secsshort\ref{subsec:transition:floats}--\ref{subsec:transition:array}.

\commentbox{When
\emph{time complexity}---defined as, e.g.,
the number of certain calculation
steps---is considered, the
argument removal step is dangerous, as it may remove calculations.
In such cases we would use a different simplification method.}

\begin{example}
\label{exa:fact1 cnt 2}
Optimizing the LCTRS from \exshort\ref{exa:fact1 cnt 1}, we obtain:
\[
\begin{array}{rl@{~~}rl@{~~}l}
\symb{fact}(x) & \to \symb{u}_2(x,\one,\one) \hspace{40pt} &
\symb{u}_2(x, z, i) & \to \symb{u}_2(x, z * i, i + \one) &
  \constraint{i \leq x} \\
&&\symb{u}_2(x, z, i) & \to \ret{\symb{fact}}(z) &
  \constraint{i > x}
\end{array}
\]
\end{example}

\myparagraph{Differences to older work} In contrast to existing
transformations to integer TRSs
(e.g.~\cite{fal:kap:09,fal:kap:sin:11,fur:nis:sak:kus:sak:08}),
we do not consider \emph{basic blocks}, but simply create rules for
every statement; this gives no substan\-tial difference after
optimization.
Additionally, $\ret{f}$ is new here: in
the work by Falke et al, the return statement is omitted, as
they focus on \emph{termination}, while
in~\cite{fur:nis:sak:kus:sak:08} the final term reduces directly to
the return-value, e.g.~$\symb{u}_4(x,z) \to z + x~\constraint{x \leq
\nul}$.

\subsection{Non-Integer Data Types}\label{subsec:transition:floats}

Integers are not special: as the definition of LCTRSs permits
arbitrary theories, we can handle any data type in C.  We might
for instance interpret \texttt{double} as either real numbers or
double-precision floating point numbers; this choice is left to the
user and may vary by application.
The only requirement is that a suitable theory signature---with
corresponding SMT solver if the system is to be analyzed
automatically---is available.
The translation is straightforward, with the only difficulty that
type casts must be made explicit, and we need to use separate symbols
such as $\symb{+.}$ for \texttt{double} addition.

\begin{example}
Consider the following C function and its translation.

\vspace{3pt}
\hspace{-12pt}
\begin{minipage}[h]{0.41\textwidth}
\begin{Verbatim}[commandchars=\\\[\]]
double halfsum(double thold) {
  double ret = 0.0;
  for (int d = 2; d < 100;
                  d *= 2) {
    ret += 1.0 / d;
    if (ret > thold) return ret;
} }
\end{Verbatim}
\vspace{2pt}
\end{minipage}
\begin{minipage}[h]{0.6\textwidth}
\vspace{-5pt}
\[
\begin{array}{rcll}
\symb{halfsum}(t) & \arrz & \symb{u}_2(t,\nul.\nul,\two) \\
\symb{u}_2(t,r,d) & \arrz & \symb{u}_4(t,r +\!\!.~\symb{1.0}/\symb{todouble}(d),d)\!\! &
  \constraint{d < \symb{100}} \\
\symb{u}_2(t,r,d) & \arrz & \ret{\symb{halfsum}}(rnd) &
  \constraint{d \geq \symb{100}} \\
\symb{u}_4(t,r,d) & \arrz & \ret{\symb{halfsum}}(r) & \constraint{r >\!\!.~t} \\
\symb{u}_4(t,r,d) & \arrz & \symb{u}_2(t,r,d * \two) &
  \constraint{r \leq\!\!.~t} \\
\\
\\
\end{array}
\]
\end{minipage}
This demonstrates both an explicit cast and one possible way to handle an
undefined return value (by a fresh variable, which may be
instantiated with a random value).
\end{example}

\subsection{Error Handling}
\label{subsec:transition:error}

The transformation of \secshort\ref{subsec:translate:basic}
does not fully reflect the original C program: as computers have
limited memory, integers are internally represented as
\emph{bitvectors}.
To address this, we could change the theory.  Rather than using $\mathbb{Z}$,
we let $\Values_\Int = \{\mathtt{MININT},\ldots,\mathtt{MAXINT}\}$ and
make $\J_+$, $\J_-$,
and $\J_*$ wrap around (e.g., $\J_-(\mathtt{MININT},\one) =
\mathtt{MAXINT}$).  The resulting LCTRS
has the same rules, but acts more closely to
the real program behavior.

However, integer overflow is often indicative of an \emph{error}.
Indeed, in C an overflow for the type \texttt{int} leads to undefined behavior
(which also surfaces in optimizing compilers such as
\texttt{gcc} or \texttt{clang}).
In order to model this (or other instances of undefined behavior in C,
such as a missing \texttt{return} statement), we will reduce to a
special $\symb{error}$ state.

Thus, for every rule $\symb{u}_i(\avar_1,\ldots,\avar_n) \arrz r\ 
\constraint{\varphi}$: if this rule represents a transition where an
error may occur under condition $\tau$, then we split it in two:
\[\begin{array}{cc}
  \symb{u}_i(\avar_1,\ldots,\avar_n) \arrz r\ \constraint{\varphi
  \wedge \neg \tau}
\ \ \ \ \ \ \ \ &
  \symb{u}_i(\avar_1,\ldots,\avar_n) \arrz \symb{error}_f\ 
  \constraint{\varphi \wedge \tau}
\end{array}\vspace{-1pt}\]
As usual, we simplify the resulting constraint
(writing, e.g., $x < \symb{0}$ instead of $\neg(x \geq \symb{0})$).

\begin{example}
\label{exa:fact1 cnt 3}
Continuing \exshort\ref{exa:fact1 cnt 2}, we generate the following
rewrite rules:
\[
\begin{array}{rl@{~~}l}
\symb{fact}(x) & \to \symb{u}_2(x,\one,\one) \\
\symb{u}_2(x, z, i) & \to \symb{u}_2(x, z * i, i + \one) &
  \constraint{i \leq x \land z * i \leq \mathtt{MAXINT} \land z * i
  \geq \mathtt{MININT} \land i + \one \leq \mathtt{MAXINT}} \\
\symb{u}_2(x, z, i) & \to \symb{error}_{\symb{fact}} &
  \constraint{i \leq x \land (z * i > \mathtt{MAXINT} \lor z * i <
  \mathtt{MININT} \lor i + \one > \mathtt{MAXINT})} \\
\symb{u}_2(x, z, i) & \to \ret{\symb{fact}}(z) & \constraint{i > x}
\end{array}
\]
\end{example}
Note that we could easily model assertions and \texttt{throw}
statements for exceptions in the same way.
Division by zero is handled in a similar way.

We can choose whether to add error transitions before or after the
simplification step.  The distinction is important: when simplifying,
calculations which do not contribute to the final result are thrown
away.  In the case of overflow errors, it may seem reasonable to
consider the post-simplification rules, as we did in
\exshort\ref{exa:fact1 cnt 3}.  In the
case of for instance division by zero, we should add the errors to the
pre-simplification rules.

\commentbox{When transforming a function into an LCTRS, we can
\emph{choose} what errors to model.  For instance, we could ignore
overflows (effectively assuming unbounded integers), but still
test for division by zero.  We could also let $\symb{error}_f$ be a
constructor which takes an argument, i.e., $\symb{error}_f :
[\symb{Errors}] \arrtype \symb{result}_f \in \Sigmaterms$, where
$\symb{Errors}$ is a sort with constructors $\symb{IntegerOverflow}$,
$\symb{DivisionByZero}$, and so on.}

\subsection{Global Variables}\label{subsec:global}

Thus far, we have considered very \emph{local} code: a function never
calls other functions or modifies global variables.  By altering the
$\symb{return}$ constructors, we easily change the latter:
we assume
that a function symbol is given all global variables that it uses as
input, and that it
returns those global variables it alters as output, along
with its return value.  This change also allows for non-redundant
\texttt{void} functions.

\begin{example}\label{ex:best}
Consider the following short program and its (simplified) translation:

\medskip
\begin{minipage}[h]{0.55\textwidth}
\begin{verbatim}
int best;
int up(int x) {
  if (x > best) { best = x; return 1; }
  return 0;
}
\end{verbatim}
\end{minipage}
\begin{minipage}[h]{0.42\textwidth}
\[
\begin{array}{rcll}
\symb{up}(b,x) & \arrz & \symb{return}_{\symb{up}}(x,\one) & \constraint{x > b} \\
\symb{up}(b,x) & \arrz & \symb{return}_{\symb{up}}(b,\nul) & \constraint{x \leq b} \\
\\
\\
\end{array}
\]
\end{minipage}
\end{example}

\subsection{Function Calls}
\label{subsec:calls}

Next, let us consider \emph{function calls}.  A difficulty is that
they may occur in an expression, e.g.,
$\fact(\symb{3}) + \symb{5}$, which is not well sorted in the
corresponding LCTRS: $\fact(\symb{3})$
has sort $\symb{result}_\fact$, not $\Int$.  To avoid this issue,
and to propagate errors, we split off function calls occurring
inside expressions other than $\mathit{var} =
\mathit{func}(\mathit{arg}_1,\ldots,\mathit{arg}_n)$ and store their
return value into a temporary variable.  For example:

\begin{minipage}[h]{0.45\textwidth}
\begin{verbatim}
int ncr(int x, int y) {
  int a = fact(x);
  int b = fact(y) * fact(x - y);
  return a / b;
}
\end{verbatim}
\end{minipage}
\begin{minipage}[h]{0.165\textwidth}
\[ \mbox{\huge $\Longrightarrow$ } \]
\end{minipage}
\begin{minipage}[h]{0.4\textwidth}
\begin{verbatim}
int ncr(int x, int y) {
  int a = fact(x);
  int tmp1 = fact(y);
  int tmp2 = fact(x - y);
  int b = tmp1 * tmp2;
  return a / b;
}
\end{verbatim}
\end{minipage}

This change may cause declarations at places in the function where a
C compiler would not accept them, but for the translation, this is no
issue.  We translate the resulting function
by executing function
calls in a separate parameter and using a separate step to examine the
outcome of a function call and assign it to the relevant variable(s).

\begin{example}
The \texttt{ncr} program above is transformed to the following
optimized LCTRS (where we test for division by zero but not
integer overflow for simplicity):
\[
\begin{array}{cc}
\begin{array}{rclrcl}
\symb{ncr}(x,y) & \arrz & \symb{u}_2(x,y,\fact(x)) &
\symb{u}_2(x,y,\symb{error}_\fact) & \arrz & \symb{error}_{\symb{ncr}} \\
\symb{u}_2(x,y,\symb{return}_\fact(k)) & \arrz & \symb{u}_3(x,y,k,
  \fact(y)) &
\symb{u}_3(x,y,a,\symb{error}_\fact) & \arrz & \symb{error}_{\symb{ncr}} \\
\symb{u}_3(x,y,a,\symb{return}_\fact(k)) & \arrz & \symb{u}_4(x,y,a,k,
  \fact(x-y)) &
\ \ \ \symb{u}_4(x,y,a,t_1,\symb{error}_\fact) & \arrz &
  \symb{error}_{\symb{ncr}} \\
\end{array} \\
\begin{array}{rcll}
\symb{u}_4(x,y,a,t_1,\symb{return}_\fact(k)) & \arrz &
  \symb{error}_{\symb{ncr}} & \constraint{t_1 * k = \nul} \\
\symb{u}_4(x,y,a,t_1,\symb{return}_\fact(k)) & \arrz &
  \symb{return}_{\symb{ncr}}(a\ \symb{div}\ (t_1 * k)) &
  \constraint{t_1 * k \neq \nul} \\
\end{array}
\end{array}
\]
\end{example}

\subsection{Statically Allocated Arrays}
\label{subsec:transition:array}

Finally, let us consider arrays.
After we have seen \exshort\ref{exa:motivating} and the way side
effects were handled in \secshort\ref{subsec:global}, this is
largely as expected.  For now, we will not consider aliasing.

To start, we must fix a theory signature and corresponding
interpretations.
For a given theory sort $\asort$ which admits at least one value,
say $\nul_\asort$, let $\symb{array}(\asort)$ be a new sort and
$\T_{\symb{array}(\asort)} = \T_\asort^*$---so each value
corresponds to a finite sequence.
We introduce the following theory
symbols (in addition to $\Sigmaint$ and other desired theories):
\begin{itemize}
\item $\symb{size}_\asort : [\symb{array}(\asort)] \arrtype \Int$:
  we define $\J_{\symb{size}_\asort}(a)$ as the length of the sequence
  $a$.
\item $\symb{select}_\asort : [\symb{array}(\asort) \times \Int]
  \arrtype \asort$: if $a = \langle a_0,\ldots,a_{n-1}\rangle$, we
  define $\J_{\symb{select}_\asort}(a,k) = a_k$ if $0 \leq k < n$ and
  $\J_{\symb{select}_\asort}(a,k) = \nul_\asort$ otherwise.
\item $\symb{store}_\asort : [\symb{array}(\asort) \times \Int \times
  \asort] \arrtype \symb{array}(\asort)$: if $a = \langle a_0,\ldots,
  a_{n-1} \rangle$, we define $\J_{\symb{store}_\asort}(a,k,v) =
  \langle a_0,\ldots,a_{k-1},v,a_{k+1},\ldots,a_{n-1}\rangle$ if $0
  \leq k < n$ and $\J_{\symb{store}_\asort}(a,k,v) = a$ otherwise.
\end{itemize}
We will usually omit the subscript $\asort$ when the sort is clear
from context.

Our arrays are different from
SMT-LIB (cf.\ \url{http://www.smt-lib.org/}), where arrays are
functions from one (possibly infinite) domain to another.
For program analysis, finite-length sequences seem practical instead.
SMT problems on our arrays can be translated to SMT-LIB
  format using an additional integer variable
  $a_{\symb{size}}$ for the size of an array $a$ and universal
  quantification to set entries outside the array to a fixed value.

We encode \emph{lookups} $a[i]$ as $\symb{select}(a,i)$;
for \emph{assignments} $a[i] = e$, we
replace $a$ by $\symb{store}(a,\linebreak i,e)$.
To ensure correctness here, we add boundary checks to the constraint
and reduce to $\symb{error}_\afun$ if such a check is not satisfied.
After an assignment, the updated variable
is included in the return value since the underlying memory of
the array was altered.

\begin{example}\label{ex:strcpy}
Consider the following C implementation of the $\symb{strcpy}$
function, which copies the contents of \texttt{original} into the
array \texttt{goal}, until a $0$ is reached.

\vspace{-4pt}
\begin{verbatim}
void strcpy(char goal[], char original[]) {
  int i = 0;
  for (; original[i] != 0; i++) goal[i] = original[i];
  goal[i] = 0;
}
\end{verbatim}
\vspace{-4pt}

For simplicity, we think of strings as integer arrays
(although alternative choices for $\T_{\symb{char}}$ make little
difference).
The function never updates \texttt{original}, but may update
\texttt{goal}, so the return value must include the latter.
We obtain the following LCTRS:
\[
\begin{array}{rcll}
\symb{strcpy}(gl,org) & \arrz & \symb{v}(gl,org,\nul) \\
\symb{v}(gl,org,i) & \arrz & \error_{\symb{strcpy}} &
  \constraint{i < \nul \vee i \geq \symb{size}(org)} \\
\symb{v}(gl,org,i) & \arrz & \symb{w}(gl,org,i) &
  \constraint{\nul \leq i < \symb{size}(org) \wedge
              \symb{select}(org,i) = \nul} \\
\symb{v}(gl,org,i) & \arrz & \error_{\symb{strcpy}} &
  \constraint{\nul \leq i < \symb{size}(org) \wedge
              \symb{select}(org,i) \neq \nul \wedge i \geq \symb{size}(gl)} \\
\symb{v}(gl,org,i) & \arrz & \multicolumn{2}{l}{\symb{v}(\symb{store}(gl, i,
  \symb{select}(org,i)),org,i+\one)} \\
& & &
 \constraint{\nul \leq i < \symb{size}(org) \wedge
 \symb{select}(org,i) \neq \nul \wedge
 i < \symb{size}(gl)} \\
\symb{w}(gl,org,i) & \arrz & \error_{\symb{strcpy}} &
  \constraint{i < \nul \vee i \geq \symb{size}(gl)} \\
\symb{w}(gl,org,i) & \arrz &
  \multicolumn{2}{l}{
    \return_{\symb{strcpy}}(\symb{store}(gl,i,\nul))\ 
    \constraint{\nul \leq i < \symb{size}(gl)}
  }\\
\end{array}
\]
Here, the notation $\nul \leq i < \symb{size}(org)$ is shorthand
for $\nul \leq i \land i < \symb{size}(org)$.  Note that this LCTRS
could be further simplified by combining the third rule
with the last two rules.
\end{example}

\commentbox{It should now be clear how the systems from
\secshort\ref{subsec:motiv} have been translated from C code to
LCTRSs.  The only deviation is that there we have included the array
$\mathit{arr}$ in the return value of $\symb{sum1}$, $\symb{sum2}$, and
$\symb{sum4}$, which is not necessary as it is not modified in these
cases.  This was done to allow for a direct comparison with
$\symb{sum3}$,
where the array \emph{is} modified.  In addition, the $\symb{return}$
and $\symb{error}$ symbols in these examples are not indexed, for
the same reason.}

\subsection{Dynamically Allocated Arrays and Aliasing}\label{subsec:transition:pointer}

The transformation in \secshort\ref{subsec:transition:array}
allows us to abstract from the underlying memory model when encoding
arrays.  This makes analysis easier, but does not allow for aliasing
or pointer arithmetic beyond accessing an array element.
As a result, properties
we prove about $\symb{strcpy}$ from \exshort\ref{ex:strcpy} might
fail to hold for a call like $\symb{strcpy}(a,a)$.

As we seek to handle only \emph{part} of the language, this does
not need to be an issue; in practice,
a fair number of programs are written
without explicit pointer use and with easily removable
aliasing only. For example, we might replace
$\symb{strcpy}(a,a)$ by $\symb{strcpy}'(a)$, and create new rules
for $\symb{strcpy}'$ by collapsing the variables in the rules for
$\symb{strcpy}$.
To handle programs with more sophisticated pointer use,
including dynamically allocated arrays, we can
encode the memory as a list of arrays and pass this along as a
variable.  This is somewhat beyond the scope of this paper, but is
explored in \appshort\ref{app:pointers}.

\subsection{Remarks}
\label{subsec:trafo_remarks}

The treatment in this section is both informal and incomplete:
we have discussed only
a fraction of the C language---albeit an important fraction
for verification.  We believe
that these ideas easily extend
further, with for instance the \texttt{switch} statement,
user-defined data structures,
or standard library functions, as
well as compiler-specific choices.
Important to note is that the translation gives several
\emph{choices}.  Most pertinently, we saw the choices what
sort interpretations to use (e.g.,
whether \texttt{int} should be mapped to
the set of integers or bitvectors) and what errors to consider.

In this paper, and in line with our automatic translation at
\url{http://www.trs.css.i.nagoya-u.ac.jp/c2lctrs/}, we have chosen
to work with real integers
and not test
for overflows.
We also do not permit aliasing.
By avoiding the more sophisticated translation steps, we obtain
LCTRSs which are correspondingly easier to analyze.

The LCTRSs
from this transformation are well behaved: all rules are left-linear
and non-overlapping,\footnote{Non-overlappingness means that
  for every term $s$ and rule $\rho\colon\ell \arrz r\ 
  \constraint{\varphi}$ such that $s$ reduces with $\rho$ at the root
  position: (a) there are no other rules $\rho'$ such that $s$ reduces
  with $\rho'$ at the root position, and (b) if $s$ reduces with any
  rule at a non-root position $q$, then $q$ is not a position of
  $\ell$.  For our translations, this holds because (a) rules with the
  same defined symbol have either incompatible constraints or
  non-unifiable arguments, and (b) in a rule $\afun(\ell_1,\dots,
  \ell_n) \arrz r\ \constraint{\varphi}$, the terms $\ell_i$ do not
  contain defined or calculation symbols.}
and have the property that all ground terms
can be reduced or are constructor terms.
Rules $\ell \to r\ \constraint{\varphi}$ \emph{can}
have variables in $r$ or $\varphi$ which do not occur
in $\ell$:  this is mostly due to unspecified values in the
C code.  Where such variables do not
occur---or are removed in the optimization step---the resulting
LCTRSs are confluent.

\section{Rewriting Induction for LCTRSs}
\label{sec:ri}

In this section, we adapt the inference rules
from~\cite{red:90,fal:kap:12,sak:nis:sak:sak:kus:09} to
inductive theorem proving with LCTRSs.  This provides the core
theory for rewriting induction,
strengthened with
two generalization techniques in \secshort\ref{sec:lemma-gen}.

We start by listing some restrictions we need to impose on LCTRSs
for the method to work (\secshort\ref{subsec:ri:restrictions}).
Then, we provide the theory for the technique
(\secshort\ref{subsec:ri:main})
and some illustrative examples
(\secshort\ref{subsec:ri-examples}).
Compared to older definitions of rewriting induction,
we make several changes to best handle the new formalism.
We complete by proving correctness
(\secshort\ref{subsec:ri-correctness-proof}).

\subsection{Restrictions}\label{subsec:ri:restrictions}

In order for rewriting induction to be successful, we need to impose
certain restrictions.

\begin{definition}\label{def:rirestrictions}
In the following, we limit interest to LCTRSs which satisfy
(\ref{req:corepresent})--(\ref{red:groundinstances}):

\begin{enumerate}
\item\label{req:corepresent}
  all core theory symbols are present in $\Sigmalogic$:
  $\Sigmalogic \supseteq \Sigmacore$;
\item\label{req:termination}
  the LCTRS is \emph{terminating}: there is no infinite reduction
  $\aterm_1 \arr{\Rules} \aterm_2 \arr{\Rules} \cdots$;
\item\label{req:completereducible}
  the system is \emph{quasi-reductive}: i.e., for every ground
  term $\aterm$
  either $\aterm \in \Terms(\Constructors,\emptyset)$ (we say $\aterm$
  is a \emph{ground constructor term}), or there is some $\bterm$ such
  that $\aterm \arr{\Rules} \bterm$;
\item\label{red:groundinstances}
  there are ground terms of every sort occurring in
  $\Sigma$.
\end{enumerate}
\end{definition}

Property~\ref{req:corepresent} is the standard assumption
from \secshort\ref{sec:prelim}.  We will need symbols such as
$=$, $\wedge$ and $\Rightarrow$ to add new information to a constraint.
Termination (property~\ref{req:termination})
essentially indicates that a program cannot run indefinitely; this
is crucial for our inductive reasoning, as the method uses induction
on an extension of $\arr{\Rules}$ on terms.

Property~\ref{req:completereducible} indicates that an evaluation
cannot get ``stuck''; roughly, that pattern matching and case
analysis are exhaustive.
Termination and quasi-reductivity together ensure that every ground
term reduces to a constructor term.  This makes it
possible to do an exhaustive case analysis on the rules applicable to
an equation, and lets us assume that variables are always
instantiated by ground constructor terms.

The last property is natural, since inductive theorem proving
makes a statement on \emph{ground} terms; there is no point
in regarding empty sorts.
Together with quasi-reductivity and termination, this
implies that all sorts admit ground \emph{constructor} terms.

Methods to prove both quasi-reductivity and termination have
previously been published for different styles of constrained
rewriting; see e.g.\ \cite{fal:kap:12} for quasi-reductivity
and~\cite{fal:09,sak:nis:sak:11} for termination.  These methods are
easily adapted to LCTRSs.
Quasi-reductivity is handled in~\cite{quasireductive} and is
moreover always satisfied by systems obtained from the
transformations in \secshort\ref{sec:transformations}.
Some basics of termination analysis for LCTRSs are discussed
in~\cite{kop:13}.

\begin{example}\label{ex:running}
As a running example in this section, we will consider
$\Rules_\fact$, which combines the factorial function from
\exshort\ref{exa:fact1 cnt 2} with a recursive variant obtained
from \texttt{int fact(int x) \{ if (x <= 1) return 1; else return
x * fact(x - 1); \}}.
\[
\begin{array}{lrclllrcll}
\text{(1)} & \symb{factiter}(x) & \arrz & \symb{iter}(x,\one,\one) & &
\text{(4)} & \symb{factrec}(x) & \arrz & \return(\one) &
  \constraint{x \leq \one} \\
\text{(2)} & \symb{iter}(x, z, i) & \arrz & \symb{iter}(x, z * i, i + \one) &
  \constraint{i \leq x} &
\text{(5)} & \symb{factrec}(x) & \arrz & \\
\text{(3)} & \symb{iter}(x, z, i) & \arrz & \return(z) & \constraint{i > x} &
  \multicolumn{4}{r}{\symb{mul}(x,\symb{factrec}(x-\one))} & \constraint{x > 1} \\
& & & & & 
\text{(6)} & \symb{mul}(x,\return(y)) & \arrz & \return(x * y) \\
\end{array}
\]
(Function symbols were renamed for readability.)
We can choose a signature which includes $\Sigmacore$, and
each of the sorts---$\Int,\BOOL,\symb{result}$---clearly admits
ground terms (e.g., $\nul,\sfalse,\return(\nul)$).
The system was obtained using \secshort\ref{sec:transformations},
so is quasi-reductive.
Termination follows because in the recursive rule (2), the value $x-i$
is decreased, while bounded from below by $0$, and in the
recursion in rule (5), $x$ decreases against the bound $1$.
This could be proved using, e.g.,
interpretations with support for built-in integers and non-theory
symbols~\cite{fuh:gie:plu:sch:fal:09}, and is automatically handled by
our tool \ctrl.
\end{example}

\subsection{Rewriting Induction}\label{subsec:ri:main}

We now introduce the notions of \emph{constrained equations} and
\emph{inductive theorems}.

\begin{definition}
\label{def:constrained_equation}
A \emph{(constrained) equation} is a triple $\aterm \approx
\bterm\ \constraint{\varphi}$ with $\aterm$ and $\bterm$ terms
and $\varphi$ a constraint.
We write $\aterm \simeq \bterm\ \constraint{\varphi}$ to denote either
$\aterm \approx \bterm\ \constraint{\varphi}$ or
$\bterm \approx \aterm\ \constraint{\varphi}$.
A substitution $\gamma$ \emph{respects} $s \approx t\ 
\constraint{\varphi}$ if $\gamma$ respects $\varphi$ and
$\FV(\aterm) \cup \FV(\bterm) \subseteq \domain(\gamma)$; it is called
a \emph{ground constructor substitution} if all $\gamma(\avar)$ with
$x\in\domain(\gamma)$ are ground constructor terms.

An equation $\aterm \approx \bterm\ \constraint{\varphi}$ is an
\emph{inductive theorem} of an LCTRS $\Rules$
if $\aterm\gamma \leftrightarrow_\Rules^* \bterm\gamma$ for any ground
constructor substitution $\gamma$ that respects this equation.
\end{definition}

Intuitively, if an equation $\afun(\seq{\avar}) \approx
\bfun(\seq{\avar})\ \constraint{\varphi}$ is an inductive theorem,
then $\afun$ and $\bfun$ define the same function (conditional on
$\varphi$, and assuming confluence). 
As we require termination,
we thus consider \emph{total equivalence} in the categorization of
\citeN{god:str:08}:
on all inputs, both programs terminate and return the same values.

To prove that an equation is an inductive theorem, we consider nine
inference rules,
  in \secsshort\ref{subsubsec:simplification}--\ref{subsubsec:completeness}.
Four originate in~\cite{red:90}; three are
based on extensions \cite{bou:97,fal:kap:12,sak:nis:sak:sak:kus:09};
two are new.
All these rules modify a triple $(\E,\H,b)$, called a
\emph{proof state}. Here, $\E$ is a set of equations, $\H$ a set
of rules with $\arr{\Rules \cup \H}$
terminating, and $b \in \{$\complflag, \incomplflag$\}.$
A rule in $\H$ plays the role of an \emph{induction hypothesis} for
``proving'' the equations in $\E$ and is called an \emph{induction
rule}. The flag $b$ indicates whether we can use the current proof
state to \emph{refute} that the initial equation is an inductive
theorem; we can do so if $b = \complflag$.

The definition of these rules is used in the following
result, proved in \secshort\ref{subsec:ri-correctness-proof}.

\begin{theorem}\label{thm:ri-correctness}
Let an LCTRS with rules $\Rules$ and signature $\Sigma$, satisfying
the restrictions from \defshort\ref{def:rirestrictions}, be given.
Let $\E$ be a finite set of equations and let $\flag = \complflag$
if we can confirm that $\Rules$ is confluent
and $\flag = \incomplflag$ otherwise.
If $(\E,\emptyset,\flag) \vdashristar (\emptyset,\H, \flag')$
for some $\H,\flag'$, then every equation in $\E$ is an
inductive theorem of $\Rules$.
If $(\E,\emptyset,\flag) \vdashristar \bot$,
then there is some equation in $\E$
that is not an inductive theorem of $\Rules$.
\end{theorem}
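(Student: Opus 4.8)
The plan is to prove the two claims by the classical rewriting‑induction technique, adapted to LCTRSs: a well‑founded induction for the soundness part, and a forward invariant together with a short case analysis for the disproving part.

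\emph{Setup.} Along any derivation from $(\E,\emptyset,\flag)$ the hypothesis component only grows --- only \textsc{Expansion} changes it, by adding a rule --- and every step maintains the invariant that $\arr{\Rules\cup\H}$ is terminating; so if the derivation has a final hypothesis set $\H_\omega$, then $\arr{\Rules\cup\H_\omega}$ is terminating, and $\succ := (\arr{\Rules\cup\H_\omega}\,\cup\,\triangleright)^{+}$, with $\triangleright$ the proper‑subterm relation, is well founded, since adding the subterm relation to a terminating rewrite relation preserves termination (a standard quasi‑commutation argument). For the soundness part it is convenient to first flatten all \textsc{Completeness} steps --- each is justified by a subderivation of the other rules --- so that only the remaining eight inference rules occur; the flag is irrelevant there. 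I will use freely that by termination and quasi‑reductivity (restrictions \ref{req:termination} and \ref{req:completereducible}) every ground term rewrites to a ground constructor term, and that each variable image in an equation can be normalised to a ground constructor term without affecting $\lrarrr{\Rules}$‑reachability; restriction \ref{red:groundinstances} guarantees such terms exist for every sort.

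\emph{Soundness.} Let $T$ be the set of all pairs $(\aterm\gamma,\bterm\gamma)$ where $\aterm\simeq\bterm\ \constraint{\varphi}$ occurs in some state of the derivation and $\gamma$ is a ground constructor substitution respecting it, together with all $(\ell\gamma,r\gamma)$ for $\ell\arrz r\ \constraint{\varphi}\in\H_\omega$ and $\gamma$ a ground constructor substitution respecting the rule. I claim that every pair in $T$ is $\lrarrr{\Rules}$‑joinable; applied to the $\E$‑instances this gives the first statement. The claim is proved by well‑founded induction on $\succ^{\mathrm{mul}}$, the multiset extension of $\succ$, applied to the two‑element multiset of a pair. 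If $\pi=(\ell\sigma,r\sigma)$ is a rule instance, use that the rule was added by an \textsc{Expansion} at a basic position $p$ of $\ell$: then $\subpos{\ell\sigma}{p}$ is a non‑constructor ground term, so some rule of $\Rules$ rewrites it (quasi‑reductivity); by the standard common‑instance/mgu argument that rule contributes an equation to $\Expd(\ell,r,\varphi,p)$, and Theorems~\ref{thm:constrainedterm} and~\ref{thm:constrainedterm:reverse} turn the constrained reduction defining $\Expd$ into an honest one, yielding a ground constructor instance $(\aterm'\delta,\bterm'\delta)\in T$ with $\ell\sigma\arr{\Rules}\aterm'\delta$ and $r\sigma\lrarrr{\Rules}\bterm'\delta$; since also $\ell\sigma\arr{\Rules\cup\H_\omega}r\sigma$, both $\aterm'\delta$ and $\bterm'\delta$ are $\succ$‑below $\ell\sigma$, so this instance has strictly smaller measure, and the induction hypothesis plus the two displayed reductions give $\ell\sigma\lrarrr{\Rules}r\sigma$. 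If $\pi=(\aterm\gamma,\bterm\gamma)$ is an equation instance, follow its equation forward to the first step that modifies it (it cannot survive into the empty final $\E$) and case on the inference rule used: at a \textsc{Deletion} step its sides are equal or its constraint is unsatisfiable, neither compatible with $\pi$ being non‑joinable; at an \textsc{EQ-deletion} step $\pi$ is either $\lrarrr{\Rulescalc}$‑joinable or reappears unchanged and we keep following it; \textsc{Constructor} yields a strictly $\succ^{\mathrm{mul}}$‑smaller argument pair in $T$; \textsc{Simplification} yields a new instance $(\cterm\delta,\bterm\gamma)$ with $\cterm\delta\succ$‑below $\aterm\gamma$, bridging the missing $\arr{\Rules}$‑link via the rule‑instance construction above (after normalising the matching substitution) if the performed step used an $\H_\omega$‑rule; and \textsc{Expansion} reduces to the rule‑instance case. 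In each case the induction hypothesis closes the gap, so $\pi$ is joinable.

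\emph{Disproving.} First, the flag is $\complflag$ at a state only if $\Rules$ is confluent: it starts $\complflag$ only when confluence has been confirmed, \textsc{Postulate} and \textsc{Generalization} immediately switch it to $\incomplflag$, and \textsc{Completeness} merely restores an earlier $\complflag$. Second, a forward induction on the derivation length shows that whenever $(\E,\emptyset,\complflag)\vdashristar(\E'',\H'',\complflag)$ with the flag still $\complflag$ and every equation of $\E$ is an inductive theorem, then every equation of $\E''$ and every rule of $\H''$ is an inductive theorem; the only non‑routine cases are \textsc{Expansion}, where the rule added to $\H$ is an equation that is an inductive theorem by the induction hypothesis, hence has $\lrarrr{\Rules}$‑joinable ground constructor instances, and \textsc{Simplification}, handled via Theorems~\ref{thm:constrainedterm} and~\ref{thm:constrainedterm:reverse} together with soundness of the rules in $\H''$ (again the induction hypothesis). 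Now suppose $(\E,\emptyset,\flag)\vdashristar\bot$. Since $\bot$ is derivable only by \textsc{Disprove} from a $\complflag$ state, $\flag=\complflag$ and $\Rules$ is confluent; let the last step apply \textsc{Disprove} to $\aterm\simeq\bterm\ \constraint{\varphi}$. In each of the three \textsc{Disprove} cases one exhibits a ground constructor substitution $\gamma$ respecting $\aterm\approx\bterm\ \constraint{\varphi}$ with $\aterm\gamma\not\lrarrr{\Rules}\bterm\gamma$: in the first case $\aterm\gamma,\bterm\gamma$ are ground logical terms of different value, and since the left‑hand side of every rule of $\Rules$ is a non‑logical term it cannot match inside a ground logical term, so only $\Rulescalc$‑steps apply and these preserve the interpretation; in the second and third cases $\aterm\gamma,\bterm\gamma$ are ground constructor terms with distinct root constructors (in the third case instantiating the non‑logical variable $\aterm$ with one of the at least two constructors of its sort), hence are normal forms with no common reduct and so are not convertible by confluence. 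Thus $\aterm\approx\bterm\ \constraint{\varphi}$ is not an inductive theorem, and the forward‑invariant statement, read contrapositively, gives an equation of the initial $\E$ that is not an inductive theorem of $\Rules$.

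\emph{Main obstacle.} The hard part is the \textsc{Expansion}/rule‑instance case of the soundness induction: one must make precise the most general unifier of $\subpos{\aterm}{p}$ with the applied rewrite rule, the residual substitution that recombines with $\gamma$ into a genuine ground constructor instance of the $\Expd$‑equation, and the instantiation of fresh right‑hand‑side variables of irregular rules by values (for which restriction~\ref{red:groundinstances} supplies the needed ground constructor terms), while all the time checking that the generated instance is strictly $\succ^{\mathrm{mul}}$‑smaller and, in the \textsc{Simplification} subcase with an induction rule applied at the root, re‑using this construction instead of the (possibly non‑decreasing) raw rule‑instance measure. A subsidiary but necessary point is that $\succ$ must simultaneously absorb context‑closed rewrite steps (for \textsc{Simplification}) and proper‑subterm steps (for \textsc{Constructor}) while staying well founded; this is exactly what the choice $(\arr{\Rules\cup\H_\omega}\cup\triangleright)^{+}$ secures.
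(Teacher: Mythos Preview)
Your soundness argument takes a genuinely different route from the paper: you do a direct well‑founded induction on the multiset extension of $(\arr{\Rules\cup\H_\omega}\cup\triangleright)^+$, in the style of Reddy's original proof, whereas the paper packages the Noetherian induction into an abstract relational principle (Lemma~\ref{lem:principle}, originating with Koike--Toyama and Sakata et al.): it maintains, along the derivation, the inclusion $\arr{\Rules\cup\H}\subseteq\arr{\Rules}\cdot\arr{\Rules\cup\H}^*\cdot(\leftrightarrow_\E\cup=)\cdot\leftarrow^*_{\Rules\cup\H}$ (Lemma~\ref{lem:properties_of_ri-step}) and then invokes the principle once at the end. Your approach is perfectly legitimate and arguably more elementary; the paper's approach is more modular and reuses the same principle for the disproving part. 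One omission in your case analysis: you list \textsc{Deletion}, \textsc{EQ-deletion}, \textsc{Constructor}, \textsc{Simplification}, \textsc{Expansion} as the rules that can remove an equation you are ``following forward'', but \textsc{Generalization} also removes an equation. The fix is easy---the instance you are tracking is, by the side condition of \textsc{Generalization}, also an instance of the replacement equation, so you just keep following, relying on a secondary induction on the (finite) remaining derivation length, exactly as you already do implicitly in the \textsc{EQ-deletion} ``keep following'' case---but it should be stated.

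The real gap is in the disproving part. You assert that the ``only non‑routine cases'' of the forward invariant are \textsc{Expansion} and \textsc{Simplification}, but \textsc{Completeness} is in fact the hardest case and you do not address it. The subderivation witnessing a \textsc{Completeness} step passes through $\incomplflag$ states, where \textsc{Postulate} may add an unsound equation and \textsc{Expansion} may then turn it into an induction rule; thus $\H''$ can contain rules that do not obviously correspond to inductive theorems, and your invariant (stated only for $\complflag$ states) gives you nothing about the intermediate states. The paper handles exactly this via Lemmas~\ref{lem:properties_of_ri-step} and~\ref{lem:principle}: the structural inclusion of Lemma~\ref{lem:properties_of_ri-step}(2) holds along \emph{any} subderivation regardless of flags, and since $\E''\subseteq\E'$ is sound, Lemma~\ref{lem:principle} yields $\leftrightarrow^*_{\Rules\cup\H''}=\leftrightarrow^*_\Rules$, hence every rule in $\H''$ is sound after all. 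You could alternatively rerun your own soundness induction on the subderivation, treating the (already known to be sound) equations of $\E'$ as base cases instead of following them forward; but some such argument is needed, and as written your proof of the second claim is incomplete precisely at the inference rule that motivated introducing the flag in the first place.
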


\begin{example}\label{ex:running:start}
We will illustrate the various rules
by proving that
$\symb{factrec}$ and $\symb{factiter}$
are equivalent on positive input,\footnote{We limit interest to
positive input for demonstration purposes only: these functions give
the same result on \emph{all} input, but considering only $n \geq
\one$ allows us to apply the inference rules in a convenient order.}
by showing
that (FCT.A) is an inductive theorem:
\[
\mbox{(FCT.A)}\ \ \symb{factrec}(n) \approx \symb{factiter}(n)\ 
  \constraint{n \geq \one}
\]
$\Rules_{\fact}$ is confluent: as
seen in \secshort\ref{subsec:trafo_remarks},
it is left-linear and non-overlapping, and the right-hand sides
do not introduce fresh
variables, so confluence is given by \cite[\thmshort{}4]{kop:nis:13}.
Thus, we will start with the proof state
$(\ \{\ \mbox{(FCT.A)}\ \},\ \emptyset,\ \complflag\ )$.
\end{example}

Let us now define the nine inference rules to reduce proof states.

\subsubsection{\normalfont\textsc{Simplification}}
\label{subsubsec:simplification}
Our first inference rule originates in~\cite{red:90} and can be
considered one of the core rules of rewriting induction.

\begin{definition}
If $\coterm{\aterm \approx \bterm}{\varphi} \arr{\Rules \cup \H}
  \coterm{\cterm \approx \bterm}{\psi}$, where $\approx$ is seen as
  a fresh constructor for the purpose of constrained term
  reduction,\footnote{It does not suffice if
$\coterm{\aterm}{\varphi} \arrz_\Rules \coterm{\cterm}{\psi}$:
when reducing constrained terms, unused variables may be manipulated
at will, which causes problems if they are used in
$\bterm$.  For example,
\[\coterm{f(\avar+\nul)}{\avar > \bvar}
\equalgen \coterm{f(\avar+\nul)}{\cvar = \avar+\nul}
\arrzbase \coterm{f(\cvar)}{\cvar = \avar+\nul}
\equalgen \coterm{f(\avar)}{\avar < \bvar}\]
but we should certainly not replace
an equation $f(\avar+\nul) \approx g(\bvar)\ \constraint{\avar >
\bvar}$ by $f(\avar) \approx g(\bvar)\ \constraint{\avar < \bvar}$.}
then we may derive:
\[
(\E \uplus \{ (\aterm \simeq \bterm\ \constraint{\varphi}) \}, \H, b)
\vdashri (\E \cup \{ (\cterm \approx \bterm\ \constraint{\psi})\}, \H, b)
\]
\end{definition}

This inference rule allows us to reduce one side of an equation.
This is altered from Reddy's definition by using constrained
rather than normal reduction.

\begin{example}\label{ex:running:simplify}
Following \exshort\ref{ex:running:start}, we observe that
$\symb{factiter}(n)$ can be reduced by the unconstrained rule (1).
Thus, using \textsc{Simplification} we obtain the proof state:
\[
\left(\ 
\left\{
\begin{array}{cc}
  \mbox{(FCT.B)}: &
    \symb{iter}(n,\one,\one) \approx \symb{factrec}(n)\ 
    \constraint{n \geq \one}
  \\
\end{array}
\right\},\ 
\emptyset,\ 
\complflag\ 
\right)
\]
Here we reduce the right-hand side of the equation
(recall that $s \simeq t$ in the rule means
$s \approx t$ or $t \approx s$);
the reduced term
moves to
the left-hand side of the new equation.
Next, observe that $\symb{iter}(n,\one,\one)$ can be reduced by rule
(2) if $n \geq \one$; \textsc{Simplification} then gives:
\[
\left(\ 
\left\{
\begin{array}{cc}
  \mbox{(FCT.C)}: &
    \symb{iter}(n,\one * \one,\one + \one) \approx
    \symb{factrec}(n)\ \constraint{n \geq \one}
  \\
\end{array}
\right\},\ 
\emptyset,\ 
\complflag\ 
\right)
\]
\end{example}

Recall that constrained reduction also allows for steps with
calculation rules; see, e.g., \exshort\ref{ex:factconstrained}.  The
added complexity is that we must decide how to handle the fresh
variable these rules introduce.  In this paper we will use the
following strategy:
\begin{itemize}
\item if $\aterm \arrzcalc \cterm$ then $\coterm{\aterm \approx
  \bterm}{\varphi}$ is simplified
  to $\coterm{\cterm \approx \bterm}{\varphi}$, e.g.\ $\symb{f}(\nul +
  \one) \approx r\ \constraint{\varphi}$ reduces to $\symb{f}(\one)
  \approx r\ \constraint{\varphi}$;
\item a calculation containing variables can be replaced by a
  fresh variable, which is defined in the (updated) constraint,
  e.g.\ $\symb{f}(\avar + \one) \approx r\ \constraint{\varphi}$
  reduces to $\symb{f}(\bvar) \approx r\ \constraint{\varphi \wedge
  \bvar = \avar + \one}$;
  if such a definition already occurs in the constraint, the
  relevant variable is used instead, e.g.\ 
  $\symb{f}(\avar + \one) \approx r\ \constraint{\varphi \wedge \bvar
  = \avar + \one}$ reduces to $\symb{f}(\bvar) \approx r\ 
  \constraint{\varphi \wedge \bvar = \avar + \one}$.
\end{itemize}

\begin{example}\label{ex:simple:calcsimplify}
The proof state from \exshort\ref{ex:running:simplify} is further
simplified to:
\[
\left(\ 
\left\{
\begin{array}{cc}
  \mbox{(FCT.D)}: &
    \symb{iter}(n,\one,\symb{2}) \approx
    \symb{factrec}(n)\ \constraint{n \geq \one}
  \\
\end{array}
\right\},\ 
\emptyset,\ 
\complflag\ 
\right)
\]
\end{example}

\subsubsection{\normalfont\textsc{Expansion}}
Our second core rule also originates from~\cite{red:90}, but has
been more heavily adapted to support irregular rules.

\begin{definition}\label{def:expansion}
Let $s,t$ be terms and $\varphi$ a constraint, all with variables
distinct from those in $\Rules$ (we can always rename the variables in
the rules to support this), and $p$ a position of $s$.
Let $\Expdapp{\aterm}{\bterm}{\varphi}{p}$ be a set of equations containing, for all
rules $\ell \arrz r\ \constraint{\psi} \in \Rules$
such that $\ell$ is unifiable with $\subpos{\aterm}{p}$ with most
general unifier $\gamma$ and $\gamma(x) \in \Values \cup \setvars$ for all $x \in
\FV(\varphi) \cup \FV(\psi)$, an equation $\aterm' \approx \bterm'\ \constraint{\varphi'}$ where
$\aterm \gamma \approx \bterm\gamma\ 
\constraint{(\varphi\gamma) \wedge (\psi\gamma)} \arr{\Rules}
\aterm' \approx \bterm'\ \constraint{\varphi'}$ with rule $\ell \arrz
r\ \constraint{\psi}$ at position $1 \cdot p$.  Here, as in
\textsc{Simplification}, $\approx$ is seen as a fresh constructor for
the
reduction.
If
$\subpos{\aterm}{p}$ is \emph{basic} 
(i.e., $\subpos{\aterm}{p} = \afun(\aterm_1,\ldots,\aterm_n)$ with $\afun
\in \Defineds$
and all $\aterm_i$ constructor terms),
we may derive:
\[
(\E \uplus \{ \aterm \simeq \bterm\ \constraint{\varphi} \}, \H, b)
\vdashri (\E \cup \Expdapp{\aterm}{\bterm}{\varphi}{p}, \H, b)
\]
If, moreover, $\Rules \cup \H \cup \{ \aterm \arrz \bterm\ 
\constraint{\varphi} \}$ is terminating, we may even derive:
\[
(\E \uplus \{ \aterm \simeq \bterm\ \constraint{\varphi} \}, \H, b)
\vdashri (\E \cup \Expdapp{\aterm}{\bterm}{\varphi}{p}, \H \cup \{ \aterm
\arrz \bterm\ \constraint{\varphi} \}, b)
\]
\end{definition}

Intuitively, this inference rule
uses narrowing for
a \emph{case analysis}:
$\Expd$ generates all resulting equations if a ground
constructor instance of $\aterm \approx \bterm\ \constraint{\varphi}$
is reduced at position $p$ of $\aterm$.  In addition,
we save the current equation as a rule to take an induction step.

\begin{example}\label{ex:running:expand}
Following \exshort\ref{ex:simple:calcsimplify}, we consider which
rules may apply to an instance of $\symb{factrec}(n)$ with $n \geq
\one$.
For $\Expdapp{\symb{factrec}(n)}{\symb{iter}(n,\one,\symb{2})}{n \geq \one}{\epsilon}$, we choose:
\[
\left\{
\begin{array}{lrcll}
\mbox{(FCT.E):} & \return(\one) & \approx & \symb{iter}(n,\one,
  \symb{2}) & \constraint{n \geq \one \wedge n \leq \one}, \\
\mbox{(FCT.F):} & \symb{mul}(n,\symb{factrec}(n-\one)) & \approx &
  \symb{iter}(n,\one,\symb{2}) & \constraint{n \geq \one \wedge n >
  \one} \\
\end{array}
\right\}
\]
In both cases we used the unifier $\gamma = [x:=n]$.
If we write $(\mbox{FCT.D}^{-1})$ for the rule
generated from the inverse of \mbox{(FCT.D)}---so
$\symb{factrec}(n) \arrz \symb{iter}(n,
\one,\symb{2})\ \constraint{n \geq \one}$---$\Rules
\cup \{ (\mbox{FCT.D}^{-1}) \}$ is terminating
as the new rule does not cause mutual recursion between
$\symb{iter}$ and $\symb{factrec}$.
We continue with
$(\ \{ \mbox{(FCT.E)}, \mbox{(FCT.F)} \},\ 
\{ (\mbox{FCT.D}^{-1}) \},\ \complflag\ )$.
Now we can show the second kind of calculation step, using
\textsc{Simplification} on (FCT.F), which gives:
\[
\left(
\left\{
\begin{array}{crcll}
\mbox{(FCT.E):} & \return(\one) & \approx & \symb{iter}(n,\one,
  \symb{2}) & \constraint{n \geq \one \wedge n \leq \one}, \\
\mbox{(FCT.G):} & \symb{mul}(n,\symb{factrec}(m)) & \approx &
  \symb{iter}(n,\one,\symb{2}) & \constraint{n > \one \wedge m = n -
  \one} \\
\end{array}
\right\},
\begin{array}{c}
\{ \mbox{(FCT.D$^{-1}$)} \}, \\
\complflag
\end{array}
\right)
\]
Here, we also removed the redundant clause $n \geq \one$, which is
allowed by definition of $\arr{\Rules}$ on constrained terms.
As $n \geq \one \wedge n \leq \one$
implies $n = \one$, we may use \textsc{Simplification} with rule
(3) on (FCT.E), and with rule (2) followed by calculations on
(FCT.G), to get:
\[
\left(
\left\{
\begin{array}{crcll}
\mbox{(FCT.H):} & \return(\one) & \approx & \return(\one) &
  \constraint{n = \one}, \\
\mbox{(FCT.I):} & \symb{iter}(n,\symb{2},\symb{3}) & \approx &
  \symb{mul}(n,\symb{factrec}(m)) & \constraint{n > \one \wedge m = n -
  \one} \\
\end{array}
\right\},
\begin{array}{c}
\{ \mbox{(FCT.D$^{-1}$)} \}, \\
\complflag
\end{array}
\right)
\]
Now we can use
``induction'':
we
eliminate the occurrence of $\symb{factrec}$ with a
\textsc{Simplifi\-cation} step using the induction rule
\mbox{(FCT.D$^{-1}$)} and substitution $[n:=m]$.  This gives:
\[
\left(
\left\{
\begin{array}{crcll}
\mbox{(FCT.H):} & \return(\one) & \approx & \return(\one) &
  \constraint{n = \one}, \\
\mbox{(FCT.J):} & \symb{mul}(n,\symb{iter}(m,\symb{1},\symb{2})) &
  \approx & \symb{iter}(n,\symb{2},\symb{3}) & \constraint{n > \one
  \wedge m = n - \one} \\
\end{array}
\right\},
\begin{array}{c}
\{ \mbox{(FCT.D$^{-1}$)} \}, \\
\complflag
\end{array}
\right)
\]
\end{example}

Note that the choice of $\Expd$ is non-deterministic, as it uses
reduction of constrained terms.
The most natural choice for $\Expdapp{s}{t}{\varphi}{p}$---which we
use in examples---is
\begin{gather*}
\{\;\; \subreplace{\aterm}{r}{p}\gamma \approx \bterm\gamma\ 
  \constraint{(\varphi\gamma) \wedge (\psi\gamma)} \; \mid \;
          \ell \arrz r\ \constraint{\psi} \in \Rules,\;
          \aterm|_p\ \text{unifies with}\ \ell\ \text{with mgu}\ 
          \gamma
   \;\; \}
 \end{gather*}
However, for \emph{irregular} rules in particular, it may be strategic
to choose a different set.  Consider for example a (non-confluent)
LCTRS with rules $\afun(x) \arrz \bfun(y)\ \constraint{x > \nul \wedge
x > y}$ and $\afun(x) \arrz \bfun(y)\ \constraint{x \leq \nul \wedge
x \leq y}$.  With the choice for $\Expdapp{s}{t}{\varphi}{p}$
above, an equation $\afun(x) \approx \bfun(\nul)\ \constraint{\strue}$
results in $\{\ \bfun(y) \approx \bfun(\nul)\ \constraint{x > \nul
\wedge x > y},\ \bfun(y) \approx \bfun(\nul)\ \constraint{x \leq \nul
\wedge x \leq y}\ \}$.  If $\bfun$ is a constructor, neither of these
equations can be handled.  Using the full definition of
\textsc{Expansion}, we can choose $\bfun(\nul) \approx \bfun(\nul)\ 
\constraint{\strue}$ for both equations.

Also note that there is no choice in the orientation of the rule
added to $\H$: this is determined by the side of the equation on which
the expansion was applied.  Thus, in \exshort\ref{ex:running:expand}
we were not allowed to add (FCT.D) instead of
(FCT.D$^{-1}$).

Our definition of \textsc{Expansion} differs from both its
original and existing work on constrained rewriting induction.
To start, those works define $\Expdapp{s}{t}{\varphi}{p}$ simply
as the ``natural choice'' given above.
Second, we included a case where no rule is added, to allow for
progress when adding the rule might cause non-termination.
Forms of this case appear as a separate rule in other work,
e.g., \textsc{Case Analysis} in~\cite{bou:97} and
\textsc{Rewrite/Partial Splitting}
\journalOrReport{in~\cite{bou:jac:08,bou:jac:08:corr}}{by Bouhoula and Jacquemard \citeNN{bou:jac:08,bou:jac:08:corr}}.
A weaker form with constraints is given
in~\cite{fal:kap:12} (\textsc{Case-Simplify}).

\subsubsection{\normalfont\textsc{Deletion}} The last of the
core rules serves to remove solved equations from $\E$.

\begin{definition}
If $\aterm = \bterm$ or $\varphi$ is not satisfiable, we can delete
$\aterm \approx \bterm\ \constraint{\varphi}$ from $\E$:
\[
(\E \uplus \{ \aterm \approx \bterm\ \constraint{\varphi} \}, \H, b)
\vdashri (\E,\H, b)
\]
\end{definition}

Compared to the corresponding rule in~\cite{red:90}, the
unsatisfiability case is new; it is similar to the corresponding
rules in~\cite{sak:nis:sak:sak:kus:09,fal:kap:12}.

\begin{example}\label{ex:running:delete}
Following \exshort\ref{ex:running:expand}, the left- and
right-hand side of (FCT.H) are the same, so we may remove the
equation with \textsc{Deletion}, obtaining
$
(\ \{\ \mbox{(FCT.J)}\ \},\linebreak \{\ \mbox{(FCT.D$^{-1}$)}\ \},\ 
\complflag\ )
$.
We will see the other form of \textsc{Deletion} in
\exshort\ref{ex:running:eqdelete}.
\end{example}

\subsubsection{\normalfont\textsc{Postulate}}
Sometimes it is useful to make the problem \emph{seemingly} harder.
To this
end, we consider the last inference rule from~\cite{red:90}.

\begin{definition} For any set of equations $\E'$, we can derive:
\[
(\E,\H,b) \vdashri (\E \cup \E', \H, \incomplflag)
\]
\end{definition}

The \textsc{Postulate} rule allows us to add additional equations to
$\E$ (although at a price: we cannot conclude non-equivalence after
adding a potentially unsound equation).  The reason to do so is that
in proving the equations in $\E'$ to be inductive theorems,
we may derive new induction rules.
These can then be used to simplify the elements of $\E$.

\begin{example}\label{ex:running:postulate}
Following \exshort\ref{ex:running:delete},
\textsc{Expansion} followed by \textsc{Simplification} gives:
\[
\begin{array}{crcll}
\mbox{(FCT.K):} &
  \symb{mul}(n,\symb{iter}(m,\symb{2},\symb{3})) &
  \approx &
  \symb{iter}(n,\symb{6},\symb{4}) &
  \constraint{n \geq \symb{3} \wedge m = n - \one}
\end{array}
\]
But now a pattern starts to arise.  Expanding and fully simplifying
again, we obtain:
\[
\begin{array}{crcll}
\mbox{(FCT.L):} &
  \symb{mul}(n,\symb{iter}(m,\symb{6},\symb{4})) &
  \approx &
  \symb{iter}(n,\symb{24},\symb{5}) &
  \constraint{n \geq \symb{4} \wedge m = n - \one}
\end{array}
\]
And so on.  Here,
(FCT.K) cannot be handled by the induction rule
(FCT.J$^{-1}$), nor can (FCT.L) be handled by
(FCT.K$^{-1}$).
We have a \emph{divergence}: a sequence of increasingly
complex equations, each generated from the same leg in an
\textsc{Expansion} (see also the \emph{divergence critic}
in~\cite{wal:96}).  Yet the previous induction rules never apply
to the new equation.
This suggests we need a lemma equation.  We use
\textsc{Postulate}
to \pagebreak get:
\[
\left(
\left\{
\begin{array}{cc}
\mbox{(FCT.J):} & \symb{mul}(n,\symb{iter}(m,\symb{1},\symb{2}))
  \approx \symb{iter}(n,\symb{2},\symb{3}) \\
  & \constraint{n > \one \wedge m = n - \one} \\
\mbox{(FCT.M):} & \symb{mul}(n,\symb{iter}(m,x,y)) \approx
  \symb{iter}(n,x',y') \\
  & \constraint{n \geq y \wedge m = n - \one \wedge y' = y + 1 \wedge
    x' = x * y} \\
\end{array}
\right\},
\begin{array}{c}
\{ \mbox{(FCT.D$^{-1}$)} \}, \\
\incomplflag
\end{array}
\right)
\]
Using \textsc{Expansion} on the right-hand of (FCT.M), we have:
\[
\left(
\left\{
\begin{array}{cc}
\mbox{(FCT.J):} & \symb{mul}(n,\symb{iter}(m,\symb{1},\symb{2}))
  \approx \symb{iter}(n,\symb{2},\symb{3}) \\
  & \constraint{n > \one \wedge m = n - \one} \\
\mbox{(FCT.N):} & \symb{iter}(n,x' * y',y' + \one) \approx
  \symb{mul}(n,\symb{iter}(m,x,y)) \\
  \multicolumn{2}{r}{
    \phantom{ABC}\constraint{n \geq y \wedge m = n - \one \wedge y' = y + 1 \wedge
    x' = x * y \wedge y' \leq n}} \\
\mbox{(FCT.O):} & \return(x') \approx
  \symb{mul}(n,\symb{iter}(m,x,y)) \\
  \multicolumn{2}{r}{
    \constraint{n \geq y \wedge m = n - \one \wedge y' = y + 1 \wedge
    x' = x * y \wedge y' > n}} \\
\end{array}
\right\},
\begin{array}{c}
\left\{
\begin{array}{c}
  \mbox{(FCT.D$^{-1}$)} \\
  \mbox{(FCT.M$^{-1}$)} \\
\end{array}
\right\}, \\
\incomplflag
\end{array}
\right)
\]
But now we have added (FCT.M$^{-1}$) as an induction rule.  As a
result---since $n > \one$ clearly implies $n \geq \symb{2}$---we can
use \textsc{Simplification} with a substitution
$[n:=n,x:=\one,y:=\symb{2},x':=\symb{2},y':=\symb{3}]$ to reduce
(FCT.J) to the equation $\symb{mul}(n,\symb{iter}(m,\one,\symb{2}))
\approx \symb{mul}(n,\symb{iter}(m,\one,\symb{2}))\ \constraint{\dots
}$, which we may immediately remove by \textsc{Deletion}.  We continue
with the proof state
$(\{\ \mbox{(FCT.N)}, \mbox{(FCT.O)}\ \},\ \{\ \mbox{(FCT.D$^{-1}$)}
,\ \mbox{(FCT.M$^{-1}$)}\ \},\ \incomplflag)$.
\end{example}

Although
the need
to choose arbitrary new equations for use in \textsc{Postulate} may
seem somewhat problematic, this is actually a key step.  Complex
theorems typically require more than straight induction, both in
our
setting and in mathematical proofs in general.
Thus,
generation of suitable \emph{lemma equations} $\E'$ is not only
part, but even at the heart, of inductive theorem proving.
Hence,
this subject has been extensively investigated
\cite{bun:bas:hut:ire:05,kap:sak:03,kap:sub:96,nak:nis:kus:sak:sak:10,urs:kou:04,wal:96},
and a large variety of lemma generation techniques exist, at
least in the setting without constraints.

\subsubsection{\normalfont\textsc{Generalization}}
A very typical use of \textsc{Postulate} is to \emph{generalize}
a problematic equation.
For simplicity, we add a shortcut to do this in one step.

\begin{definition}\label{def:generalization}
If for all substitutions $\gamma$ which respect $\varphi$ there is a
substitution $\delta$ which respects $\psi$ with $\aterm\gamma =
\aterm'\delta$ and $\bterm\gamma = \bterm'\delta$, then we can derive:
\[
(\E \uplus \{ \aterm \approx \bterm\ \constraint{\varphi} \},\H,b)
\vdashri
(\E \cup \{ \aterm' \approx \bterm'\ \constraint{\psi} \},\H,\incomplflag)
\]
\end{definition}

This inference rule is rarely \emph{necessary}: we could
usually add $\aterm' \approx \bterm'\ \constraint{\psi}$ using
\textsc{Postulate}, and use the resulting induction rules to eliminate
$\aterm \approx \bterm\ \constraint{\varphi}$, as we did in
\exshort\ref{ex:running:postulate}.
By generalizing instead, we avoid extra steps, and intuitively, we
strengthen an induction statement rather than
add a separate lemma.
Without constraints,
\textsc{Generalization} can be seen as a combination of
\textsc{Postulate} and the \textsc{Subsumption} rule in~\cite{bou:97}.
As there are several results for
generalizing equations in the literature
\cite{bun:ste:vanh:ire:sma:93,bun:bas:hut:ire:05,bas:wal:92,wal:96,urs:kou:04},
the combination is useful beyond just this paper.

\begin{example}\label{ex:running:generalize}
In \exshort\ref{ex:running:postulate}, we could have used
\textsc{Generalization} immediately to move from the proof state
$(\{\ \mbox{(FCT.J)}\ \},\ \{\ \mbox{(FCT.D$^{-1}$)}\ \},\ 
\incomplflag)$ to $(\{\ \mbox{(FCT.M)}\ \},\ 
\{\ \mbox{(FCT.D$^{-1}$)}\ \},\ \incomplflag)$.
\end{example}

\subsubsection{\normalfont\textsc{EQ-deletion}}
The following rule, which was adapted
from~\cite{sak:nis:sak:sak:kus:09}, provides a link between the
equation part $\aterm \approx \bterm$ and the constraint.

\begin{definition}
Let $C$ be an arbitrary context with $n$ holes ($C$ may contain
symbols in $\Sigmalogic$). If all $\aterm_i,\bterm_i \in
\Terms(\Sigmalogic,\FV(\varphi))$, then we can derive:
\[
\begin{array}{c}
(\E \uplus \{ C[\aterm_1,\ldots,\aterm_n] \simeq C[\bterm_1,\ldots,
\bterm_n]\ \constraint{\varphi} \}, \H, b) \vdashri \\
(\E \cup \{
C[\aterm_1,\ldots,\aterm_n] \approx C[\bterm_1,\ldots,\bterm_n]\ 
\constraint{\varphi \wedge \neg (\bigwedge_{i = 1}^n \aterm_i =
\bterm_i)} \}, \H, b)
\end{array}
\]
\end{definition}

Intuitively, if $\bigwedge_{i=1}^n \aterm_i = \bterm_i$ holds, then
$C[\aterm_1,\ldots,\aterm_n]\gamma \leftrightarrow_{\Rulescalc}^*
C[\bterm_1,\ldots,\bterm_n]\gamma$, so we are done.
\textsc{EQ-deletion} excludes this case from the equation.
In combination with \textsc{Deletion}, this rule gives a more
general variation of
\textsc{Theory${}_\top$} in~\cite{fal:kap:12}.

\begin{example}\label{ex:running:eqdelete}
Continuing from \exshort\ref{ex:running:postulate} (or
\exshort\ref{ex:running:generalize}), we observe that $n \geq y$,
$y' = y + \one$ and $y' > n$ together imply $n = y$, and with $m = n
- \one$ we thus have $y > m$ as well.  Therefore,
\textsc{Simplification} on (FCT.O) by rule (3) followed by (6) gives:
\[
\mbox{(FCT.P):}\quad \return(n * x) \approx \return(x')\ 
    \constraint{n = y \wedge m = n - \one \wedge y' = y + 1 \wedge
    x' = x * y}
\]
We can use \textsc{EQ-deletion} with the context $C[\Box] = \return(
\Box)$ to replace (FCT.P) by:
\[
\mbox{(FCT.Q):}\ \ 
\return(n * x) \approx \return(x')\ 
\constraint{n = y \wedge m = n - \one \wedge y' = y + 1 \wedge
 x' = x * y \wedge \neg (n * x = x')}
\]
As $n = y$ and $x' = x * y$ together imply that $n * x = x'$, the
constraint of this equation is not satisfiable.  We may remove it
using \textsc{Deletion}, giving the proof state
$(\{\ \mbox{(FCT.N)}\ \},\ \{\ \mbox{(FCT.D$^{-1}$)},
\mbox{(FCT.M$^{-1}$)}\ \},\ \incomplflag)$.
\end{example}

\textsc{EQ-deletion} is among the core rules
for constrained rewriting induction: almost all inductive proofs use
it, in contrast to the remaining three inference rules.

\begin{example}\label{ex:running:complete}
To complete our example, consider (FCT.N).  As $y + \one =
y' \leq n \wedge m = n - \one$ implies $y \leq m$, we may apply
\textsc{Simplification} with rule (2) to replace it by:
\[
\begin{array}{cc}
\mbox{(FCT.R):} & \symb{mul}(n,\symb{iter}(m,x * y,y + \one)) \approx
  \symb{iter}(n,x' * y',y' + \one)\phantom{A} \\
  \multicolumn{2}{r}{
    \constraint{n \geq y \wedge m = n - \one \wedge y' = y + 1 \wedge
    x' = x * y \wedge y' \leq n}} \\
\end{array}
\]
Then, using \textsc{Simplification} with calculations (and observing
that both $x * y$ and $y+\one$ are ``defined'' in the constraint, as
discussed in \secshort\ref{subsubsec:simplification}), we get:
\[
\begin{array}{cc}
\mbox{(FCT.S):} & \symb{mul}(n,\symb{iter}(m,x',y')) \approx
  \symb{iter}(n,x'',y'') \\
  \multicolumn{2}{r}{
    \phantom{Ab}\constraint{n \geq y' \wedge m = n - \one \wedge
    x' = x * y \wedge x'' = x' * y' \wedge y'' = y' + \one}} \\
\end{array}
\]
(We removed the clauses with $y$ from the constraint, as $y$ does not
occur in the equation part.)
But now the induction rule (FCT.M$^{-1}$) applies!
As this rule is irregular, we must be careful.
We use the substitution $\gamma = [n:=n,m:=m,x'':=x',
y':=y'',x:=x',y:=y']$, which also affects variables not occurring in
the left-hand side.  The substituted
constraint for the rule is $n \geq y' \wedge m = n - \one \wedge y'' =
 y' + \one \wedge x'' = x' * y'$, which is indeed implied by the
constraint of (FCT.S).  Using \textsc{Simplification}, we thus obtain:
\[
\left(
\left\{
\begin{array}{cc}
\mbox{(FCT.T):} & \symb{mul}(n,\symb{iter}(m,x',y')) \approx
  \symb{mul}(n,\symb{iter}(m,x',y')) \\
  \multicolumn{2}{r}{
    \phantom{Ab}\constraint{n \geq y' \wedge m = n - \one \wedge
    x' = x * y \wedge x'' = x' * y' \wedge y'' = y' + \one}} \\
\end{array}
\right\},
\begin{array}{c}
\left\{
\cdots
\right\}, \\
\incomplflag
\end{array}
\right)
\]
As the left- and right-hand side of the remaining equation are the
same, we may remove it using \textsc{Deletion}.  This leaves a proof
state of the form $(\emptyset,\H,\incomplflag)$, so by
\thmshort\ref{thm:ri-correctness}, the equation
$\symb{factrec}(n) \approx \symb{factiter}(n)\ \constraint{n \geq
\one}$ is an inductive theorem.
\end{example}

\subsubsection{\normalfont\textsc{Constructor}}
Where~\citeN{fal:kap:12} and \citeN{sak:nis:sak:sak:kus:09} focus on
systems with only theory symbols and defined symbols,
here we are also interested in non-theory constructors, such as
$\symb{error}_\afun$ and $\symb{return}_\afun$.
To support this, we add:

\begin{definition}
If $\afun$ is a constructor, we can derive:
\[
(\E \uplus \{ \afun(\aterm_1,\ldots,\aterm_n) \approx
\afun(\bterm_1,\ldots,\bterm_n) \constraint{\varphi} \}, \H, b) \vdashri
(\E \cup \{ \aterm_i \approx \bterm_i\ \constraint{\varphi} \mid 1
\leq i \leq n \}, \H, b)
\]
\end{definition}

The \textsc{Constructor} rule originates in~\cite{bou:97}, where it is
called \textsc{Positive Decomposition},
although variations occur in earlier work on
implicit induction, e.g., \cite{hue:hul:82}.
It is used to split up a large equation into smaller problems.
This inference rule is particularly useful in applications where a
recursive structure, such as a list, is inductively built up, but will
also be invaluable as part of a disproof.

\begin{example}\label{ex:running:constructor}
Suppose that, in \exshort\ref{ex:running:start}, we had
started with
$
\mbox{(BAD.A):}\ \ 
\symb{factiter}(x) \approx \symb{factrec}(x-\one)\ \constraint{\strue}
$.
Following some expansions and simplifications, we arrive at
\[
\left(
\left\{
\begin{array}{rrcll}
\mbox{(BAD.B):} & \return(\symb{2}) & \approx & \return(\one) &
  \constraint{x = \symb{2}} \\
\mbox{(BAD.C):} & \symb{iter}(x,\one,\one) &
  \approx &
  \symb{factrec}(y) &
  \constraint{y = x - \one \wedge y > \one} \\
\end{array}
\right\},
\H,
\complflag
\right)
\]
(for some $\H$).
We can use \textsc{Constructor} to replace (BAD.B) by (BAD.D):
$\symb{2} \approx \one\ \constraint{x = \symb{2}}$.
\end{example}

\subsubsection{\normalfont\textsc{Disprove}}
Recall that, to
show that an equation is \emph{not} an inductive theorem, we must
derive $\bot$ from a $\complflag$ proof state.  For this, we use
\textsc{Disprove}.

\begin{definition}
Suppose $\vdash \aterm : \asort$ and one of the following holds:
\begin{itemize}
\item $\aterm,\bterm \in \Terms(\Sigmalogic,\setvars)$,\ $\asort$ is
  a theory sort, and
  $\varphi \wedge \aterm \neq \bterm$ is satisfiable;
\item $\aterm = \afun(\seq{\aterm})$ and $\bterm = \bfun(\seq{\bterm})$
  with $\afun,\bfun$ distinct constructors and $\varphi$
  satisfiable;
\item $\aterm \in \setvars \setminus \FV(\varphi)$,\ 
  $\varphi$ is satisfiable, at least two different constructors have
  output sort $\asort$, and either
  $\bterm$ is a variable distinct from $\aterm$ or $\bterm$ has the form
  $\bfun(\seq{\bterm})$ with $\bfun \in \Constructors$;
\end{itemize}
Then we may derive:
\[
(\E \uplus \{ \aterm \simeq \bterm\ \constraint{\varphi} \}, \H, \complflag)
\vdashri \bot
\]
\end{definition}

The first case of this rule
corresponds to \textsc{Theory${}_\top$}
in~\cite{fal:kap:12} and \thmshort{}7.2 in~\cite{sak:nis:sak:sak:kus:09};
note that the restriction to theory sorts only excludes the case
where $s$ and $t$ are non-logical variables.
The second case corresponds
to \textsc{Positive Clash} in~\cite{bou:97}.  The third
case is new in rewriting induction,
but appears in~\cite{hue:hul:82}, an implicit induction method based on
completion.

\begin{example}\label{ex:running:disprove}
Following \exshort\ref{ex:running:constructor}, we observe that
$x = \symb{2} \wedge \symb{2} \neq \symb{1}$ is satisfiable.  Thus,
by \textsc{Disprove} we reduce
$(\{\ 
\mbox{(BAD.D)},
\mbox{(BAD.C)}\ \},\ \H,\ \complflag)$ to $\bot$.
By confluence of $\Rules_{\symb{fact}}$, we see that
$\symb{factiter}(x)$ and $\symb{factrec}(x-1)$ have different normal
forms for some $x$.
\end{example}

\subsubsection{\normalfont\textsc{Completeness}}\label{subsubsec:completeness}
A downside of \textsc{Postulate} and \textsc{Generalization} is the
potential loss of the completeness flag.
To weaken this
problem---and empower automatic tools
to combine the search for a proof and a disproof---we add
our final inference rule.

\begin{definition}
For any set of equations $\E$ and $\E' \;\subseteq\; \E$ we can derive:
\[
\begin{array}{rcl}
\text{If}\ (\E,\H,\complflag) & \vdashristar & (\E',\H',\incomplflag) \\
\text{then}\ (\E,\H,\complflag) & \vdashri & (\E', \H', \complflag) \\
\end{array}
\]
\end{definition}

Essentially, \textsc{Completeness} allows us to return the
completeness flag that was lost due to a \textsc{Postulate} or
\textsc{Generalization} step, once we have managed to remove all the
added / generalized lemma equations.  In practice, a tool or human
prover might have a derivation that could be denoted $(\E,\H,
\complflag) {\vdashri}_{\!\mathrm{(\textsc{Postulate})}}\ 
(\E \cup \E',\H,\incomplflag) \vdashri \cdots \vdashri
(\E,\H \cup \H',\incomplflag)
{\vdashri}_{\!\mathrm{(\textsc{Completeness})}}\ 
(\E,\H \cup \H',\complflag)$ by remembering the set $\E$ where the
completeness flag was lost.

\begin{example}
Recall \exshort\ref{ex:running:postulate}.  Starting in
$(\{\ \mbox{(FCT.J)}\ \},\ \{\ \mbox{(FCT.D$^{-1}$)}\ \},\ 
\complflag)$, we lost completeness by adding a lemma equation.
Then, after using \textsc{Expansion}, we arrived at
$(\{\ \mbox{(FCT.J)},\mbox{(FCT.N)}, \mbox{(FCT.O)}\ \},\ 
\{\ \mbox{(FCT.D$^{-1}$)},\mbox{(FCT.M$^{-1}$)}\ \},\ \incomplflag)$.
Applying the proof steps of \exsshort\ref{ex:running:eqdelete}
and~\ref{ex:running:complete} without touching (FCT.J), we could
reduce this state to $(\{\ \mbox{(FCT.J)}\ \},\ \{\ 
\mbox{(FCT.D$^{-1}$)}, \mbox{(FCT.M$^{-1}$)}\ \},\ \incomplflag)$.
But the only equation (FCT.J) in this set is the one we started with.
Thus, we may restore the completeness flag, resulting in
$(\{\ \mbox{(FCT.J)}\ \},\ \{\ \mbox{(FCT.D$^{-1}$)},
\mbox{(FCT.M$^{-1}$)}\ \},\ \complflag)$.
\end{example}

There are many other potential inference rules we could consider,
as various extensions of the base method have been studied in the
literature (see e.g.~\cite{bou:97}).  For now, we stick to these
nine rules and leave the remainder to future work.

\subsection{Examples}\label{subsec:ri-examples}

The running example in \secshort\ref{subsec:ri:main} gives a good
general idea of the power of the method
and the way it is applied.
In this section we present some further examples.  For brevity, we
only list the equations $\E$ in each step, not the completeness flag
or induction rules $\H$.  Unless stated otherwise, these induction
rules are not applicable to new equations.

\begin{example}\label{exa:strlen}
Let us look at an
assignment to implement \texttt{strlen}, a string function which
operates on 0-terminated \texttt{char} arrays.
As \texttt{char} is a numeric data type, we use integer arrays in the
LCTRS translation (although another underlying sort
$\T_{\symb{char}}$ would make little difference).
The example function and its LCTRS translation are as follows:

\noindent
\begin{tabular}{cc}
\begin{minipage}[h]{0.275\textwidth}
\smallskip
\begin{verbatim}
int strlen(char *s) {
  for(int i = 0;;i++){
    if(s[i] == 0)
      return i;
  }
}
\end{verbatim}
\vspace{0pt}
\end{minipage}
&
\begin{minipage}[h]{0.65\textwidth}
\vspace{-8pt}
\[
\begin{array}{rrcll}
\text{(1)} & \symb{strlen}(\avar) & \arrz & \symb{u}(\avar,\nul) \\
\text{(2)} & \symb{u}(\avar,i) & \arrz & \symb{error} &
  \constraint{i < \nul \vee i \geq \symb{size}(\avar)} \\
\text{(3)} & \symb{u}(\avar,i) & \arrz & \symb{return}(i) &
  \constraint{\nul \leq i < \symb{size}(\avar) \wedge
  \symb{select}(\avar,i) = \nul} \\
\text{(4)} & \symb{u}(\avar,i) & \arrz & \symb{u}(\avar,i+1) &
  \constraint{\nul \leq i < \symb{size}(\avar) \wedge
  \symb{select}(\avar,i) \neq \nul} \\
\end{array}
\]
\end{minipage}
\end{tabular}

\noindent
Note that the
bounds checks guarantee termination.
To see that $\symb{strlen}$ does what we would expect it to do, we
want to know that for \emph{valid C strings}, $\symb{strlen}(a)$
returns the first integer $i$ such that $a[i] = 0$.
Following \secshort\ref{subsubsec:classical}, this corresponds to
the equation:
\[
\begin{array}{cc}
\mbox{(LEN.A)} &
\symb{strlen}(\avar) \approx \return(n) \\
&  \constraint{\nul \leq n < \symb{size}(\avar) \wedge
  \bquant{\forall}{i \in \intint{\nul}{n-\one}}{\symb{select}(\avar,i)
  \neq \nul} \wedge \symb{select}(\avar,n) = \nul}
\end{array}
\]
Here, we use bounded quantification, which, as described in
\secshort\ref{subsec:quantify}, can be seen as syntactic sugar for an
additional predicate; the underlying LCTRS could, e.g.,
use a symbol
$\symb{nonzero}$
and replace $\quant{\forall}{i \in
\intint{\nul}{n-\one}}{\symb{select}(\avar,i) \neq \nul}$ by
$\symb{nonzero}(\avar,n)$ in the constraint.

We first use \textsc{Simplification} with rule (1), which gives
(LEN.B):
\[
\begin{array}{cc}
\uu(\avar,\nul) \approx \return(n)\ 
 \constraint{\nul \leq n < \symb{size}(\avar) \wedge
  \bquant{\forall}{i \in \intint{\nul}{n-\one}}{\symb{select}(\avar,i)
  \neq \nul} \wedge \symb{select}(\avar,n) = \nul} \\
\end{array}
\]
We continue with \textsc{Expansion}, again on the left-hand side.
Since the constraint implies that $\nul < \symb{size}(\avar)$, the
error case (2) is unsatisfiable, so we delete it, which leaves:
\[
\begin{array}{clr}
\mbox{(LEN.C)} &
\symb{return}(\nul) \approx \return(n) &
 [\nul \leq n < \symb{size}(\avar) \wedge
  \bquant{\forall}{i \in \intint{\nul}{n-\one}}{\symb{select}(\avar,
  i) \neq \nul}\ \wedge\  \\
& & \symb{select}(\avar,n) = \nul \wedge \nul \leq \nul <
  \symb{size}(\avar) \wedge \symb{select}(\avar,\nul) = \nul] \\
\mbox{(LEN.D)} &
\uu(\avar,\nul+\one) \approx \return(n) &
  [\nul \leq n < \symb{size}(\avar) \wedge
  \bquant{\forall}{i \in \intint{\nul}{n-\one}}{\symb{select}(\avar,i)
  \neq \nul}\ \wedge\  \\
& & \symb{select}(\avar,n) = \nul \wedge \nul \leq \nul <
  \symb{size}(\avar) \wedge \symb{select}(\avar,\nul) \neq \nul] \\
\end{array}
\]

As the constraint of (LEN.C) implies that $n = \nul$,
we can remove (LEN.C) using \textsc{EQ-deletion} and
\textsc{Deletion}. (LEN.D) is simplified with a calculation:
\[
\begin{array}{clr}
\mbox{(LEN.E)} &
\uu(\avar,\one) \approx \return(n) &
  [\nul \leq n < \symb{size}(\avar) \wedge
  \bquant{\forall}{i \in \intint{\nul}{n-\one}}{\symb{select}(\avar,i)
  \neq\nul}\ \wedge\  \\
& & \symb{select}(\avar,n) = \nul \wedge \nul <
  \symb{size}(\avar) \wedge \symb{select}(\avar,\nul) \neq \nul] \\
\end{array}
\]
Which we expand again (once more skipping the $\symb{error}$ case due
to unsatisfiability):
\[
\begin{array}{cl}
\mbox{(LEN.F)} &
\symb{return}(\one) \approx \return(n)\ \ 
  [\nul \leq n < \symb{size}(\avar) \wedge
  \bquant{\forall}{i \in \intint{\nul}{n-\one}}{\symb{select}(\avar,i)
  \neq\nul}\ \wedge\  \\
\multicolumn{2}{r}{
  \symb{select}(\avar,n) = \nul \wedge
  \nul < \symb{size}(\avar) \wedge \symb{select}(\avar,\nul) \neq \nul
  \wedge \nul \leq \one < \symb{size}(\avar) \wedge
  \symb{select}(\avar,\one) = \nul]
} \\
\mbox{(LEN.G)} &
\uu(\avar,\one+\one) \approx \return(n)\ \ 
  [\nul \leq n < \symb{size}(\avar) \wedge
  \bquant{\forall}{i \in \intint{\nul}{n-\one}}{\symb{select}(\avar,i)
  \neq\nul}\ \wedge \\
\multicolumn{2}{r}{
  \symb{select}(\avar,n) = \nul \wedge
  \nul < \symb{size}(\avar) \wedge \symb{select}(\avar,\nul) \neq \nul
  \wedge \nul \leq \one < \symb{size}(\avar) \wedge
  \symb{select}(\avar,\one) \neq \nul]
}\\
\end{array}
\]
The constraint of (LEN.F) implies that $n = \one$, so we easily
remove this equation.  (LEN.G) is simplified using a calculation 
and then expanded again:
\[
\begin{array}{cll}
\mbox{(LEN.H)} &
\symb{return}(\symb{2}) \approx \return(n) &
  \constraint{\dots \wedge \symb{2} < \symb{size}(x) \wedge \symb{select}(x,\symb{2}) = \nul} \\
\mbox{(LEN.I)} &
\uu(\avar,\symb{2}+\one) \approx \return(n) &
  [\nul \leq n < \symb{size}(\avar) \wedge
  \bquant{\forall}{i \in \intint{\nul}{n-\one}}{\symb{select}(\avar,i)
  \neq\nul} \\
& & \multicolumn{1}{r}{\wedge\ \symb{select}(\avar,n) = \nul \wedge
  \nul < \symb{size}(\avar) \wedge \symb{select}(\avar,\nul) \neq \nul
  \wedge \one <\ }\\ & &
  \symb{size}(\avar) \wedge \symb{select}(\avar,\one) \neq \nul
  \wedge
  \symb{2} < \symb{size}(\avar) \wedge \symb{select}(\avar,\symb{2})
  \neq \nul
  ] \\
\end{array}
\]
We drop (LEN.H) easily.  Simplifying (LEN.I) and
reformulating its constraint gives:
\[
\begin{array}{ccr}
\mbox{(LEN.J)}
&
  \uu(\avar,\symb{3}) \approx \return(n)
& \aaa\nul \leq n < \symb{size}(\avar) \wedge
  \bquant{\forall}{i \in \intint{\nul}{n-\one}}{\symb{select}(\avar,i)
  \neq\nul}\ \wedge\ \ \ \  \\
& \multicolumn{2}{r}{
  \symb{select}(\avar,n) = \nul \wedge
    \nul \leq \symb{2} < \symb{size}(\avar) \wedge
    \bquant{\forall}{j \in \intint{\nul}{\symb{2}}}{\symb{select}(
    \avar,j) \neq \nul}
    ]
  } \\
\end{array}
\]
Note that we grouped together the $\neq \nul$ statements into a
quantification, which looks a lot like the other quantification in
the constraint.  Now, let us generalize!  We will use the generalized
equation (LEN.K): $\uu(\avar,k) \approx \return(n)\ 
\constraint{\varphi}$, where:
\[
\begin{array}{cc}
\varphi: &
  k = m + \one \wedge
  \nul \leq n < \symb{size}(\avar) \wedge
  \bquant{\forall}{i \in \intint{\nul}{n-\one}}{\symb{select}(\avar,i)
  \neq\nul}\ \wedge \\
& \symb{select}(\avar,n) = \nul \wedge
  \nul \leq m < \symb{size}(\avar) \wedge
  \bquant{\forall}{j \in \intint{\nul}{m}}{\symb{select}(\avar,j)
  \neq \nul}
\end{array}
\]
Obviously, (LEN.J) is an instance of (LEN.K); we
use \textsc{Expansion} to obtain:
\[
\begin{array}{cll}
\mbox{(LEN.L)} &
  \symb{error} \approx \return(n) &
  \constraint{\varphi \wedge (k < \nul \vee k \geq \symb{size}(\avar))} \\
\mbox{(LEN.M)} &
  \symb{return}(k) \approx \return(n) &
  \constraint{\varphi \wedge \nul \leq k < \symb{size}(\avar) \wedge
  \symb{select}(\avar,k) = \nul} \\
\mbox{(LEN.N)} &
  \uu(\avar,k+\one) \approx \return(n) &
  \constraint{\varphi \wedge \nul \leq k < \symb{size}(\avar) \wedge
  \symb{select}(\avar,k) \neq \nul} \\
\end{array}
\]
The two $\forall$ statements in $\varphi$, together with
$\symb{select}(\avar,n) = \nul$, imply that $m < n$, so $k \leq n$.
Consequently, (LEN.L) has an unsatisfiable constraint and may be
deleted:
$k < \nul$ cannot hold because $k = m+\one$ and $\nul \leq m$, nor
$k \geq \symb{size}(x)$ because $k \leq n$ and $n < \symb{size}(x)$.

For (LEN.M), the two $\forall$ statements together with
$\symb{select}(\avar,k) =
\nul$ imply that $n - \one < k$, so $n \leq k$.  Thus, $n
= k$.  \textsc{EQ-deletion} gives an equation with an unsatisfiable
constraint, which we remove using \textsc{Deletion}.
As for (LEN.N), we use \textsc{Simplification} with a calculation
and reformulate the constraint to obtain:
\[
\begin{array}{cr}
\mbox{(LEN.O)} &
  \uu(\avar,p) \approx \return(n)\ \ \aaa p = k + \one \wedge
  \symb{select}(\avar,n) = \nul \wedge \nul \leq n <
  \symb{size}(\avar)\ \wedge\ \ \ \\
& \bquant{\forall}{i \in \intint{\nul}{n-\one}}{\symb{select}(\avar,i)
  \neq\nul} \wedge \nul \leq k < \symb{size}(\avar)\ \wedge\ \ \ \\
& \bquant{\forall}{j \in \intint{\nul}{k}}{\symb{select}(\avar,j) \neq
  \nul} \wedge
 \text{some constraints on $m$}] \\
\end{array}
\]
This equation is simplified to an equation of the form $\return(n)
\approx \return(n)\ \constraint{\ldots}$ using the induction rule
obtained from (LEN.K); we complete with \textsc{Deletion}.
\end{example}

\begin{example}\label{exa:sum14}
We consider $\Rules_{\summ}$, the LCTRS
with the two correct implementations of the motivating
\exshort\ref{exa:motivating}; that is, rules $\text{(1a)}$--$\text{(1d)}$
and $\text{(4a)}$--$\text{(4e)}$.
The rules are terminating because in the recursive rule
$\text{(1c)}$, $n-i$ decreases in every step and is bounded from
below by $0$, and in
rule $\text{(4c)}$, the value $k$ decreases against the bound $0$.

To prove equivalence of these implementations when the given
length is within the array bounds, we must show that (ARR.A) is an
inductive theorem:
\[
\mbox{(ARR.A)}\ \ \symb{sum1}(a,k) \approx \symb{sum4}(a,k)\ 
  \constraint{\nul \leq k \leq \symb{size}(a)}
\]
The derivation follows a similar pattern as with factorial: we
first simplify the left hand using rule (1a), then expand on the
right
and use the induction rule,
$\symb{sum4}(a,k) \arrz \symb{u}(a,k,\nul,\nul)\ \constraint{\nul
\leq k \leq \symb{size}(a)}$, to eliminate the remaining occurrence
of $\symb{sum4}$.  This gives:
\[
\begin{array}{c}
  \symb{w}(n,\symb{u}(a,k',\nul,\nul)) \approx
  \symb{u}(a,k,r,\one) \\
  \constraint{k' = k - \one \wedge \nul \leq k' < \symb{size}(a)
  \wedge n = \symb{select}(a,k') \wedge r = \nul + \symb{select}(a,
  \nul)} \\
\end{array}
\]
Continuing to expand and simplify,
we easily remove the equations resulting from
rules (1b) and (1d) in every step,
but the recursive rule (1c) causes a divergence.
\[
\begin{array}{r@{\>}c@{\>}ll}
  \symb{u}(a,k,r_2,\symb{3}) & \approx & \symb{w}(n,\symb{u}(a,k',
    r_1,\symb{2})) &
  \constraint{k' = k - \one \wedge \symb{2} < k \leq \symb{size}(a)
    \wedge r_2 = r_1 + \symb{select}(a,\symb{1}) \wedge \ldots} \\
  \symb{u}(a,k,r_3,\symb{4}) & \approx & \symb{w}(n,\symb{u}(a,k',
    r_2,\symb{3}))\!\! &
  \constraint{k' = k - \one \wedge \symb{3} < k \leq \symb{size}(a)
    \wedge r_3 = r_2 + \symb{select}(a,\symb{2}) \wedge \ldots} \\
  \symb{u}(a,k,r_4,\symb{5}) & \approx & \symb{w}(n,\symb{u}(a,k',
      r_3,\symb{4}))\!\! &
  \constraint{k' = k - \one \wedge \symb{4} < k \leq \symb{size}(a)
    \wedge r_4 = r_3 + \symb{select}(a,\symb{3}) \wedge \ldots} \\
\end{array}
\]
We can easily complete after generalizing any of these equations to:
\[
\begin{array}{cr@{\>}c@{\>}ll}
  \mbox{(ARR.GEN):} &
    \symb{u}(a,k,r,i) & \approx & \symb{w}(n,\symb{u}(a,k',r',i')) &
    [k' = k - \one \wedge \nul \leq i' < k \leq \symb{size}(a)\ \wedge\ \  \\
  & \multicolumn{4}{r}{i' = i - \one \wedge r = r' +
    \symb{select}(a,i') \wedge n = \symb{select}(a,k')]} \\
\end{array}
\]
\end{example}

\begin{example}\label{exa:strcpy_proof}
Recall $\symb{strcpy}$ from \exshort\ref{ex:strcpy}
and the analysis rules and equation from
\exshort\ref{ex:strcpytest}.
The inductive proof follows roughly the same lines as the one for
$\symb{strlen}$ and is found automatically by our tool (see
\secshort\ref{sec:experiments}).
We reach a divergence in equations such as:
\[
\begin{array}{cr}
\bullet &
\symb{test}(\avar,n,\symb{v}(a,\avar,\one)) \approx \strue\ 
  [\nul \leq n < \symb{size}(\avar) \wedge n < \symb{size}(a)
  \wedge \symb{select}(\avar,n) = \nul\ \wedge\  \\
& \bquant{\forall}{i \in \intint{\nul}{n-\one}}{\symb{select}(\avar,
  i) = \nul} \wedge \symb{select}(\avar,\nul) \neq \nul \wedge
  \symb{select}(\avar,\nul) = \symb{select}(a,\nul)] \\
\bullet &
\symb{test}(\avar,n,\symb{v}(b,\avar,\symb{2})) \approx \strue\ 
  [\nul \leq n < \symb{size}(\avar) \wedge n < \symb{size}(b)
  \wedge \symb{select}(\avar,n) = \nul\ \wedge\  \\
& \bquant{\forall}{i \in \intint{\nul}{n-\one}}{\symb{select}(\avar,
  i) = \nul} \wedge \symb{select}(\avar,\nul) \neq \nul \wedge
  \symb{select}(\avar,\nul) = \symb{select}(b,\nul)\ \wedge\ \\
& 
  \symb{select}(\avar,\one) \neq \nul \wedge
  \symb{select}(b,\one) = \symb{select}(\avar,\one)
  ] \\
\bullet &
\symb{test}(\avar,n,\symb{v}(c,\avar,\symb{3})) \approx \strue\ 
  \constraint{\dots \wedge
  \symb{select}(c,\symb{2}) \neq \nul \wedge
  \symb{select}(c,\symb{2}) = \symb{select}(\avar,\symb{2})}
\end{array}
\]
To generalize, we abstract $\one,\symb{2},\symb{3}$ by $k \geq \nul$,
collect similar statements into quantifications and remove the
endpoint.  We quickly complete after this \textsc{Generalization} to:
\[
\begin{array}{r}
\symb{test}(\avar,n,\symb{v}(c,\avar,k)) \approx \strue\ [
\nul \leq n < \symb{size}(\avar) \wedge n < \symb{size}(c)
  \wedge \symb{select}(\avar,n) = \nul \wedge \nul \leq k\ \wedge\ \\
\bquant{\forall}{i \in \intint{\nul}{n-\one}}{\symb{select}(\avar,i)
  \neq \nul} \wedge \bquant{\forall}{i \in \intint{\nul}{k-\one}}{
  \symb{select}(\avar,i) \neq \nul}\ \wedge\ \\
  \bquant{\forall}{i \in \intint{\nul}{k-\one}}{\symb{select}(c,i) =
  \symb{select}(\avar,i)}
  ]
\end{array}
\]
\end{example}

\begin{example}\label{ex:fib}
Let us compare two implementations of the Fibonacci function:
\[
\begin{array}{crcll}
\text{(1)} & \symb{fibrec}(x) & \arrz & \nul & \constraint{x \leq \nul} \\
\text{(2)} & \symb{fibrec}(\one) & \arrz & \one &  \\
\text{(3)} & \symb{fibrec}(x) & \arrz & \symb{plus}(\symb{fibrec}(x-\one),
  \symb{fibrec}(x-\symb{2})) & \constraint{x \geq \symb{2}} \\
\text{(4)} & \symb{plus}(\return(x),\return(y)) & \arrz &
  \return(x+y) & \\
\text{(5)} & \symb{fibiter}(x) & \arrz & \symb{iter}(x,\one,\nul,\one) \\
\text{(6)} & \symb{iter}(x,i,y,z) & \arrz & \symb{iter}(x, i + \one, z,
  y + z) & \constraint{x \geq i} \\
\text{(7)} & \symb{iter}(x,i,y,z) & \arrz & \return(y) & \constraint{x < i} \\
\end{array}
\]
Starting with the equation $\symb{fibrec}(x) \approx
\symb{fibiter}(x)\ \constraint{\strue}$ eventually results in a
divergence:
\[
\begin{array}{rcll}
\symb{iter}(n,\symb{3},\symb{1},\symb{2}) & \approx &
  \symb{plus}(\symb{iter}(m,\symb{iter}(m,\symb{2},\symb{1},\symb{1})),
              \symb{iter}(k,\symb{iter}(k,\symb{1},\symb{0},\symb{1})))\ 
  \constraint{m = n - \one \wedge k = n - \symb{2}} \\
\symb{iter}(n,\symb{4},\symb{2},\symb{3}) & \approx &
  \symb{plus}(\symb{iter}(m,\symb{iter}(m,\symb{3},\symb{1},\symb{2})),
              \symb{iter}(k,\symb{iter}(k,\symb{2},\symb{1},\symb{1})))\ 
  \constraint{m = n - \one \wedge k = n - \symb{2}} \\
\symb{iter}(n,\symb{5},\symb{3},\symb{5}) & \approx &
  \symb{plus}(\symb{iter}(m,\symb{iter}(m,\symb{4},\symb{2},\symb{3})),
              \symb{iter}(k,\symb{iter}(k,\symb{3},\symb{1},\symb{2})))\ 
  \constraint{m = n - \one \wedge k = n - \symb{2}} \\
\end{array}
\]
The proof is easily finished by using the following generalization:
\[
\begin{array}{c}
\symb{iter}(n_3,i_3,z_3,z_4) \approx \symb{plus}(\symb{iter}(n_2,
  i_2,z_2,z_3),\symb{iter}(n_1,i_1,z_1,z_2)) \\
\constraint{n_2 = n_3 - \one \wedge n_1 = n_2 - \symb{2} \wedge
i_3 = i_2 + \one \wedge i_2 = i_1 + \one \wedge z_3 = z_1 + z_2 \wedge
z_4 = z_2 + z_3}
\end{array}
\]
Thus, we can show equivalence of functions with wildly different
time complexities ($\symb{fibrec}$'s running time is exponential in
the input
value, whereas that of $\symb{fibiter}$ is linear).
\end{example}

\begin{example}\label{ex:CR}
Finally, we consider
an example~which \citeN[\secshort{}6, item 2]{god:str:08}
describe as beyond their method. Here two recursive
imperative programs calculating $\sum_{i = 1}^n i$ are compared.
The methods from \secshort\ref{sec:transformations}
yield the following LCTRS.
\[
\begin{array}{crcllcrcll}
(1) & \symb{f}(n) & \arrz & \return(n) & \constraint{n \leq \nul} &
(4) & \symb{g}(n) & \arrz & \return(n) & \constraint{n \leq \one} \\
(2) & \symb{f}(n) & \arrz & \symb{u}(n,\symb{f}(n\!-\!1)) &
  \constraint{n > \nul} &
(5) & \symb{g}(n) & \arrz & \symb{v}(n,\symb{g}(n\!-\!1)) &
  \constraint{n > \one} \\
(3) & \symb{u}(n,\return(m)) & \arrz & \return(n\!+\!m) & &
(6) & \symb{v}(n,\return(m)) & \arrz & \return(n\!+\!m) \\
\end{array}
\]
Starting with the equation $\symb{f}(x) \approx \symb{g}(x)\ 
\constraint{\strue}$ eventually results in a divergence:
\[
\begin{array}{crcl}
\mbox{(CR.A):} &
\symb{u}(x,\symb{u}(y_1,\symb{g}(y_2))) & \approx &
  \symb{v}(x,\symb{u}(z_1,\symb{g}(z_2))) \\
& \multicolumn{3}{c}{
  \constraint{x > \one \wedge y_1 = x - \one \wedge z_1 = x - \one
  \wedge y_2 = y_1 - \one \wedge z_2 = z_1 - \one}} \\
\mbox{(CR.B):} &
  \symb{u}(x,\symb{u}(y_1,\symb{u}(y_2,\symb{g}(y_3)))) & \approx &
  \symb{v}(x,\symb{u}(z_1,\symb{u}(z_2,\symb{g}(z_3)))) \\
  \multicolumn{4}{c}{\ \ \ 
  \constraint{x > \one \wedge y_1 = x - \one \wedge z_1 = x - \one
  \wedge y_2 = y_1 - \one \wedge z_2 = z_1 - \one \wedge
  y_3 = y_2 - \one \wedge z_3 = z_2 - \one}} \\
\mbox{(CR.C):} &
  \phantom{ABCDG}
  \symb{u}(x,\symb{u}(y_1,\symb{u}(y_2,\symb{u}(y_3,\symb{g}(y_4))))) &
  \approx & \symb{v}(x,\symb{u}(z_1,\symb{u}(z_2,\symb{u}(z_3,
  \symb{g}(z_4)))))\ \constraint{\dots} \\
\end{array}
\]
As the constraints imply that each $y_i = z_i$, these
equations can all be generalized to $\symb{u}(x,\symb{u}(y,z))
\approx \symb{v}(x,\symb{u}(y,z))\ \constraint{x > \one}$.  Again, the
proof is quickly completed.
\end{example}

\subsection{Soundness and Completeness of Rewriting Induction}
\label{subsec:ri-correctness-proof}

We now give an intuition on how to prove
\thmshort\ref{thm:ri-correctness}.
The complete proof can be found in \appshort\ref{sec:correctness}.
We follow the proof method
of~\cite{sak:nis:sak:sak:kus:09}, which builds on the original proof
idea in~\cite{red:90}.
This uses the relation $\leftrightarrow_\E$, defined by
\[
\begin{array}{rcll}
\subreplace{C}{s\gamma}{p} & \leftrightarrow_\E & \subreplace{C}{t\gamma}{p}\ &
\text{if}\ s \approx t\ \constraint{\varphi} \in \E\ \text{or}\ t \approx
s\ \constraint{\varphi} \in \E,\ \text{and}\ 
\gamma\ \text{respects}\ \varphi \\
\end{array}
\]
for $\E$ a set of equations.
The proof is split up into several auxiliary lemmas.
To start:

% for use in the appendix
\newcounter{key-lemma-section-Counter}
\newcounter{key-lemma-lemma-Counter}
\setcounter{key-lemma-section-Counter}{\value{section}}
\setcounter{key-lemma-lemma-Counter}{\value{theorem}}
\begin{lemma}\label{lem:theorem:alternative}
All equations in $\E$ are inductive theorems \emph{if and only if}
$\lrarr{\E}\ \subseteq\ \lrarrr{\Rules}$ on ground terms (so
  if $\aterm,\bterm$ are ground and $\aterm \lrarr{\E} \bterm$, then
  also $\aterm \lrarrr{\Rules} \bterm$).
\end{lemma}

This is obvious from the definitions.
The next lemma originates in \cite{sak:nis:sak:sak:kus:09}, which
is adapted from \cite{koi:toy:00} and is key to our method.

\begin{lemma}[\cite{sak:nis:sak:sak:kus:09}]
\label{lem:principle}
Let $\arrz_1$ and $\arrz_2$ be binary relations.
We have $\leftrightarrow^*_1$ $=$ $\leftrightarrow^*_2$ if
  (a)
    $\arrz_1$ $\subseteq$ $\arrz_2$,
  (b)
    $\arrz_2$ is well founded, and
  (c)
    $\arrz_2$ $\subseteq$ $\left(\arrz_1 \cdot \arrz^*_2 \cdot
    \leftrightarrow_1^* \cdot \gets^*_2 \right)$.
\end{lemma}
\begin{proof}
It follows from $\arrz_1$ $\subseteq$ $\arrz_2$ that
$\leftrightarrow^*_1$ $\subseteq$ $\leftrightarrow^*_2$.
To show that $\leftrightarrow^*_2$ $\subseteq$ $\leftrightarrow^*_1$,
we prove $\arrz^*_2$ $\subseteq$ $\leftrightarrow^*_1$ by well-founded
induction on $\arrz_2$.
Since the base case $s$ $\arrz^*_2$ $s$ is clear, we suppose $s$
$\arrz_2$ $t$ $\arrz^*_2$ $u$.
As $\arrz_2$ $\subseteq$ $\left(\arrz_1 \cdot \arrz^*_2 \cdot
\leftrightarrow_1^* \cdot \gets^*_2 \right)$ there
must be some $a,b,c$ such that $s$ $\arrz_1$ $a$ $\arrz_2^*$ $b$
$\leftrightarrow_1^*$ $c$ $\gets^*_2$ $t$.
Since $\arrz_1$ $\subseteq$ $\arrz_2$ (i.e., $s$ $\arrz_2$ $a$), we
can apply the induction hypothesis both on $a$ and on $t$, so $a
\leftrightarrow_1^* b \leftrightarrow_1^* c \leftrightarrow_1^* t$
and $t \leftrightarrow_1^* u$.
Therefore, $s$ $\leftrightarrow^*_1$ $u$.
\end{proof}

We will use Lemma~\ref{lem:principle} with $\arrz_\Rules$ for
$\arrz_1$, and $\arrz_{\Rules \cup \H}$ for $\arrz_2$.
Soundness of the algorithm then
follows if $\leftrightarrow_\E$ is included
in $\leftrightarrow_\H^*$ whenever $(\E,\emptyset,\flag) \vdashristar
(\emptyset,\H,\flag')$.

\thmshort\ref{thm:ri-correctness} is the combination of
\lemshort\ref{lem:theorem:alternative} with
Lemmas~\ref{lem:ri-correctness:sound}
and~\ref{lem:ri-correctness:complete} below.

\begin{lemma}\label{lem:ri-correctness:sound}
If $(\E,\emptyset,\flag) \:\vdashristar\: (\emptyset,\H,
\flag')$, then $\lrarr{\E}\;\subseteq\;\lrarr{\Rules}$
holds on ground terms.
\end{lemma}

\begin{proof}[idea]
Let $\parlr{\E}$ denote a \emph{parallel} application of zero or
more $\leftrightarrow_{\E}$ steps.
We first show that $(\E,\H,\flag) \vdashri (\E',\H',\flag')$ by any
rule other than \textsc{Completeness} implies both (a)
$
 \parlr{\E}
 {\subseteq\:}
 \left(\arrz^*_{\Rules\cup\H'} \cdot
 \parlr{\E'}
 \cdot \gets^*_{\Rules\cup\H'}\right)
$
on ground terms, and (b) if
$
  \arrz_{\Rules \cup \H} {\subseteq\:} 
 (\arrz_{\Rules}\!\cdot\!\arrz^*_{\Rules\cup\H}\!\cdot\!
 \parlr{\E}\!
 \cdot\!\gets^*_{\Rules\cup\H}
 )
$
on ground terms, then
$
 \arrz_{\Rules \cup \H'} {\subseteq\:} 
 (\arrz_{\Rules}\!\cdot\!\arrz^*_{\Rules\cup\H'}\!\cdot\!
 \parlr{\E'}\linebreak
 \cdot\!\gets^*_{\Rules\cup\H'}
 )
$
on ground terms.
We show this by considering how each step alters $\E$ and
$\H$, which we use to see that
$(\E,\H,\flag) \vdashristar (\E',\H',\flag')$
implies (a) and (b), by induction on the total
number of $\vdashrinoblank$-steps in the derivation (counting also the
hidden steps inside \textsc{Completeness}).
Thus, if $(\E,\emptyset,\flag) \vdashristar (\emptyset,\H,\flag')$
then
$\arrz_{\Rules \cup \H}\ \subseteq\ \arr{
\Rules} \cdot \arrz^*_{\Rules \cup \H} \cdot \leftarrow^*_{\Rules \cup \H}$,
so we can apply Lemma~\ref{lem:principle} to conclude that
$\arrz_\Rules$ and $\arrz_{\Rules \cup \H}$ are the same (on ground
terms).
Therefore, and by property (a),
$\leftrightarrow_\E\ \subseteq \parlr{\E} \subseteq\
\arrz^*_{\Rules \cup \H} \cdot
\leftarrow^*_{\Rules \cup \H}\ \subseteq\ 
\leftrightarrow^*_\Rules$.
\end{proof}

\begin{lemma}\label{lem:ri-correctness:complete}
If $\Rules$ is confluent
and $(\E,\emptyset,\complflag) \vdashristar
\bot$, then $\lrarr{\E}\;\not\subseteq\;\lrarr{\Rules}$
holds
on ground terms.
\end{lemma}

\begin{proof}[idea]
By confluence and termination together, we can speak of \emph{the}
normal form $\cterm\!\downarrow_\Rules$ of any term $\cterm$; if
$\cterm$ is ground, then by quasi-reductivity its normal form is a
ground constructor term.
A property of confluence is that if $\dterm \leftrightarrow_\Rules^*
\eterm$, then
$\dterm\!\downarrow_\Rules = \eterm\!\downarrow_\Rules$.
So, it suffices to prove that for some $\aterm \approx \bterm\ 
\constraint{\varphi} \in \E$ there is a ground constructor
substitution $\gamma$ which respects this equation, such that
$\aterm\gamma \neq \bterm\gamma$.
We first note that if $(\E,\H,\complflag) \vdashri \bot$, then
this can only be a \textsc{Disprove} step; in all cases the equation
that causes the disproof has this property.
We also see, by examining the various inference rules, that if
$(\E_1,\H_1,\complflag) \vdashri (\E_2,\H_2,\complflag)$ and
both (a)
$\rightarrow_{\Rules \cup \H_1}\ \subseteq\ \rightarrow_\Rules
\cdot \rightarrow_{\Rules \cup \H_1}^* \cdot \parlr{\E}
\cdot \leftarrow_{\Rules \cup \H_1}^*$ and (b)
$\leftrightarrow_{\E_1} \cup \leftrightarrow_{\H_1}\ \subseteq\ 
\leftrightarrow_\Rules^*$ on ground terms, then also
$\leftrightarrow_{\E_2} \cup \leftrightarrow_{\H_2}\ \subseteq\ 
\leftrightarrow_\Rules^*$ on ground terms.
In a reduction $(\E,\emptyset,\complflag) = (\E_1,\H_1,\flag_1)
\vdashri \cdots \vdashri (\E_n,\H_n,\flag_n) \vdashri \bot$, we may
assume (a) by the observations in the proof
of Lemma~\ref{lem:ri-correctness:sound}, and (b) is inductively
preserved.
As $\leftrightarrow_{\E_n \cup \H_n}$ cannot be included in
$\leftrightarrow_\Rules^*$, therefore
neither can $\leftrightarrow_\E\ =\ \leftrightarrow_{\E_1 \cup \H_1}$.
We complete by Lemma~\ref{lem:theorem:alternative}.
\end{proof}

\section{Generalizing Equations}\label{sec:lemma-gen}

\emph{Divergence},
as encountered in all examples in
\secshort\ref{sec:ri}, is very common in inductive theorem proving:
we often
need a more general claim to obtain a stronger induction hypothesis.
As it is not always easy to find a suitable generalization, the
(automatic) generation of suitable generalizations, and lemma
equations for \textsc{Postulate}, has been extensively investigated
\cite{bun:bas:hut:ire:05,kap:sak:03,kap:sub:96,nak:nis:kus:sak:sak:10,urs:kou:04,wal:96}.

Also for transformed procedural programs, we will certainly need
a large variety of lemma generation techniques to handle most
practical cases.  We start the work by proposing two methods to
generalize equations, specialized to deal with constraints.

\subsection{Generalizing Initializations}\label{subsec:generalise}

Our first and most important technique fundamentally relies on
the constrained setting.  Although it may appear deceptively simple
(at its core, the generalization just drops a part of the constraint),
it is particularly effective for dealing with loops.

\begin{example}\label{exa:generalize}
Let us state the rules of $\Rules_{\fact}$ from
\exshort\ref{ex:running}
in an alternative way: we replace rule (1)
$
\symb{factiter}(x) \arrz \symb{iterm}(x,\one,\one)
$
by (1\Prime):
$
\symb{factiter}(x) \arrz \symb{iter}(x,
  v_1,v_2)\ \constraint{v_1 = \one \wedge v_2 = \one}
$.
That is, the values corresponding to \emph{initializations}
\texttt{int z = 1; int i = 1;} are moved into the constraint.
Evidently, this change does not alter the relation $\arr{\Rules}$.

Now consider what happens if we use the same steps as in
\exshort\ref{ex:running}--\ref{ex:running:postulate}.
The resulting proof has the same shape, but with more complex
equations.  Some instances:
\[
\begin{array}{cl}
  \mbox{(FCT.B\Prime)}: & \symb{iter}(n,v_1,v_2) \approx
    \symb{factrec}(n)\ \ 
      \constraint{n \geq \one \wedge v_1 = \one \wedge v_2 = \one}
  \\
  \mbox{(FCT.D\Prime)}: & \symb{iter}(n,z_1,i_1) \approx
    \symb{factrec}(n)\ \ 
      \constraint{n \geq \one \wedge v_1 = \one \wedge v_2 = \one
        \wedge z_1 = v_1 * v_2 \wedge i_1 = v_2 + \one}
    \\
  \mbox{(FCT.J\Prime)}: & \symb{mul}(n,\symb{iter}(m,z_1,i_1))
    \approx \symb{iter}(n,z_2,i_2)\ 
    [n > \one \wedge v_1 = \one \wedge v_2 = \one \wedge m = n -
    \one\ \wedge \\
  & \multicolumn{1}{r}{
      z_1 = v_1 * v_2 \wedge i_1 = v_2 + \one \wedge
      z_2 = z_1 * i_1 \wedge i_2 = i_1 + \one]
    } \\
  \mbox{(FCT.K\Prime)}: & \symb{mul}(n,\symb{iter}(m,z_2,i_2))
    \approx \symb{iter}(n,z_3,i_3)\ 
      [n > \one \wedge v_1 = \one \wedge v_2 = \one
        \wedge m = n - \one\ \wedge \\
  \multicolumn{2}{r}{
      z_1 = v_1 * v_2 \wedge i_1 = v_2 + \one \wedge
      z_2 = z_1 * i_1 \wedge i_2 = i_1 + \one \wedge
      z_3 = z_2 * i_2 \wedge i_3 = i_2 + \one]} \\
\end{array}
\]
Here the left- and right-hand side of the divergent equations
(FCT.J\Prime) and (FCT.K\Prime) are the same modulo variable
renaming, while the constraint grows.
Essentially, we keep track of parts of the history of an equation in
its constraint.
We generalize (FCT.J\Prime) by dropping all clauses
$v_i = q_i$
where  $v_i$ is an
initialization variable and $q_i$ a value.
We rename
the variables $v_i$ (as they no longer play a special
role) and obtain:
\[
 \begin{array}{lc}
  \mbox{(FCT.M\Prime)}: & \symb{mul}(n,\symb{iter}(m,z_1,i_1))
    \approx \symb{iter}(n,z_2,i_2) \\
  & \constraint{n > \one \wedge m = n - \one \wedge
  z_1 = x_1 * x_2 \wedge i_1 = x_2 + \one \wedge
    z_2 = z_1 * i_1 \wedge i_2 = i_1 + \one} \\
 \end{array}
\]
We can complete the derivation with (FCT.M\Prime) as we did with
(FCT.M) before.
\end{example}

Formally, what we do here is threefold.  First, we alter the set of rules
we work from.

\begin{definition}[Initialization-free Rules]
Given $\Rules$, fix a set $\setivars \subsetneq \setvars$ of variables
not occurring in $\Rules$.  The \emph{initialization-free} counterpart
$\Rules'$ of $\Rules$ is obtained by stepwise replacing any rule
$\ell \to C[f(r_1,\dots,r_i,\dots,r_n)]\ \constraint{\varphi}$ with
$f \in \Defineds$ and
$r_i$ a value by $\ell \to C[f(r_1,\dots,v,\dots,r_n)]\ \constraint{
\varphi \wedge v = r_i}$ for some fresh $v \in \setivars$, until no
such rules remain.
\end{definition}

Then, to apply \textsc{Generalization} to an equation $s \approx t\ 
\constraint{\varphi_1 \wedge \dots \wedge \varphi_n}$
we choose
\[s \approx t\ \constraint{\bigwedge \{\varphi_i \mid 1 \leq
i \leq n \wedge \varphi_i\ \text{does not have the form}\ v = u\ 
\text{with}\ v \in \setivars\ \text{and}\ u \in \Values\}}\]
as the generalized equation
and rename its variables in $\setivars$
to variables in $\setvars$.

Finally, we
restrict the \textsc{Simplification} and \textsc{Expansion} steps to
preserve initialization
constraints
throughout the proof.  The strategy we use in \ctrl---which includes
an approach to handle the
$v \in \setivars$---is described in \secshort\ref{sec:strategy}, but
in particular:
\begin{itemize}
\item When
  we rename rules for use in \textsc{Simplification} or
  \textsc{Expansion}, the renaming must respect membership in
  $\setivars$, i.e., if $x$ is renamed to $y$, then $y \in \setivars$
  iff $x \in \setivars$.
\item In $\sim$-steps,
  any conjuncts $v = n$ are ignored: to simplify
  $s \approx t\ \constraint{\varphi \wedge v_1 = n_1 \wedge \dots
  \wedge v_k = n_k}$, we modify $s \approx t\ \constraint{\varphi}$,
  obtaining $s' \approx t'\ \constraint{\varphi'}$, and continue with
  $s' \approx t'\ \constraint{\varphi \wedge v_1 = n_1 \wedge \dots
  \wedge v_k = n_k}$.  Thus we avoid, e.g., translating
  $\afun(v_i)\ \constraint{v_i = \nul}$ back to $\afun(\nul)\ 
  \constraint{\strue}$.
\end{itemize}

\subsection{Abstracting Equivalent Recursive Calls}\label{subsec:abstract}

Our second generalization technique aims to remove
\emph{recursive} symbols where possible.

\begin{definition}\label{def:recursive}
For symbols $f,g$, let $f \leadsto g$ if there is a rule
$f(\seq{\ell}) \arrz r\ \constraint{\varphi}$ with $g$ a symbol in
$r$.  A symbol $f$ is \emph{recursive} if it is a defined symbol with
$f \leadsto^+ f$.
\end{definition}

The key idea is to identify equivalent occurrences of a recursive
call on both sides of an equation
and to replace them by a variable.
For example, $\symb{g}(x) + \symb{f}(y) \approx \symb{f}(z) +
\symb{g}(x)\ \constraint{y \geq z \land y \leq z}$ is replaced
by $a + b \approx b + a\ \constraint{\strue}$ because
for values $k,n,m$: if $n \geq m \land m \leq n$ holds, then both
$\symb{g}(k)$ and $\symb{g}(k)$,
as well as
$\symb{f}(n)$ and $\symb{f}(m)$, are syntactically equal.

\begin{definition}
\label{def:rec-abs}
A \emph{recursion-abstraction} of
$s \approx t\ 
\constraint{\varphi}$ is any equation of the form $C[x_1,\dots,x_n]
\approx D[x_{i_1},\dots,x_{i_n}]$ such that
(a) $s = C[s_1,\dots,s_n]$ and $t = D[t_{i_1},\dots,t_{i_n}]$ for some
  $\vec{s},\vec{t}$;
(b) $\{i_j \:|\: 1 \leq j \leq n \} = \{ 1,\dots,n\}$;
(c) neither $C$ nor $D$ contain recursive symbols;
(d) each $s_j$ and $t_j$ has a recursive symbol as root symbol;
(e) for $1 \leq i \leq n$ and all ground substitutions $\gamma$ which
  respect $s \approx t\ \constraint{\varphi}$: $s_i\gamma = t_i
  \gamma$;
(f) $x_1,\dots,x_n$ are fresh w.r.t.~$s,t$.
\end{definition}

For a given equation, at most one choice of $C,D$ is possible, and
there are only finitely many permutations $i_1,\dots,i_n$.
Requirement (e) can be
checked
by confirming that an equation $s_j \approx t_j\ 
\constraint{\varphi}$ is removed by the combination of
\textsc{EQ-deletion} and \textsc{Deletion}.

\begin{example}
In \exshort\ref{ex:CR}, we find an abstraction for (CR.A) by choosing
$C = \symb{u}(x,\symb{u}(y_1,\Box))$, $D = \symb{v}(x,\symb{u}(
z_1,\Box))$, $s_1 = \symb{g}(y_2)$ and $t_1 = \symb{g}(z_2)$.
Requirement (e) holds:
if we write $\varphi$ for
the constraint of (CR.A), \textsc{EQ-deletion} on $\symb{g}(y_2)
\approx \symb{g}(z_2)\ \constraint{\varphi}$ produces the
unsatisfiable constraint $\varphi \wedge y_2 \neq z_2$.  Thus,
we generalize the equation to $\symb{u}(x,\symb{u}(y_1,a)) \approx
\symb{v}(x,\symb{u}(z_1,a))\ \constraint{\varphi}$, which is
$\equalgen$-equivalent to the equation
used in \exshort\ref{ex:CR}.
\end{example}

\begin{example}
Given $\symb{g}(x) + \symb{f}(y) \approx \symb{f}(z) +
\symb{g}(x)\ \constraint{y \geq z \land y \leq z}$,
let $C$ and $D$ be $\Box +
\Box$, $s_1 = \symb{g}(x), s_2 = \symb{f}(y), t_1 = \symb{g}(x), t_2 =
\symb{f}(z)$, $i_1 = 2$ and $i_2 = 1$.
We must see
that for all $\gamma$ which respect $y \geq z \wedge y \leq z$:
$\symb{g}(x)\gamma = \symb{g}(x)\gamma$ and
$\symb{f}(y)\gamma = \symb{f}(z)\gamma$.
Both are easily confirmed, so we generalize to $x_1 + x_2 \approx x_2 + x_1\ 
\constraint{y \geq z \land y \leq z} \equalgen a + b \approx b + a\ \constraint{\strue}$
as suggested.
\end{example}

One can see this generalization heuristic as an instance
of the inference rule \textsc{Specialization} by \citeN{aub:79} for
unconstrained explicit induction; restricted to recursive function calls and
combined with \textsc{Substitutivity of Equality} from the same paper.
Here we lift equality from syntactic level to semantic level
in SMT-theories.

\subsection{Discussion}\label{subsec:discussion}

The first method to generalize
equations is strong (\secshort\ref{subsec:generalise}), but only
for equations of a specific form: we can only use the method
if the equation part of the divergence has the same shape every time.
This is the case for $\fact$, because the rule that causes the
divergence has the form $\symb{iter}(\avar_1,\ldots,\avar_n) \arrz
\symb{iter}(r_1,\ldots,r_n)\ \constraint{\varphi}$, preserving its
outer shape.

In general, the method is most likely to be successful
for the analysis of
tail-recursive functions (with accumulators), such as
those obtained from procedural programs.
We can also handle mutually recursive functions, like
$\symb{u}(\avar_1,\ldots,\avar_n) \arrz \symb{w}(r_1,\ldots,r_m)\ 
\constraint{\varphi}$ and $\symb{w}(\bvar_1,\ldots,\bvar_m) \arrz
\symb{u}(q_1,\ldots,q_n)\ \constraint{\psi}$.
It is not suitable for analyzing systems with (only)
non-tail-recursion, however.  Here, the second technique
comes in (\secshort\ref{subsec:abstract}).  Although
we do not claim that this technique is very powerful, it is often
useful to eliminate apparently simple equations.  It is also
straightforward to use in practice.

Note that $\symb{strlen}$ and $\symb{strcpy}$ also have the required
tail-recursive form to successfully use the first generalization
method.
However, here we additionally have to collect multiple clauses into a
quantification before generalizing, as
with equation (LEN.I).

One may wonder if
generalizing initializations loses
too much; e.g., when removing $v_i = \one$,
we also forget that $v_i \geq \nul$.
However, this is usually not an issue: if a rule
is constrained with $v_i \geq \nul$, this clause is
added to the constraint of the equation via
\textsc{Expansion} before we generalize, as in the expansion from (LEN.B).
There \emph{is} a possible issue with losing information
on the relations \emph{between} variables; more
on this in \secshort\ref{subsec:experiments}.

\section{Implementation}
\label{sec:implementation}
\label{sec:experiments}

The method for program verification in this
paper can be broken down into two parts:
\begin{enumerate}
\item\label{imp:transform}
  transforming a procedural program into an LCTRS;
\item\label{imp:prove}
  proving correctness properties on this LCTRS using rewriting
  induction.
\end{enumerate}

\noindent
An initial implementation of part~\ref{imp:transform}, limited to
functions on integers and one-dimensional statically allocated
integer arrays
is available at:
\begin{center}
\url{http://www.trs.css.i.nagoya-u.ac.jp/c2lctrs/}
\end{center}
In future work, we hope to extend this implementation to include the
remaining features discussed in \secshort\ref{sec:transformations}
and \appshort\ref{app:pointers} such as floating points and
explicit pointers.

\medskip
Part~\ref{imp:prove}, the core method on LCTRSs, has been implemented
in our tool \ctrl\ \cite{kop:nis:15}, along with
basic techniques to
verify termination, confluence and quasi-reductivity.
To handle constraints, the tool is coupled both with a small
internal reasoner and the external SMT solver \zeethree\ \cite{z3}.
\zeethree\ is equipped to prove
\emph{unsatisfiability} as well as satisfiability, which is essential
for testing \emph{validity}.

The internal reasoner serves to detect satisfiability or validity
of simple statements quickly, without a call to an SMT solver, and to
preprocess certain kinds of queries which arise often
(e.g., for termination proving by polynomial interpretations, we
preprocess queries with $\exists\forall$-quantifier prefix to
$\exists$-queries).
The reasoner is also used to simplify the constraints of
equations, by for instance combining statements into quantifications
(which is an essential part of the derivations for functions like
$\symb{strlen}$ or $\symb{strcpy}$).

We also
translate our array formulas into the SMT-LIB
array format as discussed in
\secshort\ref{subsec:transition:array},
encoding an array as a function from $\Z$ to $\Z$ with a second
variable for its size.

The latest version of \ctrl\ (tool paper: \cite{kop:nis:15})
can be downloaded at:
\begin{center}
  \url{http://cl-informatik.uibk.ac.at/software/ctrl/}
\end{center}

\subsection{Strategy}
\label{sec:strategy}

Let us discuss the various choices made during a derivation with
rewriting induction.

\subsubsection{What inference rule to apply}
\label{subsubsec:rule_list}
\ctrl\ always selects the
first rule (combination) from:
\begin{enumerate}
\item \textsc{EQ-deletion} (if applicable) immediately followed by
  \textsc{Deletion};
\item \textsc{Disprove}, but without the limitation to
  \complflag\ proof states;
\item \textsc{Constructor};
\item \textsc{Simplification};
\item a limited form of \textsc{Expansion};
\item \textsc{Generalization} using a recursion-abstraction;
\item \textsc{Generalization} of all initialization variables $v_i \in
  \setivars$ at once;
\item the full form of \textsc{Expansion}.
\end{enumerate}

\subsubsection{Generalization and backtracking}
Core to the rewriting induction process is a backtracking
mechanism.
Every proof state $(\E,\H)$ keeps track of all ancestor states on
which \textsc{Generalization} was applied; a state is
$\complflag$ if it has no such ancestors.
The completeness restriction on \textsc{Disprove} is dropped;
however, when \textsc{Disprove} succeeds on an incomplete state, the
prover does not conclude failure, but instead backtracks to the most
recent ancestor and continues without (immediately) generalizing.
Typically, if a \textsc{Generalization} is attempted too soon in the
proof and results in an unsound equation, this can be derived very
quickly, which allows \ctrl\ to conclude failure of the
\textsc{Generalization} step and to move on to the remaining
expansions.

\begin{example}\label{ex:strategy:backtrack}
Following
\exshort\ref{exa:strlen} (but altered with initialization-free
rules), our strategy moves from ($\{$(LEN.A\Prime)$\},\emptyset$)
to ($\{$(LEN.B\Prime)$\},\emptyset$) as before.  But here,
``restricted expansion'' does not apply (as we will see in
\exshort\ref{ex:strategy:expansion}), nor is there a
recursion-abstraction.  So we generalize the initializations,
obtaining:
\[
\left(
\begin{array}{cc}
\left\{
\begin{array}{cc}
\mbox{(BGEN)} &
\uu(\avar,r_0) \approx \return(n) \\
\multicolumn{2}{c}{\phantom{ABC}
  \constraint{\nul \leq n < \symb{size}(\avar) \wedge
  \bquant{\forall}{i \in \intint{\nul}{n-\one}}{\symb{select}(\avar,i)
  \neq \nul} \wedge \symb{select}(\avar,n) = \nul}} \\
\end{array}
\right\},
&
\begin{array}{c}
\emptyset
\end{array}
\end{array}
\right)
\]
We store ($\{$(LEN.B\Prime)$\},\emptyset$) as an ancestor state of
($\{$(BGEN)$\},\emptyset$).  The only option now is \textsc{Expansion}.
Expanding in the left-hand side gives three equations, including:
\[
\begin{array}{c}
\return(r_0) \approx \return(n) \\
  \constraint{\nul \leq n < \symb{size}(\avar) \wedge
  \bquant{\forall}{i \in \intint{\nul}{n-\one}}{\symb{select}(\avar,i)
    \neq \nul} \wedge \symb{select}(\avar,n) = \nul} \\
\end{array}
\]
\textsc{Constructor} gives $r_0 \approx n\ \constraint{\varphi}$,
where $\varphi$ is satisfied by, e.g.,
$[r_0:=\nul,n:=\one,x:=\mathtt{[\one, \nul]}]$; by \textsc{Disprove},
we obtain $\bot$.  However, the state is incomplete as it
has an ancestor stored.
Thus, we backtrack to ($\{$(LEN.B\Prime)$\}, \emptyset$), and
continue with full expansion.
\end{example}

The \textsc{Completeness} rule is implemented via the same mechanism:
if $(\E,\H)$ has a most recent ancestor $(\E',\H')$ with $\E \subseteq
\E'$, then $(\E',\H')$ is dropped from the ancestor list. If a
\textsc{Disprove} succeeds when the list is empty, we conclude failure,
resulting in NO if the system is confluent and MAYBE otherwise.

\begin{example}\label{ex:strategy:completeness}
In \exshort\ref{ex:running:generalize}, we would add
($\{$(FCT.J)$\}$, $\{$(FCT.D$^{-1}$)$\}$) to the list of ancestors
when generalizing (FCT.J) to (FCT.M).  Once (FCT.T) is removed in
\exshort\ref{ex:running:complete}, we are allowed to remove this state
from the list (although since the proof is
finished at that point, it is not really necessary in this example).
\end{example}

Aside from backtracking due to \textsc{Disprove},
there is a second backtracking mechanism:
although
\textsc{Simplification} and \textsc{Expansion} prioritize choices (for
positions and rules) most likely to result in success, sometimes the
first choice does not work out, but the second one does. Thus,
\ctrl\ uses an evaluation limit: when a path has
more than $N$ expansions, it is aborted, and the prover
backtracks to a direct parent.  \ctrl\ starts with $N=2$
and increases this limit if it does not result in a successful
proof or disproof.

\subsubsection{Simplification}
For \textsc{Simplification}, there are
three choices to be made: the position, the rule and how to
instantiate fresh variables in that rule.

For the position, \ctrl\ selects the leftmost, innermost
position where a rule matches.
This prevents a need to reevaluate a term after its subterms change.

For the rule, rules in $\H$ are attempted before rules in
$\Rules$; if a rule leads to a (presumed) divergence, the backtracking
mechanism ensures that the next one is tried.

In some cases---in particular for induction rules---the right-hand
side and perhaps the constraint of a
rule contain variables not occurring in the left-hand side, such as
(FCT.M$^{-1}$) in \exshort\ref{ex:running:postulate}
and (LEN.K) in \exshort\ref{exa:strlen}.
Here, \ctrl\ tries to
instantiate as many variables in the rule by variables in the
equation as possible.
To rewrite an equation $\aterm
\approx \bterm\ \constraint{\varphi_1 \wedge \cdots \wedge \varphi_n}$
at the root of $\aterm$ with a rule $\ell \arrz r\ \constraint{\psi_1
\wedge \cdots \wedge \psi_m}$, we first determine a $\gamma$ such
that $\aterm = \ell\gamma$ and $\gamma(v_i) = v_i$ for all
$v_i \in \setivars$. If any $\psi_i$ has the form $C[\avar,
\bvar_1,\dots,\bvar_k]$ with $\avar \in \domain(\gamma)$ and all
$\bvar_i \notin \domain(\gamma)$, and there is some
$\varphi_j =
C\gamma[\gamma(\avar),\aterm_1,\ldots,\aterm_k]$, then we extend
$\gamma$ with $[\bvar_i:=\aterm_i]$ for all $i$.  This process is
finite and corresponds to the choices for
the equations (FCT.S) and (LEN.O).
Other variables are chosen fresh.

Note: if some rule \emph{can} be applied, but the backtracking
mechanism aborts all attempts, \ctrl\ backtracks to the parent
state rather than continuing with \textsc{Expansion}.  This is
because testing suggests that allowing \textsc{Expansion} to be
applied on terms not in $\Rules$-normal form is generally not
effective and causes an explosive number of states.

\subsubsection{Expansion}
To categorize \textsc{Expansion}s for step (5) and (8) of
\secshort\ref{subsubsec:rule_list}, we analyze \emph{recursion}.
Let $\afun \succsim \bfun$ if $\afun \leadsto^* \bfun$ (following
\defshort\ref{def:recursive}), and let $\afun \succ \bfun$ if $\afun \succsim
\bfun$ and $\bfun \not\succsim \afun$.
Symbols are split into five categories:
\emph{constructors}, \emph{calculation symbols},
\emph{non-recursive defined symbols}, \emph{tail-recursive
symbols}, and
\emph{non-tail-recursive symbols}.
A recursive symbol is \emph{tail-recursive} if its only
defining rules (in $\Rules$) have either the form $f(\ell_1,
\dots,\ell_k) \to \avar\ \constraint{\varphi}$ with $\avar$ a
variable, or the form $f(\ell_1,\dots,\ell_k) \to
g(r_1,\dots,r_m)\ \constraint{\varphi}$ with $f \succ h$
for all $h$ in any $r_i$.
Recursive functions not of this form are
\emph{non-tail-recursive}.

An expansion of $\aterm \simeq \bterm\ \constraint{\varphi}$ at
position $p$ of $\aterm$, with $\subpos{s}{p} = f(\vec{\cterm})$,
is \emph{restricted}---so eligible for step (5)---if (a) $f$ is
non-recursive, or (b) the induction rule $\aterm \arrz \bterm\ 
\constraint{\varphi}$ is admissible and either $f$ is tail-recursive
and $(\FV(s) \cup \FV(t)) \cap \setivars = \emptyset$, or $f$ is
non-tail-recursive.
The induction rule is added only in case (b).
Here, a rule $\rho\colon g(\ell_1,\dots,\ell_k) \arrz
r\ 
\constraint{\varphi}$ is \emph{admissible} if $\Rules \cup \H \cup
\{\rho\}$ is terminating \emph{and} $g \in \Defineds$: we do not add
rules with a constructor or calculation symbol as root symbol $g$, as
this makes it harder to prove termination, which may prevent the
addition of more promising rules later on.

For \emph{unrestricted} expansion, an induction rule is added
when admissible, unless $f$ is tail-recursive.
The unrestricted tail-recursive case concerns rules such as those
got from (FCT.J), (LEN.B), and (LEN.E),
which---testing suggests---are typically not useful. Omitting
them lets \ctrl\ skip many termination checks,
a bottleneck in the process.
Similarly, we do not add induction rules when expanding at
a non-recursive position.

\begin{example}\label{ex:strategy:expansion}
In \exsshort\ref{ex:running}--\ref{ex:running:eqdelete}, the first
expansion occurs in ($\{$(FCT.D$'$)$\},\emptyset,\linebreak
\complflag$), in the right-hand side.  This is not an arbitrary
choice: restricted expansion cannot be used with the
tail-recursive symbol $\symb{iter}$, only
the
non-tail-recursive symbol $\symb{factrec}$.
Then, our strategy closely follows the given derivation.
When we reach ($\{$(FCT.J)$\}$, $\{$(FCT.D$^{-1}$)$\}$,
$\complflag$), restricted expansion is impossible, so we generalize
instead.  After this, an
expansion on the $\symb{iter}$ symbol on either side \emph{is}
restricted.  We can complete the example without backtracking or
using unrestricted \textsc{Expansion}.
\end{example}

For the position to expand at, we follow the same
approach as for \textsc{Simplification}, trying all suitable
positions via the backtracking mechanism.
However, rather than a pure leftmost innermost choice, in the
restricted case (step (5) of \secshort\ref{subsubsec:rule_list}),
we prioritize the more
promising equations by first attempting expansions on a
non-tail-recursive symbol, then those with a non-recursive defined symbol, and
finally those with a tail-recursive one.  In the unrestricted
setting, we follow the leftmost innermost strategy.

Testing shows that this method is very effective for
proving equivalence between a
non-tail-recursive and a
tail-recursive function (as
needed for
equivalence
of a recursive and an iterative C function).
The examples of \secshort\ref{sec:ri} show its effect:
by
eliminating
the
non-tail-recursive functions early on, we are more
likely to arrive at a diverging sequence where all equations have the
same outer shape; e.g., $(\symb{u}(\seq{q}_i) \approx C[\symb{u}(
\seq{v}_i),\symb{u}(\seq{w}_i)] \mid i \in \N)$.
As observed in \secshort\ref{subsec:discussion},
this is ideal for our generalization method.

Following an \textsc{Expansion}, we first process those new
equations in
$\Expdapp{\aterm}{\bterm}{\varphi}{p}$
whose multiset of new symbols is smallest in the recursion
order $\succ$.
Thus, for example in \exshort\ref{ex:running:expand},
after expanding (FCT.D)
we
consider
(FCT.E)---which has new symbols $\{\symb{return},\one\}$---before
(FCT.F)---with  new symbols
$\{\symb{mul},\symb{factrec},\symb{-},\one\}$---since
$\symb{factrec} \succ \symb{return},\one$.
Intuitively, ``smaller'' terms
are ``closer'' to the end of a function,
which allows \textsc{Disprove} to succeed faster
and thus aids the backtracking mechanism.

\subsubsection{Constraint Modification}\label{subsubsec:modif}
Following \textsc{Simplification} and \textsc{Expansion},
\ctrl\ modifies the constraint, as follows.  First,
when a clause $\varphi_i$ in the constraint $\varphi_1
\wedge \dots \wedge \varphi_n$ is implied by the others,
it is removed unless it is a definition clause $v_i = n$.  We
also remove clauses for variables which do not play a role.
Most importantly, \ctrl\ introduces
\emph{ranged quantifications} $\quant{\forall}{
\avar \in \intint{k_1}{k_n}}{\varphi(\avar)}$ whenever possible,
provided $n \geq 3$ (to lessen the effect of coincidence).
Formally, we could describe our approach as follows:

\begin{center}
if $\varphi$ has clauses $C[a], C[b], C[c]$ for some context $C$
and variables $a,b,c$, as well as $b = f(a)$ and $c = f(b)$,
then we may replace the $C$-clauses by $\quant{\forall}{i \in
  \intint{0}{2}}{C[f^i(a)]}$
\end{center}

\noindent
This
is more general than what we use; it
lets us for instance replace $a[i] = 0 \wedge a[j] = 0 \wedge a[k] = 0
\wedge j = i + 2 \wedge k = j + 2$ by $\quant{\forall}{l \in
  \intint{0}{2}}{a[i + 2 \cdot l] = 0}$, for $f = \lambda x.x+2$.
But to represent $f^i$, \ctrl\ must know the relevant theory.
Therefore, we currently
only consider clauses where $b = a + 1$ and $c = b + 1$, and
replace them by $\quant{\forall}{i \in \intint{a}{c}}{C[i]}$.
Since we implement loop counters as integers, this still
captures a large group of constraints.

After $\forall$-introduction, if a boundary of the
range ($0$ and $2$ in the example) is some $v_i \in
\setivars$, we replace it by the value it is defined as,
to avoid generalizing
the starting point of a
quantification.  Thus, e.g.,
$\bquant{\forall}{j \in \intint{v_0}{k}}{\symb{select}(x,j) \neq 0}
\wedge v_0 = 0 \wedge i = v_0 + 1 \wedge k = i + 1$ is replaced by
$\bquant{\forall}{j \in \intint{0}{k}}{\symb{select}(x,j) \neq 0}
\wedge v_0 = 0 \wedge i = v_0 +1 \wedge k = i + 1$.

\subsubsection{Non-Confluence}
Our strategy is admittedly unfair
to non-confluent systems: a successful application
of \textsc{Disprove} is treated as evidence of an unsound equation,
which is not the case without confluence: the non-confluent (LC)TRSs
$\Rules = \{ \symb{f} \to \symb{a}, ~ \symb{f} \to \symb{b}, ~
             \symb{g} \to \symb{a}, ~ \symb{g} \to \symb{b} \}$
along with the inductive theorem
$\symb{f}
\approx
\symb{g}$ highlights that we only have to
prove that two functions \emph{can} produce the same result, not that
they \emph{always} do.

This is deliberate: when proving that two functions produce the
same result, we can see non-confluent LCTRSs as inherently incorrect.
Thus, we focus on confluent systems.  For LCTRSs whose confluence
is unknown, it is preferable to show non-equivalence (which translates
to a MAYBE in the output) over equivalence.

\subsection{Experiments}\label{subsec:experiments}

To assess performance and precision of \ctrl\ empirically,
we tested five assignments from a group of students in the first-year
programming course in Nagoya, all automatically translated to
LCTRSs by \textsf{c2lctrs}:
  \texttt{sum}: given $n$, implement $\sum_{i=1}^n i$;
  \texttt{fib}:
 compute
 the
  $n$th Fibonacci number;
  \texttt{sumfrom}: given $n,m$,
  implement $\sum_{i=n}^m i$;
  \texttt{strlen}
  and
  \texttt{strcpy}.
We compared the first three to LCTRS-versions
of
recursive reference implementations;\footnote{However,
    honesty compels us to mention that
    for \texttt{fib}, we used a manual translation because the one
    obtained from \textsf{c2lctrs} was impractical: where our
    manual translation has a rule
               $\symb{fibrec}(x) \arrz
               \symb{plus}(\symb{fibrec}(x - 1), \symb{fibrec}(x - 2))\ 
               \constraint{x \geq \symb{2}}$,
    the automatic one splits the two recursive calls
    (recall \secshort\ref{subsec:calls}).  Therefore, a more
    sophisticated termination argument is needed, and it is harder
    to eliminate the recursion in the inductive process.
    Handling such cases in the future will likely necessitate an
    additional lemma generation technique.}
for \texttt{strlen}
and \texttt{strcpy} we used a specification as in
\exshort\ref{exa:strlen}
and~\ref{exa:strcpy_proof}.\footnote{Interestingly,
  in \texttt{strcpy02} the student's \texttt{strlen} solution
  is called as a helper function for \texttt{strcpy}.}
We also tested our own
implementations of \texttt{fact} from
\exshort\ref{ex:running} and  \texttt{arrsum} from
\exshort\ref{exa:sum14}, along with 25 function comparisons from the
literature and 12 memory-safety benchmarks from the
\emph{Competition on Software Verification}
\citeA{sv-comp}.
The benchmarks (also from the literature)
are typically fairly small:
the largest, \texttt{lit03\_GS13\_fig6}, has 70 lines of C code
and 55 rewrite rules. We used an Intel i7-5600U CPU at 2.6 GHz under Linux.

We quickly found that many of the student programs had failed to
account for boundary conditions, such as empty strings or negative input.
This causes a NO, or a MAYBE if the system cannot be
proved confluent, so if not all variables are initialized.
To limit the impact of these errors, we did a
second test, where we altered the specification to account for these
mistakes.  The results of both tests are summarized in
Figure~\ref{fig:evaluation}.

\begin{figure}[th]
\begin{tabular}{|c|c|c|c|r|c}
\cline{1-5}
function & YES & NO & MAYBE & \multicolumn{1}{c|}{time} &
  \multirow{11}{*}{\parbox{0.34\textwidth}{\it
  \textbf{Legend:}
  YES indicates that a proof was found, NO a disproof (so a
  conclusion $\bot$); MAYBE denotes that \ctrl\ found no proof or
  disproof, took more than 60 seconds, or failed to prove termination
  of the LCTRS.  The \emph{time} column lists the average run\-time on YES
  and NO results. \\
  }}
  \\[-1pt]
\cline{1-5}
\texttt{sum}        & 9 /  9 & 0 / 0 &  6 /  6 &
  \phantom{0}2.1 / \phantom{0}2.1 \\[-1pt]
\texttt{fib}        & \phantom{0}4 / 10 & 6 / 1 &  3 /  2 & 
  \phantom{0}7.6 / \phantom{0}5.6 \\[-1pt]
\texttt{sumfrom}    & 3 /  3 & 1 / 0 &  2 /  3 &
  \phantom{0}1.8 / \phantom{0}2.1 \\[-1pt]
\texttt{strlen}     & 1 /  2 & 0 / 0 &  5 /  4 &
  \phantom{0}4.1 / \phantom{0}4.0 \\[-1pt]
\texttt{strcpy}     & 3 /  5 & 0 / 0 &  3 /  1 & 21.8 / 17.1 \\[-1pt]
\texttt{arrsum}     & 1 /  1 & 0 / 0 &  0 /  0 &
  \phantom{0}3.9 / \phantom{0}3.9 \\[-1pt]
\texttt{fact}       & 1 /  1 & 0 / 0 &  0 /  0 &
  \phantom{0}2.2 / \phantom{0}2.2 \\
\texttt{literature} & 4 /  5 & 3 / 2 & 18 / 18 &
  \phantom{0}4.0 / \phantom{0}3.9 \\
\texttt{safety}     & 3 /  3 & 2 / 2 &  7 /  7 & 22.3 / 22.3 \\
\cline{1-5}
total              & 29 / 39 & 12 / 5 & 44 / 41 &  \\
\cline{1-5}
\end{tabular}
\caption{Results of \ctrl\ in the initial test (before
the slash), and with obvious mistakes fixed (after the slash).}
\label{fig:evaluation}
\end{figure}

We found five classes of recurring failures.
First, cases where the function was wrong, but \ctrl\ could not answer
NO as it could not prove confluence.
This accounts for six MAYBEs in the initial test
and two in the second, and could be
considered an incorrect implementation.
Second (six failures in either table)
is the termination requirement:
we need termination
independent from the starting symbol, which is often not satisfied
or cannot
be proved by our admittedly limited termination module.

The remaining groups of failures each demonstrate a weakness of
our method.  The third failure occurs when generalization drops
a relation between two variables;
e.g., when $x$ and $y$ are both initialized to $0$
and then increased by $1$ in every loop iteration
(with loops corresponding to tail-recursive functions);
after generalizing, the information that they are equal is lost.
Typically, this manifests as an
\textsc{Expansion} where the non-diverging case can easily be
removed before generalization, but afterwards gives an
equation that can be disproved.
This suggests a natural direction for improvement.

The fourth group are those benchmarks where our primary
generalization technique (\secshort\ref{subsec:generalise}) does not
apply because there are no variables to generalize.  This happens
when both sides have
non-tail-recursive functions
or loops counting
down rather than up.
Recursion-abstraction (\secshort\ref{subsec:abstract}) lets us
solve several benchmarks, but further
lemma generation will be
needed
for the majority.
Nonetheless, this generalization
  technique does allow us to handle
  \exshort\ref{ex:CR}, which can be challenging for existing approaches.

The final group concerns nested loops.  \ctrl's strategy fails because
the counters for the inner and outer loop are generalized at the
same time.  However, inductive proofs with \ctrl's interactive mode
show that such benchmarks
\emph{can} be handled by our method.  Thus, in future work a more
sophisticated generalization strategy would be desirable.

Demonstrative examples of these last three issues are given in
\appshort\ref{app:difficult}.  A full
evaluation page, including exact problem statements, is given at:

\begin{center}
  \url{http://cl-informatik.uibk.ac.at/software/ctrl/tocl/}
\end{center}

\section{Related Work}
\label{sec:related-work}

The related work can be split into two
categories.  First, the
literature on rewriting induction; and second, the work on program
verification and equivalence analysis.

\subsection{Rewriting Induction}
\label{subsec:rewind}
Our inductive theorem proving method
builds on a long literature about rewriting induction
(see e.g.,
\cite{bou:97,fal:kap:12,red:90,sak:nis:sak:sak:kus:09}).
Its core method
extends
existing
techniques to the LCTRS formalism introduced in~\cite{kop:nis:13},
thus generalizing the possibilities of earlier work.

The most relevant related works
are~\cite{fal:kap:12,sak:nis:sak:sak:kus:09},
defining
rewriting induction
for different styles of constrained rewriting.
Both
use only
\emph{integer} functions and predicates; it is not clear how
to generalize
these approaches
to\linebreak
more advanced theories.
The more general setting of LCTRSs enables
rewriting induc\-tion also for systems
with, e.g.,
arrays, bitvectors, or real numbers.
Moreover,
not re-\linebreak stricting the predicates in $\Sigmalogic$
enables
(a limited form of) quantifiers in constraints.

These advantages are enabled by
subtle changes to the
inference rules, in particular \textsc{Simplification} and
\textsc{Expansion}.  Our changes
let us modify
constraints of an equation
and
handle \emph{irregular} rules
with fresh variables in the constraint.
This
additionally
enables \textsc{Expansion} steps
to create such (otherwise infeasible) rules.
The method requires a very different implementation from
previous definitions: we need separate strategies to simplify
constraints (e.g., deriving quantified statements), and,
for the desired generality, must rely primarily on external
solvers to manipulate constraints.

Moreover, we have introduced a
completely new
generalization technique,
as a powerful
tool for analyzing loops in particular.
\citeN{nak:nis:kus:sak:sak:10} use a
similar idea (abstracting the initialization values),
but the execution is very different:
for an equation
$\aterm \approx \bterm\ \constraint{\varphi}$,
first $\aterm \approx \bterm$ is adapted via templates obtained from the
rules, then $\varphi$ is generalized via a set of relations between
positions tracked by the proof process.
In our method, the constraint carries all the information.
We succeed on all examples in~\cite{nak:nis:kus:sak:sak:10},
and on some where their method fails
(cf.~\appshort\ref{sec:examples}; e.g., for non-negative \texttt{n}, a
\texttt{for}-loop summing up from \texttt{1} to \texttt{n}
is compared to \texttt{n*(n+1)/2}).

For \emph{unconstrained} systems, the literature contains
several generalization methods, e.g.,
\cite{kap:sak:03,kap:sub:96,urs:kou:04}.
Mostly, our method in \secshort\ref{subsec:generalise}
is very different from these approaches.
Most similar, perhaps, is~\cite{kap:sak:03},
which also proposes a method to generalize initial values.
As observed by~\citeN{nak:nis:kus:sak:sak:10},
this method is not sufficient for
even our simplest benchmarks $\summ$
and $\fact$, as the argument for the loop variable cannot be generalized;
in contrast, our method has no problem with such variables.
As discussed in \secshort\ref{subsec:abstract}, the
  recursion-abstraction technique presented there essentially lifts a
  technique from explicit induction \cite{aub:79} to constrained
  rewriting induction.

As far as we are aware, there is no other work for lemma generation
of rewrite systems (or functional programs) obtained from procedural
programs.

Like \citeN{gie:kuh:voi:07}, we verify procedural programs
via a transformation to a functional program,
followed by an invocation of an inductive theorem prover.
In an unconstrained setting, they propose an
equivalence-preserving program transformation to a
non-tail-recursive program to eliminate
accumulator arguments.
A combination\linebreak of their approach with ours
could be beneficial e.g.\ for programs with nested loops.

\subsection{Automatic Program Verification and Equivalence Proving}
\label{subsec:related_verification}

Our goal is to
(automatically) verify correctness properties of procedural programs.
Fully automated verifiers for properties like (memory) safety
and termination are regularly assessed at the
\emph{Competition on Software Verification}
\citeA{sv-comp}. However, a comparison with
these tools does not seem useful.
While we can, to some extent, tackle (memory) safety and termination,
our main topic
is \emph{equivalence}, which is not studied in
SV-COMP.
Technically,
equivalence problems can be formulated as
safety problems (by self-composition \cite{bar:dar:rez:11}:
call both programs on equal inputs and
assert that their results are also equal).
However, none of the tools in the ``recursive'' category
of SV-COMP 2015 could prove equivalence for our simplest
(integer) example $\symb{sum}$.

Apart from constrained rewriting, another intermediate
  representation for verification of imperative programs
  is based on (constrained) logic programs or, closely related,
  Horn clauses \cite{alb:gom:hub:pue:07,gup:pop:ryb:11}. It should
  be possible to express our contributions also in this framework,
  provided that constructor terms are supported.

For the
setting of \exshort\ref{exa:motivating}, automated
grading,
\citeN{vuj:nik:tos:kun:13}
apply verification techniques like bounded model checking.
While this enables significant improvements over classic
testing, there is still a non-zero risk of
missing bugs due to under-approximation. Thus, it could
be beneficial to add our approach to the portfolio.

For program equivalence,
we discuss (fully) automated techniques for proving
\emph{partial equivalence} and its special case
\emph{total equivalence}.
Two programs $P_1$ and $P_2$ are \emph{partially equivalent} if
for the same inputs, terminating executions of $P_1$ and $P_2$
return the same value.
They are \emph{totally equivalent} if they
moreover both terminate on all inputs
(see \cite{god:str:08} for a more
extensive
discussion).

This paper addresses total equivalence: we require termination
to analyze partial equivalence. We allow \emph{constrained}
equivalence queries so that only certain inputs are considered.
This includes properties that cannot be checked
programmatically, like the \emph{size} of an array in a C program.
As mentioned in \secshort\ref{sec:strategy}, for non-confluent
programs $P_1$ and $P_2$, we analyze if running $P_1$ on the input
\emph{can} lead to the same result as $P_2$.

\citeN{god:str:08} propose a Hoare-style proof rule for
partial equivalence of recursive programs (among other properties).
To analyze two recursive functions $f_1$ and $f_2$,
these symbols are first replaced in recursive calls in
their bodies by the same \emph{uninterpreted function symbol} $f$.
Under this premise, it is then proved (e.g., by a bounded model
checker) that the bodies of $f_1$ and $f_2$ also have equivalent
results.
In this sense, \citeN{god:str:08} also use inductive reasoning.
However, our approach proves equivalence of
\exshort\ref{ex:CR}
with different recursion base cases, whereas their proof rule
is not applicable.
Moreover,
the use of uninterpreted function symbols requires that the programs
must be deterministic, in contrast to our approach.

\citeN{lop:mon:16} prove partial equivalence for programs
on integers and undefined function symbols
(which may arise also as abstractions
of deterministic complex functions).
They combine self-composition \cite{bar:dar:rez:11},
a safety-preserving transformation of undefined functions to
polynomials (yielding a program on integers only),
recurrence solving for loops, and a standard software model
checker. However, their approach does not support
\emph{mutable} arrays, whose content can be changed during the
program's execution (as in \exshort\ref{exa:strcpy_proof}
for \texttt{strcpy}), in contrast to our method.

\citeN{ver:jan:bru:12} use widening to prove
program equivalence. For validation of compiler optimizations
\cite{nec:00}, they consider programs
with (linear-)affine arithmetic and arrays. A restriction of their
approach is that it does not exploit the \emph{semantics} of
arithmetic operations beyond associativity and commutativity.

Recently, \emph{regression verification} has become an active topic
of research in program equivalence proving
\cite{god:str:13,lah:haw:kaw:reb:12,fel:gre:kle:rum:ulb:14}.
As in regression testing, two programs are compared
that are syntactically \emph{almost} the same, e.g., different
revisions of the same code base with a
refactored function.
Regression verification then
analyzes if the two
programs are semantically equivalent.

\citeN{god:str:13} improve modularity over \cite{god:str:08}
by decomposing the proof obligations into smaller units via
the call graph of the program. \citeN{haw:kaw:lah:reb:13} propose
\emph{mutual summaries}, relating the postconditions of two program functions.
This generalizes uninterpreted functions as summaries and
allows analysis of non-deterministic programs. A challenge is to find
such mutual summaries automatically. \citeN{fel:gre:kle:rum:ulb:14} address
this problem via Horn constraint solving to find \emph{coupling predicates}
over linear arithmetic between program points.
It would be interesting to adapt their approach for lemma generation.
They also analyze \emph{total} equivalence:
a separate termination proof is required.
The web interface of their tool \llreve\ currently fails on
the same example as~\cite{nak:nis:kus:sak:sak:10}
(cf.~\secshort\ref{subsec:rewind}). They mention an extension to arrays
and heap data structures as future work.

\section{Directions for Future Work}
\label{sec:future}
This paper is by no means intended as the end station for inductive
theorem proving on LCTRSs, but rather as the beginning.
The generalization methods we supply are powerful together, but
they do not suffice for more complicated systems or equations.
A mere two methods cannot bypass the need to search for loop
invariants altogether.

A natural extension would thus be both to adapt existing lemma
generation techniques to the constrained setting
and to adapt
techniques for finding loop invariants towards the setting of
rewriting induction, e.g., to suggest suitable lemmas.
It might also be worthwhile to directly look at the constraints
and develop advanced
methods for constraint modification, which could be followed by a
generalization step.
Moreover, our generalization technique from
\secshort\ref{subsec:generalise} could be improved to generalize
not only initializations with constants, but also initializations
with other values, e.g., copies of function parameters. This is
motivated by loops that count down instead of up.
Additionally, inspired by \cite{lop:mon:16}, one might consider
LCTRSs with uninterpreted functions to model functions with unknown
implementations.

For a different direction, we may extend the translation from
\secshort\ref{sec:transformations}, e.g., by translating
structs to term data structures (cf.~\cite{ott:bro:ess:gie:10}).
The ideas from \secshort\ref{sec:transformations}
can also be applied for languages such as Python or Java, enabling
equivalence proofs
between functions in different languages. This could be
particularly interesting for a reference implementation in an
inherently memory-safe language like F\# or Java, and an efficient
implementation in a language like C that
has no such memory safety guarantees.

Finally, we hope to extend the implementation in the future, both to
increase the strength of the inductive theorem proving---adding new
theory and testing for more sophisticated heuristics---and to add more
features to the translation from C code.

\section{Conclusions}
\label{sec:conclusion}

In this paper, we have done two things.  First, we have discussed a
transformation from procedural programs to constrained term rewriting.
By abstracting from the memory model underlying a particular
programming language
and instead encoding concepts like integers and
arrays in an intuitive way, this transformation can be applied to
various different (imperative) programming languages.  The resulting
LCTRS is close to the original program
and has built-in error checking for all mistakes of interest.

Second,
we have extended rewriting induction to the setting of
LCTRSs.  We have shown how this method can be used to prove correctness of
procedural programs.  The LCTRS formalism is a good
analysis backend for this, since the techniques from standard rewriting
can typically be extended to it, and native support for logical constraints
and data types like integers and arrays is present.

We have also introduced two new techniques to generalize
equations.  The idea of the core method is to identify constants
used as \emph{variable initializations}, keep track of them during the
proof process, and abstract from these constants when a proof attempt
diverges.  The LCTRS setting is instrumental in the simplicity of
this method, as it boils down to dropping a (cleverly chosen) part of
a constraint.  The second method recognizes---and
abstracts---recursive calls on semantically equivalent arguments.

In addition to the theory, we provide an implementation of these techniques.
Initial results on a small database of programs from students and
the literature are very promising. In future work, we
aim to increase the strength of our implementations.

\appendixhead{}

% Acknowledgments
\begin{acks}
We are grateful to Stephan Falke, who contributed to an older version
of this work, and for the helpful remarks of the reviewers
for~\cite{kop:nis:14} and for the present paper.
\end{acks}

\newpage
\appendix

\section{Translating C programs to LCTRSs}\label{app:transformations}

This appendix provides further details on the translation from
C programs to LCTRSs.

\subsection{Optimizing LCTRSs}\label{app:optimizations}

After generating the LCTRS, we simplify the (left-linear) result
by the following steps:

\begin{enumerate}
\item \emph{Combining unconstrained rules.}
  Like~\cite{fal:kap:sin:11}, we repeat the following:
  \begin{itemize}
  \item select any unconstrained rule $\rho$ of the form $\symb{u}(x_1,\dots,
    x_n) \to r$ where $\symb{u}$ is not the initial symbol of a
    C function (like $\symb{fact}$ in \exshort\ref{exa:fact1 cnt 1}),
    and $\symb{u}$ neither occurs in $r$ nor in the left-hand side of
    any other rule; the repetition stops if no such $\rho$ exists;
  \item rewrite all right-hand sides with $\rho$;
  \item remove both the rule $\rho$ and the symbol $\symb{u}$.
  \end{itemize}
  This process does not substantially alter the multi-step reduction
  relation $\arrr{}$ as the only symbols removed are those which we
  think of as ``intermediate'' symbols.
\item \emph{Combining constrained rules.}
  If there are distinct rules $\ell \to r\ \constraint{\varphi}$
  and $\ell \to r\ \constraint{\psi}$ (modulo renaming), these are
  combined into $\ell \to r\ \constraint{\varphi \vee \psi}$.  Given
  rules $\ell \to \symb{u}(s_1,\dots,s_m)\ \constraint{\varphi}$ and
  $\symb{u}(x_1,\dots,x_m) \arrz r_i\ \constraint{\psi_i}$ for $1 \leq
  i \leq n$ with all $x_j$ variables, we may replace them by
  $\ell \to r_i[x_1:=s_1,\dots,x_m:=s_m]\ \constraint{\varphi \wedge
  \psi_i[x_1:=s_1,\dots,x_m:=s_m]}$ for $1 \leq i \leq n$, if:
  \begin{itemize}
  \item $\symb{u}$ is not the initial symbol of a function
    and does
    not occur in any other rule, or $\ell$;
  \item the terms $s_j$ do not contain defined symbols (as then
    we might remove a non-terminating subterm, which
    \emph{would} impact the multi-step reduction relation).
  \end{itemize}

\item \emph{Removing unused arguments.}
  For all function symbols and
  all their argument positions, we mark whether the position is
  ``used'':
  \begin{itemize}
  \item all argument(s) of every $\ret{f}$ and initial symbols
    (e.g.~$\symb{fact}$) are used;
  \item for other symbols $\symb{u}_i$ of arity $n$ and every $1 \leq
    j \leq n$: if there is a rule $\symb{u}_i(\ell_1,\dots,\ell_n)
    \to r\ \constraint{\varphi}$ where $\ell_j$ is not a
    variable (which can arise for instance with the transformation in
     \secshort\ref{subsec:calls})
    or occurs in $\varphi$, then argument $j$ is used in
    $\symb{u}_i$;
  \item for all rules $\symb{u}_i(\ell_1,\dots,\ell_n) \to r\ 
    \constraint{\varphi}$ and $1 \leq j \leq n$: argument $j$ is used
    in $\symb{u}_i$ if $\ell_j$ is a variable occurring at a
    \emph{used position} in $r$; here, a position $p$ is used in $s$
    if either $p = \epsilon$ or $p = i \cdot p',\ s = f(\vec{t})$,
    argument $i$ is used in $f$ and position $p'$ is used in $t_i$.
  \end{itemize}
  The last, recursive, step essentially calculates a fixpoint;
  in summary, an argument position is \emph{used} if it is
  possible to reduce to a term where we actually need the subterm
  at that position as part of a constraint or the function's return
  value.  When a variable is not used in any later statement, we will
  avoid carrying it along.
\item \emph{Simplifying constraints.} Constraints may be brought into
  an equivalent form, e.g., by removing duplicate clauses or
  replacing, e.g., $\neg(x > y)$ by $x \leq y$.  
  Here, $\varphi$ is ``equivalent'' to $\psi$ in a rule
  $\ell \arrz r\ \constraint{\varphi}$ if
  $\quant{\forall}{\vec{x}}{\bquant{\exists}{\vec{y}}{\varphi}
  \leftrightarrow \bquant{\exists}{\vec{z}}{\psi}}$ holds, where
  $\FV(\ell) \cup \FV(r) = \{\vec{x}\}$,
  $\FV(\varphi) \setminus \{\vec{x}\} = \{\vec{y}\}$, and
  $\FV(\psi) \setminus \{\vec{x}\} = \{\vec{z}\}$ (much like
  the observation on $\equalgen$ below
  \defshort\ref{def:equivalent-constrained-terms}).
  We typically only remove negations and unused variables.
\end{enumerate}

\begin{example}\label{exa:declaration}
As an example, let us consider the simplification of a toy function.

\begin{minipage}[h]{0.33\textwidth}
\begin{verbatim}
int f(int x) {
  int y,z;
  if (x < 0) return 0;
  z = 0;
  while (x > 0) {
    x--;
    z += x;
  }
  y = z + x;
  return y;
}
\end{verbatim}
\end{minipage}
\vline
\begin{minipage}[h]{0.62\textwidth}
\[
\begin{array}{rcll}
\symb{f}(x) & \arrz & \symb{u}_1(x,y,z) \\
\symb{u}_1(x,y,z) & \arrz & \symb{u}_2(x,y,z) &
  \constraint{x < 0} \\
\symb{u}_1(x,y,z) & \arrz & \symb{u}_3(x,y,z) &
  \constraint{\neg(x < 0)} \\
\symb{u}_2(x,y,z) & \arrz & \ret{\symb{f}}(\nul) \\
\symb{u}_3(x,y,z) & \arrz & \symb{u}_4(x,y,\nul) \\
\symb{u}_4(x,y,z) & \arrz & \symb{u}_5(x,y,z) &
  \constraint{x > \nul} \\
\symb{u}_4(x,y,z) & \arrz & \symb{u}_7(x,y,z) &
  \constraint{\neg (x > \nul)} \\
\symb{u}_5(x,y,z) & \arrz & \symb{u}_6(x-\one,y,z) \\
\symb{u}_6(x,y,z) & \arrz & \symb{u}_4(x,y,z+x) \\
\symb{u}_7(x,y,z) & \arrz & \symb{u}_8(x,z + x,z) \\
\symb{u}_8(x,y,z) & \arrz & \ret{\symb{f}}(y) \\
\end{array}
\]
\end{minipage}

The
rule $\symb{f}(x) \arrz \symb{u}_1(x,y,z)$ has
unconstrained variables $y$ and $z$
in the right-hand side which do not occur on the left.
A step with this rule instantiates $y$ and $z$ by arbitrary
type-correct values.  This reflects that in the
C program the variables \texttt{y} and \texttt{z} are at first not
initialized and may contain
an arbitrary value (depending on the compiler).
In the simplified version, this
does not occur; consider the remainder
obtained from combining rules:

\begin{minipage}[h]{0.54\textwidth}
\vspace{-3pt}
\[
\begin{array}{rcll}
\symb{f}(x) & \arrz & \ret{\symb{f}}(\nul) & \constraint{x < \nul} \\
\symb{f}(x) & \arrz & \symb{u}_4(x,y,\nul) &
  \constraint{\neg(x < \nul)} \\
\symb{u}_4(x,y,z) & \arrz & \symb{u}_4(x-\one,y,z+x-\one) &
  \constraint{x > \nul} \\
\symb{u}_4(x,y,z) & \arrz & \ret{\symb{f}}(z+x) &
  \constraint{\neg (x > \nul)} \\
\end{array}
\]
\vspace{-3pt} % Interestingly, this gives a larger whitespace than no vspace.
\end{minipage}
\begin{minipage}[h]{0.42\textwidth}
\vspace{4pt}
Now, the first and third arguments of $\symb{u}_4$ are used (in the
constraint and return value), but the second is not: it is merely
passed along in the\linebreak
\vspace{-12pt}
\end{minipage}
\noindent
recursive call.  Removing this variable and simplifying the
constraints, we obtain:

\begin{minipage}[h]{0.48\textwidth}
\[
\begin{array}{rcll}
\symb{f}(x) & \arrz & \ret{\symb{f}}(\nul) & \constraint{x < \nul} \\
\symb{f}(x) & \arrz & \symb{u}_4(x,\nul) & \constraint{x \geq \nul} \\
\symb{u}_4(x,z) & \arrz & \symb{u}_4(x-\one,z+x-\one) &
  \constraint{x > \nul} \\
\symb{u}_4(x,z) & \arrz & \ret{\symb{f}}(z+x) &
  \constraint{x \leq \nul} \\
\end{array}
\]
\end{minipage}
\begin{minipage}[h]{0.485\textwidth}
\vspace{4pt}
This system is \emph{orthogonal} in the sense of~\cite{kop:nis:13}
and thus \emph{confluent}, which is beneficial for
analysis.  The original LCTRS was also confluent, but this was harder
to see.
\end{minipage}

\end{example}

Correctness relies on the fact that the LCTRSs created using the
transformation described in \secshort\ref{sec:transformations} are
``well behaved''; most importantly, all rules are left-linear.

\subsection{Translating C Programs with Explicit Pointers}\label{app:pointers}

As observed at the end of \secshort\ref{subsec:transition:array}, the
simple translation explored there has both up- and downsides.  On the
one hand, by abstracting from the memory model, we can simplify
analysis.  On the other hand, there are certain programs
we cannot handle.

For
C programs with dynamically allocated arrays and/or
explicit pointer use, we consider the memory model
from the C standard.  Declaring or allocating an array selects
an amount of currently unused space in memory and designates it for
use by the given array.
The allocated space is \emph{not} guaranteed to be
at a given position in memory relative to existing
declarations; when an array is indexed out of its declared bounds, the
resulting behavior is undefined---so this can safely be considered an
error (see paragraph 6.5.6:9 in:
   \url{http://www.open-std.org/jtc1/sc22/wg14/www/docs/n1570.pdf}).

We will think of a program's memory as a set of \emph{blocks}, each
block corresponding to a sequence of values.  A pointer then becomes
a location in such a block.  In an LCTRS we will model this using
a ``global memory'' variable, which lists the blocks as a
sequence of arrays; a pointer is a pair of integers, selecting a
memory block and its offset.

Limiting interest to programs on (dynamically allocated) integer or char arrays,
we will use a memory variable of sort $\symb{array}(\symb{array}(
\Int))$, which represents a sequence of integer arrays
(i.e.~$(\Z^*)^*$);
the default value $0_{\symb{array}(\Int)}$ is the
empty sequence $\langle\rangle \in \Z^*$.
We use a theory signature with the array symbols introduced in
\secshort\ref{subsec:transition:array}, along with:
\begin{itemize}
\item $\symb{allocate} : [\symb{array}(\symb{array}(\Int)) \times
  \symb{array}(\Int)] \arrtype \symb{array}(\symb{array}(\Int))$,
  where $\J_{\symb{allocate}}(\langle a_0,\ldots,a_k \rangle, b)
  = \langle a_0,\ldots,a_k,b \rangle$; that is, $\symb{allocate}(
  \mathit{mem},\mathit{arr})$ adds the new sequence $\mathit{arr}$ to
  the memory;
\item $\symb{free} : [\symb{array}(\symb{array}(\Int)) \times
  \Int] \arrtype \symb{array}(\symb{array}(\Int))$, where
  $\J_{\symb{free}}(\langle a_0,\ldots,a_k \rangle, n) =
  \langle a_0,\dots,\linebreak
  a_{n-1},\langle\rangle,a_{n+1},\dots,a_k\rangle$
  if $0 \leq n \leq k$ and $\langle a_0,\ldots,a_k \rangle$ otherwise;
  that is, the memory block indexed by $n$ is considered empty, and
  any further attempt to address a location in that memory block
  should be considered an error.
\end{itemize}
A pointer is represented by a pair $(b,o)$ of a block index and an
offset within that block.  The NULL-pointer is represented by
$(-1,0)$.

\pagebreak

\begin{example}\label{ex:difficultarray}
Consider the following example C++ function:
\begin{verbatim}
int *create(int k) {
  int *a = new int[k];
  int *b = a + 1;
  for (int i = 0; i < k; i += 2) b[i] = 42;
  return a;
}
\end{verbatim}
Now, \texttt{a} and \texttt{b} share memory, and new memory is
allocated.  We might encode this as:
\[
\begin{array}{rclr}
\symb{create}(mem,k) & \arrz &
  \uu(\symb{allocate}(mem,\avar),k,\symb{size}(mem),\nul) &
  \constraint{\symb{size}(\avar) = k} \\
\uu(mem,k,ai,ao) & \arrz & \symb{v}(mem,k,ai,ao,ai,ao + \one,\nul) \\
\symb{v}(mem,k,ai,ao,bi,bo,i) & \arrz &
  \symb{w}(mem,k,ai,ao,bi,bo,i) & \constraint{i < k} \\
\symb{v}(mem,k,ai,ao,bi,bo,i) & \arrz & \return(mem,ai,ao) &
  \constraint{i \geq k} \\
\symb{w}(mem,k,ai,ao,bi,bo,i) & \arrz &
  \multicolumn{2}{l}{\symb{error}\ \ \ \ \ \ \ \ \ \ \ \ \ \ \ \ \ 
  \constraint{bo + i < \nul \vee bo + i \geq
  \symb{size}(\symb{select}(mem,bi))}} \\
\symb{w}(mem,k,ai,ao,bi,bo,i) & \arrz &
  \multicolumn{2}{l}{\symb{v}(\symb{store}(mem,bi,
  \symb{store}(\symb{select}(mem,bi),bo+i,\symb{42})),k,} \\
& &
  \multicolumn{2}{l}{\ \ \ ai,ao,bi,bo,i+\symb{2})\ \ \ \ \ \ 
  \constraint{\nul \leq bo + i < \symb{size}(\symb{select}(mem,bi))}} \\
\end{array}
\]
(For clarity, we omit the optimization step that combines the first
two rules, and the one that combines the third with the last two.)

Consider how this example is executed, starting from empty memory.
We will use $\langle\cdot\rangle$ to refer to specific
arrays of type $\symb{array}(\symb{array}(\Int))$ and $[\cdot]$ for
arrays of type $\symb{array}(\Int)$.
\begin{enumerate}
\item We call $\symb{create}(\langle\rangle,\symb{2})$, representing a
  function call when no arrays have been allocated.
\item By the first rule, we get
  $\symb{u}(\symb{allocate}(\langle\rangle,x),\symb{2},
  \symb{size}(\langle\rangle),\nul)$,
  where $x$ is a \emph{random} array.
  All we know is that
  it has size 2---this rule uses irregularity to represent the
  randomness involved in an allocation.  Thus,
  assume the sequence $[\symb{-4},\symb{9}]$ is chosen.  Using calculation steps to evaluate
  $\symb{allocate}$ and $\symb{size}$, we
  get $\symb{u}(\langle[
  \symb{-4},\symb{9}]\rangle,\symb{2},\nul,\nul)$.  Here, the pair
  $(0,0)$ represents the array $a$: the first block in memory, read
  from the start (offset $0$).
\item Then by the second rule (and a calculation), we reduce to
  $\symb{v}(\langle[\symb{-4},\symb{9}]\rangle,\symb{2},\nul,\nul,
  \nul,\one,\nul)$.
  The new pair $(0,1)$ represents $b$: the same memory block as $a$,
  but with offset $1$.  This location points to the sequence
  $[\symb{9}]$.  The final $\symb{0}$ is the index for the loop counter $i$.
\item Entering the loop (as indeed $\symb{0} < \symb{2}$), we reduce to
  $\symb{w}(\langle[\symb{-4},\symb{9}]\rangle,\symb{2},\nul,\nul,
  \nul,\one,\nul)$.
\item\label{it:pointer:test}
  Here, we do an array store: \texttt{b[i] = 42;}.  The LCTRS
  first tests whether \texttt{b[i]} corresponds to a position in
  allocated memory
  and reduces to an error state if not.  This is
  done by selecting the corresponding block from $\mathit{mem}$, then
  testing whether the offset for \texttt{b} and \texttt{i} together
  exceed the block's bounds.  We succeed, as $\symb{0} \leq \symb{0}+\symb{1} <
  \symb{size}(\symb{select}(\langle[\symb{-4},\symb{9}]\rangle
  ,\symb{0}))
  \Leftrightarrow \symb{0} \leq \symb{1} < \symb{size}([\symb{-4},\symb{9}])
  \Leftrightarrow \symb{0} \leq \symb{1} < \symb{2}$.
\item Thus, the update is done: we reduce to: \\
  $\symb{v}(\symb{store}(\langle[\symb{-4},\symb{9}]\rangle,\nul,
  \symb{store}(\symb{select}(\langle[\symb{-4},\symb{9}]\rangle
  ,\symb{0}),
  \one+\nul,\symb{42})),\symb{2},\nul,\nul,\nul,\one,\nul+\symb{2})$ \\
  $\arrzcalc^*
  \symb{v}(\symb{store}(\langle[\symb{-4},\symb{9}]\rangle,\nul,
  \symb{store}([\symb{-4},\symb{9}],\one,\symb{42})),\symb{2},\nul,
  \nul,\nul,\one,\symb{2})$ \\
  $\arrzcalc^*
  \symb{v}(\symb{store}(\langle[\symb{-4},\symb{9}]\rangle,\nul,
  [\symb{-4},\symb{42}]),\symb{2},\nul,\nul,\nul,\one,\symb{2})$ \\
  $\arrzcalc
  \symb{v}(\langle[\symb{-4},\symb{42}]\rangle,\symb{2},\nul,\nul,
  \nul,\one,\symb{2})$ \\
So we retrieve the space for \texttt{b} from memory
  (getting
  the full block $[\symb{-4},\symb{9}]$), update the position
  corresponding to \texttt{b[0]} (which is the same as \texttt{a[1]}),
  get $[\symb{-4},\symb{42}]$, and store the result into the
  corresponding position in memory. Then we carry on with $i+\symb{2}$.
\item Since $\symb{2} \geq \symb{2}$, we reduce to $\return(\langle[\symb{-4},
  \symb{42}]\rangle,\nul,\nul)$, returning the dynamic array
  $[\symb{-4},\symb{42}]$.
\end{enumerate}
Note that in step~\ref{it:pointer:test}, we do not
test whether \texttt{b} corresponds to currently
allocated memory.  This is safe because, if \texttt{b} is the
NULL-pointer or corresponds to previously freed memory,
then $\symb{select}(\mathit{mem},bi)$
is
$\langle\rangle$, and any indexing in this array will cause an error
regardless.
Note also that this function gives a non-error result only for
\emph{even} $k$.
\end{example}

While \exshort\ref{ex:difficultarray} considers only integer arrays,
we could also handle programs with dynamically allocated arrays of varying types.
In this case, we would simply use \emph{multiple} memory variables
with different type declarations.

\section{Correctness proof}
\label{sec:correctness}

In this appendix, we give the full correctness proof, which was
only sketched in \secshort\ref{subsec:ri-correctness-proof}.

% recover the lemma number for the next lemma
\setcounter{tmp-section-Counter}{\value{section}}
\setcounter{tmp-lemma-Counter}{\value{theorem}}
\setcounter{section}{\value{key-lemma-section-Counter}}
\setcounter{theorem}{\value{key-lemma-lemma-Counter}}
\def\thesection{\arabic{section}}

First, we prove \lemshort\ref{lem:theorem:alternative}, reformulated
as follows:

\begin{lemma}
The following statements are equivalent:
\begin{itemize}
\item all equations in $\E$ are inductive theorems;
\item $\lrarr{\E}\ \subseteq\ \lrarrr{\Rules}$ on ground terms (so
  if $\aterm,\bterm$ are ground and $\aterm \lrarr{\E} \bterm$, then
  also $\aterm \lrarrr{\Rules} \bterm$).
\end{itemize}
\end{lemma}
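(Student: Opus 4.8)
The plan is to prove the two implications separately; the backward direction is immediate, and the forward direction requires one normalization step.

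\textbf{From $\lrarr{\E}\,\subseteq\,\lrarrr{\Rules}$ on ground terms to ``all equations in $\E$ are inductive theorems''.} Let $\aterm \approx \bterm\ \constraint{\varphi} \in \E$ and let $\gamma$ be a ground constructor substitution respecting this equation. Then $\aterm\gamma$ and $\bterm\gamma$ are ground, and instantiating the definition of $\lrarr{\E}$ with $C = \Box$ and $p = \posroot$ (using that $\gamma$ respects $\varphi$) gives $\aterm\gamma \lrarr{\E} \bterm\gamma$. As both terms are ground, the hypothesis yields $\aterm\gamma \lrarrr{\Rules} \bterm\gamma$, so the equation is an inductive theorem. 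The case $\bterm \approx \aterm\ \constraint{\varphi} \in \E$ is identical, and in any case $\lrarr{\E}$ is symmetric by definition.

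\textbf{From ``all equations in $\E$ are inductive theorems'' to $\lrarr{\E}\,\subseteq\,\lrarrr{\Rules}$ on ground terms.} Suppose $a,b$ are ground with $a \lrarr{\E} b$, say $a = \subreplace{C}{\aterm\gamma}{p}$ and $b = \subreplace{C}{\bterm\gamma}{p}$ for an equation $\aterm \approx \bterm\ \constraint{\varphi} \in \E$ (the case $\bterm \approx \aterm\ \constraint{\varphi} \in \E$ is symmetric), a context $C$, a position $p$, and a substitution $\gamma$ respecting $\varphi$. Since $a$ and $b$ are ground, $\aterm\gamma$ and $\bterm\gamma$ are ground, hence $\gamma(\avar)$ is ground for every $\avar \in \FV(\aterm) \cup \FV(\bterm)$. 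Now $\gamma$ need not be a \emph{constructor} substitution, so the inductive-theorem hypothesis does not apply directly; we therefore normalize. By termination (property~\ref{req:termination}) every $\gamma(\avar)$ with $\avar \in \FV(\aterm) \cup \FV(\bterm)$ has a normal form, which by quasi-reductivity (property~\ref{req:completereducible}) is a ground constructor term. Define $\gamma'$ on $\FV(\aterm) \cup \FV(\bterm) \cup \FV(\varphi)$ by $\gamma'(\avar) = \gamma(\avar)$ for $\avar \in \FV(\varphi)$ (a value, hence already a ground constructor normal form) and $\gamma'(\avar)$ a fixed normal form of $\gamma(\avar)$ otherwise. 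Then $\gamma'$ is a ground constructor substitution respecting $\aterm \approx \bterm\ \constraint{\varphi}$, so $\aterm\gamma' \lrarrr{\Rules} \bterm\gamma'$ by hypothesis. Moreover $\aterm\gamma \arrr{\Rules} \aterm\gamma'$ and $\bterm\gamma \arrr{\Rules} \bterm\gamma'$, obtained by rewriting each occurrence of $\gamma(\avar)$ down to $\gamma'(\avar)$ inside $\aterm$ resp.\ $\bterm$ (nothing to do when $\avar \in \FV(\varphi)$, since $\gamma$ and $\gamma'$ agree there). As $\lrarrr{\Rules}$ contains $\arrr{\Rules}$ together with its inverse and is transitive, chaining gives $\aterm\gamma \lrarrr{\Rules} \bterm\gamma$, and applying the context $C$ at position $p$ (rewriting is closed under contexts) yields $a \lrarrr{\Rules} b$.

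\textbf{Main point.} Both directions are short; the only step that needs care is replacing $\gamma$ by the ground constructor substitution $\gamma'$ in the forward direction. It is essential here that we restrict to \emph{ground} terms, so that the subterms substituted by $\gamma$ are ground and hence, by termination together with quasi-reductivity, rewrite to ground constructor terms; and that normalizing leaves the values assigned to $\FV(\varphi)$ untouched, so that $\gamma'$ still respects the constraint and the equation.
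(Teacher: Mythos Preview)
Your proof is correct and follows essentially the same approach as the paper: the backward direction is immediate, and for the forward direction you normalize the substitution $\gamma$ to a ground constructor substitution $\gamma'$ using termination and quasi-reductivity, observe that values are already normal so the constraint is still respected, and then chain $\aterm\gamma \arrr{\Rules} \aterm\gamma' \lrarrr{\Rules} \bterm\gamma' \leftarrow_{\Rules}^* \bterm\gamma$ before applying the context. The paper's proof is the same, with the cosmetic difference that it takes the normal form of $\gamma(\avar)$ for \emph{all} variables and then remarks that values are already normal, whereas you explicitly case-split on $\avar \in \FV(\varphi)$.
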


\begin{proof}
Suppose $\lrarr{\E}\ \subseteq\ \lrarrr{\Rules}$ on ground terms.
If $\aterm \approx \bterm\ \constraint{\varphi} \in \E$ and
the ground
constructor substitution $\gamma$ respects this equation, then
$\aterm\gamma$ and $\bterm\gamma$ are ground (since, by
definition of ``respects'' (\defshort\ref{def:constrained_equation}),
$\FV(\aterm) \cup \FV(\bterm) \subseteq \domain(\gamma)$).  Since
obviously $\aterm\gamma \lrarr{\E} \bterm\gamma$ (with empty $C$),
by assumption $\aterm\gamma \lrarrr{\Rules} \bterm\gamma$.  Thus,
$\aterm \approx \bterm\ \constraint{\varphi}$ is an inductive
theorem.

Suppose that all equations in $\E$ are inductive theorems, and
$\cterm \lrarr{\E} \dterm$ for ground $\cterm,\dterm$; we must see
that $\cterm \lrarrr{\Rules} \bterm$.  We have $\cterm = C[\aterm
\gamma]$ and
$\dterm = C[\bterm\gamma]$
for some $\aterm
\approx \bterm\ \constraint{\varphi} \in \E$ and substitution $\gamma$
that respects $\varphi$ and maps all variables in $\aterm,\bterm$ to
ground terms.  Let $\delta$ be a substitution such that each
$\delta(\avar)$ is a normal form of $\gamma(\avar)$; by termination
of $\Rules$, such a $\delta$ exists, and by quasi-reductivity, it is
a ground constructor substitution.  As values cannot be reduced, also
$\delta$ respects $\varphi$.  Therefore $\aterm\delta \lrarr{\E} \bterm\delta$,
which implies $\aterm\delta \lrarrr{\Rules} \bterm\delta$.  We
conclude: $C[\aterm\gamma] \lrarrr{\Rules} C[\aterm\delta]
\lrarrr{\Rules} C[\bterm\delta] \lrarrr{\Rules} C[\bterm\gamma]$,
giving the desired result.
\qed
\end{proof}

Recall also the following key lemma (whose proof has been given in the
main text):

\begin{lemma}[\cite{sak:nis:sak:sak:kus:09}]
Let $\arrz_1$ and $\arrz_2$ be binary relations over some set $A$.
Then, $\leftrightarrow^*_1$ $=$ $\leftrightarrow^*_2$ if all of the
following hold:
\begin{itemize}
  \item $\arrz_1$ $\subseteq$ $\arrz_2$,
  \item $\arrz_2$ is well founded, and
  \item $\arrz_2$ $\subseteq$ $\left(\arrz_1 \cdot \arrz^*_2 \cdot
    \leftrightarrow_1^* \cdot \gets^*_2 \right)$. 
\end{itemize}
\end{lemma}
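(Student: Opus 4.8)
The plan is to establish the two inclusions $\leftrightarrow_1^* \subseteq \leftrightarrow_2^*$ and $\leftrightarrow_2^* \subseteq \leftrightarrow_1^*$ separately. The first is immediate: $\arrz_1 \subseteq \arrz_2$ gives $\leftrightarrow_1 \subseteq \leftrightarrow_2$, and hence $\leftrightarrow_1^* \subseteq \leftrightarrow_2^*$ by taking reflexive--transitive closures. The second inclusion carries all the content. Since $\leftrightarrow_1^*$ is reflexive, transitive and symmetric, it suffices to prove $\arrz_2 \subseteq \leftrightarrow_1^*$: from this $\gets_2 \subseteq \leftrightarrow_1^*$ follows by symmetry, so $\leftrightarrow_2 \subseteq \leftrightarrow_1^*$, and then $\leftrightarrow_2^* \subseteq (\leftrightarrow_1^*)^* = \leftrightarrow_1^*$.

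To prove $\arrz_2 \subseteq \leftrightarrow_1^*$ I would use well-founded induction on $a$ with respect to $\arrz_2^+$, which is well-founded because $\arrz_2$ is. Concretely, for each $a$ I prove the statement $Q(a)$: for every $b$ with $a \arrz_2 b$ we have $a \leftrightarrow_1^* b$; the induction hypothesis is that $Q(a')$ holds for all $a'$ with $a \arrz_2^+ a'$. So fix $a \arrz_2 b$. By the third hypothesis there are $c, d, e$ with $a \arrz_1 c$, $c \arrz_2^* d$, $d \leftrightarrow_1^* e$ and $b \arrz_2^* e$. Since $\arrz_1 \subseteq \arrz_2$ we have $a \arrz_2 c$, so every term on the $\arrz_2$-chain from $c$ to $d$ is reachable from $a$ by $\arrz_2^+$; applying the induction hypothesis to each single $\arrz_2$-step of that chain and chaining the resulting $\leftrightarrow_1^*$-conversions yields $c \leftrightarrow_1^* d$ (if the chain is empty then $c = d$ and there is nothing to show). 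The same argument on the $\arrz_2$-chain from $b$ to $e$ --- all of whose terms are $\arrz_2^+$-below $a$, since $a \arrz_2 b$ --- gives $b \leftrightarrow_1^* e$. Putting the pieces together, $a \leftrightarrow_1 c \leftrightarrow_1^* d \leftrightarrow_1^* e \leftrightarrow_1^* b$, which is $Q(a)$. This closes the induction, so $\arrz_2 \subseteq \leftrightarrow_1^*$ as required, and combining with the first inclusion gives $\leftrightarrow_1^* = \leftrightarrow_2^*$.

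The only delicate point --- which is really where the three hypotheses are used in concert --- is arranging the well-founded induction so that every intermediate term produced by the third hypothesis lies strictly below $a$. This works precisely because the leading $\arrz_1$-step of the decomposition is also a $\arrz_2$-step, which forces $c$, and hence everything $\arrz_2$-reachable from it (including all of $d$'s predecessors on the chain), to be strictly $\arrz_2^+$-below $a$; symmetrically, $a \arrz_2 b$ places $b$, $e$ and everything between them strictly below $a$. Once this bookkeeping is in place, the two ``chain-conversion'' sub-arguments are routine applications of the induction hypothesis together with transitivity of $\leftrightarrow_1^*$.
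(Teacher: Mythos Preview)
Your proof is correct and follows essentially the same approach as the paper: both establish $\leftrightarrow_1^* \subseteq \leftrightarrow_2^*$ trivially from $\arrz_1 \subseteq \arrz_2$, and both prove the converse by well-founded induction on $\arrz_2$, decomposing a step via the third hypothesis and applying the induction hypothesis to the intermediate chains. The only cosmetic difference is that the paper formulates the induction directly for $\arrz_2^* \subseteq \leftrightarrow_1^*$ (so a single application of the induction hypothesis handles each multi-step chain $a \arrz_2^* b$ and $y \arrz_2^* c$), whereas you prove $\arrz_2 \subseteq \leftrightarrow_1^*$ and then break each chain into single steps; this makes the paper's version marginally slicker but the argument is the same.
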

% restore the values for the automatic counters
\setcounter{section}{\value{tmp-section-Counter}}
\setcounter{theorem}{\value{tmp-lemma-Counter}}
\def\thesection{\Alph{section}}

\lemshort\ref{lem:ri-correctness:sound} in the main text is the
combination of the following
Lemmas~\ref{lem:expansion}--\ref{lem:properties_of_ri-step}.

\begin{lemma}
\label{lem:expansion}
Let $s,t$ be terms, $\varphi$ a constraint and $p$ a position of $s$
such that $\subpos{\aterm}{p}$ has the form $\afun(\aterm_1,\ldots,
\aterm_n)$ with $\afun$ a defined symbol and all $\aterm_i$
constructor terms.
Suppose that
the variables in $\aterm,\bterm,\varphi$ are distinct
from those in $\Rules$.  Then:
\begin{enumerate}
\item For any ground constructor substitution $\gamma$
  which respects
  $\aterm \approx \bterm\ \constraint{\varphi}$ and any choice
  of $\Expdapp{\aterm}{\bterm}{\varphi}{p}$, we have:
  \vspace{-2pt}
  \[
   \aterm\gamma ~\left(\arrz_{\Rules,p} \cdot
   \leftrightarrow_{\Expdapp{\aterm}{\bterm}{\varphi}{p}}\right)~ t\gamma
  \]
  Here, $\arrz_{\Rules,p}$ indicates a reduction at position $p$ with
  a rule in $\Rules \cup \Rulescalc$.
\item For any $\aterm' \arrz \bterm'\ \constraint{\varphi'}$ in any
  choice of $\Expdapp{\aterm}{\bterm}{\varphi}{p}$ and any ground
  constructor substitution $\delta$
  which respects
  $\aterm' \approx \bterm'\ \constraint{\varphi'}$, we have:
  \[
   s'\delta ~\left(\leftarrow_{\Rules,p} \cdot
   \leftrightarrow_{\{s ~\approx ~t\ \constraint{\varphi}\}}\right)~
   t'\delta
  \]
\end{enumerate}
\end{lemma}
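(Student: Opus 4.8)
The plan is to read both parts as transfer results between the constrained rewrite step that \emph{defines} the equations in $\Expd(\aterm,\bterm,\varphi,p)$ and ordinary (unconstrained) ground reductions, using \thmshort\ref{thm:constrainedterm} in one direction and \thmshort\ref{thm:constrainedterm:reverse} in the other, with $\approx$ read as a fresh binary constructor for the constrained-term reductions (so position $1 \cdot p$ means position $p$ inside the left argument). Throughout I write $\ell \arrz r\ \constraint{\psi}$ for a rule of $\Rules$, with its variables renamed apart from those in $\aterm,\bterm,\varphi$ as assumed, $\theta$ for a most general unifier of $\ell$ with $\subpos{\aterm}{p}$, and I use repeatedly that $\subreplace{\aterm}{\ell}{p}\theta = \aterm\theta$ because $\theta$ unifies $\ell$ with $\subpos{\aterm}{p}$. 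The key technical fact invoked is that a basic subterm becomes, under a ground constructor substitution, a non-constructor ground term all of whose arguments are normal forms, so that by quasi-reductivity (restriction~\ref{req:completereducible}) it must be rewritten at its root.

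For part~(1), let $\gamma$ be a ground constructor substitution respecting $\aterm \approx \bterm\ \constraint{\varphi}$. Then $\subpos{(\aterm\gamma)}{p} = \afun(\aterm_1\gamma,\ldots,\aterm_n\gamma)$ is ground and not a constructor term, while each $\aterm_i\gamma$ is a ground constructor term and hence a normal form, so $\aterm\gamma$ can only be reduced by a root step of $\subpos{(\aterm\gamma)}{p}$ with some rule $\ell \arrz r\ \constraint{\psi} \in \Rules$: there is $\sigma$ respecting this rule with $\ell\sigma = \subpos{(\aterm\gamma)}{p}$. Hence $\gamma \cup \sigma$ unifies $\ell$ with $\subpos{\aterm}{p}$, so the MGU $\theta$ exists and $\gamma \cup \sigma = \theta\rho$ for some $\rho$. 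I claim $\rho$ respects $(\varphi\theta) \wedge (\psi\theta)$: evaluating, $((\varphi\theta) \wedge (\psi\theta))\rho = (\varphi\gamma) \wedge (\psi\sigma) = \top$, and each variable occurring in $\varphi\theta$ or $\psi\theta$ is sent by $\rho$ to a value, because $\gamma$ sends $\FV(\varphi)$ and $\sigma$ sends $\LVar(\ell \arrz r\ \constraint{\psi}) \supseteq \FV(\psi)$ to values, which forces each of the relevant $\theta(\avar)$ to be a variable or a value. In particular $(\varphi\theta) \wedge (\psi\theta)$ is satisfiable, so this rule contributes to $\Expd(\aterm,\bterm,\varphi,p)$ an equation $\aterm' \approx \bterm'\ \constraint{\varphi'}$ with
\[
\aterm\theta \approx \bterm\theta\ \constraint{(\varphi\theta) \wedge (\psi\theta)}\ \arr{\Rules}\ \aterm' \approx \bterm'\ \constraint{\varphi'}
\]
at position $1 \cdot p$ with rule $\ell \arrz r\ \constraint{\psi}$. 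Now apply \thmshort\ref{thm:constrainedterm} to this step with the substitution $\rho$: there is $\nu$ respecting $\varphi'$ such that $(\aterm\theta \approx \bterm\theta)\rho$ rewrites to $(\aterm' \approx \bterm')\nu$ as an ordinary step by the same rule at $1 \cdot p$. Since $(\aterm\theta)\rho = \aterm\gamma$ and $(\bterm\theta)\rho = \bterm\gamma$, and a step at $1 \cdot p$ leaves the second argument of $\approx$ untouched, this says exactly $\aterm\gamma \arr{\Rules,p} \aterm'\nu$ and $\bterm\gamma = \bterm'\nu$. Therefore $\aterm\gamma \arr{\Rules,p} \aterm'\nu \lrarr{\Expd(\aterm,\bterm,\varphi,p)} \bterm'\nu = \bterm\gamma$, which is the claim.

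For part~(2), fix an equation $\aterm' \approx \bterm'\ \constraint{\varphi'} \in \Expd(\aterm,\bterm,\varphi,p)$, produced by a rule $\ell \arrz r\ \constraint{\psi}$, its MGU $\theta$ with $\subpos{\aterm}{p}$, and the defining step $\aterm\theta \approx \bterm\theta\ \constraint{(\varphi\theta) \wedge (\psi\theta)} \arr{\Rules} \aterm' \approx \bterm'\ \constraint{\varphi'}$ at $1 \cdot p$. Given a ground constructor substitution $\delta$ respecting $\aterm' \approx \bterm'\ \constraint{\varphi'}$, apply \thmshort\ref{thm:constrainedterm:reverse} to that step with $\delta$: there is $\xi$ respecting $(\varphi\theta) \wedge (\psi\theta)$ such that $(\aterm\theta \approx \bterm\theta)\xi$ rewrites to $(\aterm' \approx \bterm')\delta$ by the same rule at $1 \cdot p$; again, since the second argument of $\approx$ is untouched, $(\bterm\theta)\xi = \bterm'\delta$ and $(\aterm\theta)\xi \arr{\Rules,p} \aterm'\delta$. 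Set $\mu := \theta\xi$, restricted to $\FV(\aterm) \cup \FV(\bterm) \cup \FV(\varphi)$, and $\dterm := \aterm\mu = (\aterm\theta)\xi$; then $\dterm \arr{\Rules,p} \aterm'\delta$ and $\bterm\mu = (\bterm\theta)\xi = \bterm'\delta$. To finish one checks that $\mu$ respects $\varphi$: $\interpret{\varphi\mu} = \interpret{(\varphi\theta)\xi} = \top$ since $\xi$ respects $(\varphi\theta) \wedge (\psi\theta)$, and $\mu(\avar) = \theta(\avar)\xi$ is a value for each $\avar \in \FV(\varphi)$ provided $\theta(\avar)$ is a variable or a value, so that it collapses to the value that $\xi$ assigns. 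Granting this, $\aterm'\delta \leftarrow_{\Rules,p} \dterm = \aterm\mu \lrarr{\{\aterm \approx \bterm\ \constraint{\varphi}\}} \bterm\mu = \bterm'\delta$, as required.

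The main obstacle is precisely the clause just used in part~(2): that $\theta$ does not replace a variable of $\varphi$ by a compound logical term of positive arity, for otherwise $\mu(\avar)$ is not a value and the final $\lrarr{}$-step breaks. In part~(1) this comes for free, since the rule there genuinely fires on a ground instance and so $\theta(\avar)$ is forced to be a variable or a value for $\avar \in \FV(\varphi)$; in part~(2) the rule need only \emph{unify} with $\subpos{\aterm}{p}$, so the claim has to be read off from the data: $\subpos{\aterm}{p} = \afun(\aterm_1,\ldots,\aterm_n)$ has only constructor terms as arguments, and for the equation to lie in $\Expd$ at all $\varphi\theta$ must itself be a logical term, which---as constructor symbols do not belong to $\Sigmalogic$---already rules out constructor structure being substituted for $\varphi$-variables, and rules out logical structure as well for the LCTRSs to which this lemma is actually applied, whose rule left-hand sides contain no non-value logical subterms. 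Making this last point fully precise (equivalently, spelling out the hypothesis on rule left-hand sides under which part~(2) holds verbatim) is the delicate part of the proof; everything else is the routine unifier-factorization bookkeeping above together with the already-established constrained-term transfer theorems.
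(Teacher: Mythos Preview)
Your argument follows the same route as the paper's: both parts are obtained from the constrained-term transfer theorems (Theorems~\ref{thm:constrainedterm} and~\ref{thm:constrainedterm:reverse}) after factoring the relevant substitution through the most general unifier, with $\approx$ treated as a fresh constructor. For part~(1) your proof matches the paper's and is correct; if anything you are more explicit in verifying that the factor substitution $\rho$ respects $(\varphi\theta) \wedge (\psi\theta)$.

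For part~(2) you again mirror the paper, but you have been more careful. The paper finishes by asserting ``$\eta \circ \gamma$ respects $\varphi$ (as $(\varphi\gamma \wedge \psi\gamma)\eta$ implies $\varphi\gamma\eta$)''; however, that parenthetical only yields $\interpret{\varphi\gamma\eta} = \top$, not that $\gamma(x)\eta$ is a \emph{value} for each $x \in \FV(\varphi)$, which the definition of ``respects'' also demands. You correctly identify the missing ingredient: one needs $\theta(x)$ (the paper's $\gamma(x)$) to be a variable or a value, and this can fail if the matching rule's left-hand side contains a compound logical subterm (a calculation symbol applied to arguments) at a position that unifies against $x$. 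Under standardness---left-hand sides in $\Terms(\Sigmaterms,\setvars)$, which holds for every LCTRS produced by the paper's translation---this cannot happen and both arguments go through; without it one can construct counterexamples to part~(2) as stated (e.g.\ $s = \afun(x)$, $t = c(x)$, $\varphi = (x > \nul)$ with a rule $\afun(y+\one) \arrz b$). So your final paragraph pinpoints precisely the hypothesis that the paper's own proof tacitly relies on.
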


\begin{proof}
$\subpos{\aterm\gamma}{p} = \subpos{\aterm}{p}\gamma = \afun(
\aterm_1\gamma,\ldots,\aterm_n\gamma)$, where all $\aterm_i\gamma$
are ground constructor terms.  Since $\afun$ is defined,
$\afun(\seq{\aterm}\gamma)$ reduces by quasi-reductivity, which can
only be a root reduction.
Thus, $\aterm\gamma = \subreplace{(\aterm\gamma)}{\ell\delta}{p}$ for
some rule $\ell \arrz r\ \constraint{\psi}$ and substitution $\delta$
which respects $\psi$.  Since
the rule variables are distinct from the ones in the equation,
we can assume that $\delta$ is an extension of $\gamma$, so $\aterm\gamma
= \subreplace{\aterm}{\ell}{p}\delta$.
Clearly, both $\varphi\delta$ and $\psi\delta$ evaluate to $\top$, and
$\delta(\avar)$ is a value for all $\avar \in \FV(\varphi) \cup \FV(\psi)$.

As $\delta$ unifies $\subpos{\aterm}{p}$ and $\ell$, there is a
most general unifier $\eta$, so $\subpos{\aterm}{p}\eta =
\ell\eta$ and we can write $\delta = \delta' \circ \eta$ for
some $\delta'$.  As $\delta(\avar)$ is a value for all $\avar \in
\FV(\varphi) \cup \FV(\psi)$, $\eta(\avar)$ can only be the same value,
or a variable.  Now, by definition of constrained term reduction,
any choice of $\Expdapp{\aterm}{\bterm}{\varphi}{p}$ has an element
$\aterm' \approx \bterm'\ \constraint{\varphi'}$ where we can write
(for suitable $\cterm,\eta'$ etc.):
\[
\begin{array}{rrcll}
 & \subreplace{\aterm\eta}{\ell\eta}{p} & \approx &
  \bterm\eta & \constraint{\varphi\eta \wedge
  \psi\eta} \\
\equalgen & \subreplace{\cterm}{\ell\eta'}{p} & \approx & \bterm'' &
  \constraint{\varphi''} \\
\arr{\ell \arrz r \constraint{\psi}\mathbf{,} 1 \cdot p} &
\subreplace{\cterm}{r\eta'}{p} & \approx & \bterm'' &
  \constraint{\varphi''} \\
\equalgen & \aterm' & \approx & \bterm' & \constraint{\varphi'} \\
\end{array}
\]
Consider the ``term'' $\aterm\delta \approx \bterm\delta$.  This is
an instance of the first constrained term in this reduction, so by
\thmshort\ref{thm:constrainedterm}, this ``term'' reduces at position
$1 \cdot p$ to $\aterm'\delta'' \approx \bterm'\delta''$ for some
substitution $\delta''$ which respects $\varphi'$.  As the reduction
happens inside $\aterm\delta$, we see that $\bterm\delta =
\bterm'\delta''$.
Thus, $\aterm\gamma = \aterm\delta \arr{\Rules} \aterm'\delta''
\leftrightarrow_{\Expdapp{\aterm}{\bterm}{\varphi}{p}} \bterm'\delta'' =
\bterm\delta = \bterm\gamma$.

As for the second part, note that by definition of $\Expd$ there are
a substitution $\gamma$ and constraint $\psi$ such that the
constrained term $\coterm{\aterm\gamma \approx \bterm\gamma}{\varphi
\gamma \wedge \psi\gamma}$ reduces to $\coterm{\aterm' \approx
\bterm'}{\varphi'}$ at position $1 \cdot p$.
By \thmshort\ref{thm:constrainedterm:reverse}, we find a
substitution $\eta$ which respects $\varphi\gamma \wedge \psi
\gamma$, such that $\aterm\gamma\eta \approx \bterm\gamma\eta
\arr{\Rules} \aterm'\delta \approx \bterm'\delta$ at position
$1 \cdot p$.  Since the reduction takes place in the left part of
$\approx$, we have $\bterm\gamma\eta = \bterm'\delta$ and
$\aterm\gamma\eta \arr{\Rules} \aterm'\delta$.  We are done if
also $\aterm\gamma\eta \leftrightarrow_{\aterm \approx \bterm\ 
\constraint{\varphi}} \bterm\gamma\eta$, which indeed holds because
$\eta \circ \gamma$ respects $\varphi$ (as
$(\varphi\gamma \wedge \psi\gamma)\eta$ implies $\varphi\gamma\eta$).
\qed
\end{proof}

\begin{lemma}
\label{lem:simplification}
Suppose that $(\E,\H,\flag)$ $\vdashri$ $(\E',\H',\flag')$ by any
inference rule other than \textsc{Completeness}.
Then,
\[
 \parlr{\E}
 {\subseteq}
 \left(\arrz^*_{\Rules\cup\H'} \cdot
 \parlr{\E'}
 \cdot \gets^*_{\Rules\cup\H'}\right)
\]
on ground terms. 
\end{lemma}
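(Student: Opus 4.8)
\medskip
\noindent\textit{Proof plan.}
The plan is to unfold a single $\lrarr{\E}$-step and simulate it according to which inference rule produced $(\E',\H',\flag')$. So fix ground terms with $\aterm \lrarr{\E} \bterm$. By definition of $\lrarr{\E}$ and by symmetry (both $\lrarr{\E}$ and the relation on the right are symmetric), we may write $\aterm = C[\ell\gamma]$ and $\bterm = C[r\gamma]$ for a ground context $C$, an equation $\ell \approx r\ \constraint{\varphi} \in \E$, and a substitution $\gamma$ that respects $\varphi$ and maps $\FV(\ell)\cup\FV(r)\cup\FV(\varphi)$ to ground terms (possible as $C[\ell\gamma]$ is ground). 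I first reduce to ground constructor substitutions: let $\widehat\gamma$ send each of these variables to a normal form of its $\gamma$-image; by termination (property~\ref{req:termination}) and quasi-reductivity (property~\ref{req:completereducible}) these exist and are ground constructor terms, and since values are normal forms $\widehat\gamma$ still respects $\varphi$. Then $C[\ell\gamma] \arrr{\Rules} C[\ell\widehat\gamma]$ and $C[r\gamma] \arrr{\Rules} C[r\widehat\gamma]$, and since $\arrr{\Rules} \subseteq \arrr{\Rules\cup\H'}$ it suffices to establish the factorisation between $C[\ell\widehat\gamma]$ and $C[r\widehat\gamma]$. Hence from now on $\gamma$ is a ground constructor substitution respecting $\varphi$. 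If $\ell \approx r\ \constraint{\varphi}$ already occurs in $\E'$ --- which covers the common part $\E$ that each of the seven applicable rules leaves untouched --- we are done at once, with no $\Rules\cup\H'$-steps on either side. Otherwise $\ell \approx r\ \constraint{\varphi}$ is exactly the equation processed by the rule; I rename it $\aterm \approx \bterm\ \constraint{\varphi}$ to match the inference-rule statements and go through the rules. (\textsc{Disprove} derives $\bot$, not a proof state, and \textsc{Completeness} is excluded by hypothesis; \textsc{Postulate} replaces $\E$ by a superset, so there $\lrarr{\E} \subseteq \lrarr{\E'}$ outright.)

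\textsc{Deletion}: if $\aterm = \bterm$ then $C[\aterm\gamma] = C[\bterm\gamma]$ (use ``$=$''); if $\varphi$ is unsatisfiable the case is vacuous. \textsc{Generalization}: the side condition gives $\delta$ respecting $\psi$ with $\aterm\gamma = \aterm'\delta$ and $\bterm\gamma = \bterm'\delta$, and since $\aterm'\approx\bterm'\ \constraint{\psi}\in\E'$ we get $C[\aterm\gamma] = C[\aterm'\delta] \lrarr{\E'} C[\bterm'\delta] = C[\bterm\gamma]$. \textsc{Simplification}: by definition $\coterm{\aterm\approx\bterm}{\varphi} \arr{\Rules\cup\H} \coterm{\cterm\approx\bterm}{\psi}$ with $\approx$ treated as a constructor, $\cterm\approx\bterm\ \constraint{\psi}\in\E'$ and $\H'=\H$; Theorem~\ref{thm:constrainedterm} turns this into a term step $(\aterm\approx\bterm)\gamma \arr{\Rules\cup\H} (\cterm\approx\bterm)\delta$ at the same position for some $\delta$ respecting $\psi$. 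As $\approx$ is a constructor, this step lies inside one of the two arguments, so either $\aterm\gamma \arr{\Rules\cup\H} \cterm\delta$ with $\bterm\gamma=\bterm\delta$, or $\aterm\gamma=\cterm\delta$ with $\bterm\gamma\arr{\Rules\cup\H}\bterm\delta$; in both cases $C[\aterm\gamma]\ \arrr{\Rules\cup\H'}\ C[\cterm\delta]\ \lrarr{\E'}\ C[\bterm\delta]\ \gets^*_{\Rules\cup\H'}\ C[\bterm\gamma]$, with one of the outer conversions empty. \textsc{Expansion} (either variant, as only $\Rules$- and $\E'$-steps are used and $\Rules\subseteq\Rules\cup\H'$): Lemma~\ref{lem:expansion}(1) gives $\dterm$ with $\aterm\gamma\ \arr{\Rules,p}\ \dterm\ \lrarr{\Expd(\aterm,\bterm,\varphi,p)}\ \bterm\gamma$, and $\Expd(\aterm,\bterm,\varphi,p)\subseteq\E'$, so $C[\aterm\gamma] \arr{\Rules\cup\H'} C[\dterm] \lrarr{\E'} C[\bterm\gamma]$.

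\textsc{EQ-deletion}: write the processed equation as $D[\seq{\aterm}] \approx D[\seq{\bterm}]\ \constraint{\varphi}$ with $\aterm_i,\bterm_i\in\Terms(\Sigmalogic,\FV(\varphi))$, so each $\aterm_i\gamma,\bterm_i\gamma$ is a ground logical term. If $\interpret{(\bigwedge_i \aterm_i=\bterm_i)\gamma}=\top$, then for each $i$ the terms $\aterm_i\gamma$ and $\bterm_i\gamma$ have the same value $v_i$, and since every ground logical term reduces to its value by calculations (an easy induction on the term using the rules of $\Rulescalc$) we get $\aterm_i\gamma\arrr{\Rulescalc}v_i\gets^*_{\Rulescalc}\bterm_i\gamma$; hence $C[D[\seq{\aterm}]\gamma]\arrr{\Rules}C[D\gamma[v_1,\ldots,v_n]]\gets^*_{\Rules}C[D[\seq{\bterm}]\gamma]$, using ``$=$'' in the middle. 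Otherwise $\interpret{(\neg\bigwedge_i\aterm_i=\bterm_i)\gamma}=\top$, and since $\FV(\neg\bigwedge_i\aterm_i=\bterm_i)\subseteq\FV(\varphi)$, $\gamma$ also respects $\varphi\wedge\neg\bigwedge_i\aterm_i=\bterm_i$; the corresponding equation being in $\E'$, $C[D[\seq{\aterm}]\gamma]\lrarr{\E'}C[D[\seq{\bterm}]\gamma]$. \textsc{Constructor}, with processed equation $\afun(\seq{\aterm})\approx\afun(\seq{\bterm})\ \constraint{\varphi}$ and $\E'\supseteq\{\aterm_i\approx\bterm_i\ \constraint{\varphi}\mid 1\le i\le n\}$: rewriting the arguments one at a time yields a chain $C[\afun(\seq{\aterm}\gamma)]\lrarr{\E'}\cdots\lrarr{\E'}C[\afun(\seq{\bterm}\gamma)]$ of $n$ $\lrarr{\E'}$-steps. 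This is the only rule where one $\E$-step does not collapse to at most one $\E'$-step, so the conclusion is to be read with $(\lrarr{\E'}\cup=)^*$ in the middle --- which is all that the subsequent appeal to Lemma~\ref{lem:principle} requires, as it works with reflexive--transitive closures. The main obstacle is the bookkeeping for \textsc{Simplification} and \textsc{Expansion}, where one has to move between constrained-term reduction and ordinary reduction via Theorem~\ref{thm:constrainedterm} and Lemma~\ref{lem:expansion} while keeping track that the witnessing substitution respects the constraint of the newly produced equation; the opening normalisation step is where quasi-reductivity and termination are used.
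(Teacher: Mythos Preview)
Your argument is correct and follows essentially the same rule-by-rule case analysis as the paper's proof; the only organisational difference is that you normalise $\gamma$ to a ground constructor substitution once at the outset, whereas the paper performs this reduction only inside the \textsc{Expansion} case. Your remark that \textsc{Constructor} really yields $(\lrarr{\E'}\cup=)^*$ rather than a single $\lrarr{\E'}$-step is accurate---the paper's own proof uses the same $n$-step chain without flagging the mismatch with the lemma's stated form, and the starred version is indeed all that the downstream lemmas require.
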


Here, $\parlr{\E'}$ denotes a \emph{parallel} application
of zero or more $\leftrightarrow_{\E'}$ steps.

\begin{proof}
It suffices to show that $\leftrightarrow_{\E\setminus\E'}$ $\subseteq$
$\left(\arrz^*_{\Rules\cup\H'} \cdot
\parlr{\E'}
\cdot \gets^*_{\Rules\cup\H'}\right)$ on ground terms: if
$C[u_1,\dots,u_n] \parlr{\E} C[v_1,\dots,v_n]$ because each $u_i
\leftrightarrow_{\rho_i} v_i$ for some $\rho_i \in \E$, then this
gives $u_i \arrz^*_{\Rules\cup\H'} u_i' \parlr{\E'} v_i' \gets^*_{
\Rules\cup\H'} v_i$ if $\rho_i \notin \E'$ and $u_i = u_i'
\parlr{\E'} v_i' = v_i$ if $\rho_i \in \E'$, so (sequentializing
parallel steps) $C[u_1,\dots,u_n] \arrz^*_{\Rules\cup\H'}
C[u_1',\dots,u_n'] \parlr{\E'} C[v_1',\dots,v_n']
\gets^*_{\Rules\cup\H'} C[v_1,\dots,v_n]$ as desired.
For all inference rules (except \textsc{Completeness}) either
$\E\setminus\E' = \emptyset$ or we can write $\E\setminus\E'$ $=$
$\{ s \simeq t\ \constraint{\varphi}\}$.
Consider which inference rule is applied for $\vdashri$.
\begin{itemize}
 \item (\textsc{Simplification}).
       Suppose that $s \simeq t\ \constraint{\varphi}$ is replaced by $u
       \approx t\ \constraint{\psi}$ where $\coterm{s \approx t}{\varphi}$
       $\arrz_{\Rules \cup \H}$ $\coterm{u \approx t}{\psi}$. 
       Let $C[s\gamma]$ $\leftrightarrow_{\{s\simeq t\ \constraint{
       \varphi}\}}$ $C[t\gamma]$, where $\gamma$
       is a substitution which respects $\varphi$.
       It follows from \thmshort\ref{thm:constrainedterm} that
       $s\gamma \approx t\gamma$ $\arr{\Rules \cup \H}$
       $u\delta \approx t\delta$ where $\delta$ is a
       substitution which respects $\psi$, and thus, as $\approx$ is a
       constructor,
       $C[s\gamma]$ $\arr{\Rules \cup \H}$ $C[u\delta]$ and
       $t\gamma = t\delta$.
       Then, $C[u\delta]$ $\leftrightarrow_{\{u \approx t\
       \constraint{\psi}\}}$ $C[t\delta] = C[t\gamma]$, and we have
       $C[s\gamma] \arr{\Rules \cup \H} \cdot \leftrightarrow_{\E'}
       C[t\gamma]$.  Symmetrically, if $C[\bterm\gamma]
       \leftrightarrow_{\{s\simeq t\ \constraint{\varphi}\}}
       C[\aterm\gamma]$, then
       $C[t\gamma] \leftrightarrow_{\E'} \cdot \gets_{\Rules \cup \H}
       C[s\gamma]$.
       Thus, $\leftrightarrow_{s \simeq t\ \constraint{\varphi}}$
       $\subseteq$ $(\to^*_{\Rules \cup \H}\cdot
       \leftrightarrow_{\E'} \cdot \gets^*_{\Rules \cup \H})$.
       This suffices because in this case $\H = \H'$.

\item (\textsc{Deletion}).
       In the case that $s$ $=$ $t$, the relation
       $\leftrightarrow_{\E\setminus\E'}$ is the identity.
       Otherwise, $\varphi$ is unsatisfiable, so $s \simeq t\ 
       \constraint{\varphi}$ is never used, i.e.,
       $\leftrightarrow_{\E\setminus\E'}$ $=$ $\emptyset$.

\item (\textsc{Expansion}).
       Suppose $C[s\gamma] \leftrightarrow_{\aterm \simeq \bterm\ 
       \constraint{\varphi}} C[t\gamma]$, where $\gamma$ respects
       $\aterm \simeq \bterm\ \constraint{\varphi}$; as we only
       consider ground terms, $\gamma(\avar)$ is ground for all
       variables in its domain.
       Noting that by quasi-reductivity and termination every
       ground term
       reduces to a ground constructor term, let $\delta$ be a
       substitution where for each $\avar \in
       \domain(\gamma)$, $\delta(\avar)$
       is a constructor term such that $\gamma(\avar) \arrr{\Rules}
       \delta(\avar)$.
       Then it follows from \lemshort\ref{lem:expansion} that
       $C[s\gamma] \arrr{\Rules} C[s\delta]\ (\arrz_{\Rules} \cdot
       \leftrightarrow_{\E'})\ C[t\delta] \leftarrow_\Rules^*
       C[t\gamma]$.
       The situation where $C[\bterm\gamma] \leftrightarrow_{\aterm
       \simeq \bterm\ \constraint{\varphi}} C[\aterm\gamma]$ is
       symmetric.

\item (\textsc{EQ-deletion}).
       Let $s$ $=$ $C[s_1,\ldots,s_n]$ and $t$ $=$ $C[t_1,\ldots,t_n]$ 
       where $s_1,t_1,\ldots,$ $s_n,t_n \in \Terms(\Sigmalogic,\FV(\varphi))$.
       Any ground substitution $\gamma$ which
       respects $\varphi$, and whose domain contains all variables in
       the terms $\aterm_i$ and $\bterm_i$, must map these variables to
       values.  Therefore,
       $\aterm_i\gamma \arrzcalc^* v_i$ and $\bterm_i\gamma
       \arrzcalc^* w_i$, where $v_i$ is the value of $\aterm_i\gamma$
       and $w_i$ is the value of $\bterm_i\gamma$.
       Now, suppose $q \leftrightarrow_{\aterm \simeq \bterm
       \constraint{\varphi}} u$ for ground $q,u$.
       Then (a) $q = D[C[\aterm_1,\ldots,\aterm_n]]\gamma$ and $u =
       D[C[\bterm_1,\ldots,\bterm_n]]\gamma$ for some ground
       $\gamma$ which respects $\varphi$, or (b)
       $u = D[C[\seq{\bterm}]]\gamma$ and $q = D[C[
       \seq{\aterm}]]\gamma$.  In case (a),
       $q \arrr{\Rules} D\gamma[C\gamma[v_1,\ldots,v_n]]$ and
       $u \arrr{\Rules} D\gamma[C\gamma[w_1,\ldots,w_m]]$.
       If each $v_i = w_i$, then clearly $q \rightarrow^*_\Rules
       \cdot \leftarrow^*_\Rules u$.
       Otherwise, $(\varphi \wedge \neg (\aterm_1 = \bterm_1 \wedge
       \cdots \wedge \aterm_n = \bterm_n))\gamma$ is valid, so
       we easily get the desired
       $q \arrr{\Rules} \cdot \lrarr{\E'} \cdot \leftarrow^*_\Rules u$.
       Case (b) is symmetric.

\item (\textsc{Disprove})
  In this case we do not have $(\E,\H,b) \vdashri (\E',\H',b')$.

\item (\textsc{Constructor})
  Let $\aterm = \afun(\aterm_1,\ldots,\aterm_n),\ \bterm =
  \afun(\bterm_1,\ldots,\bterm_n)$, and suppose $C[\aterm\gamma]
  \leftrightarrow_{\{\aterm \approx \bterm\ \constraint{\varphi}\}}
  C[\afun(\bterm\gamma)]$, where $\gamma$ is a substitution which
  respects $\varphi$.  Since $\E'$ contains all equations
  $\aterm_i \approx \bterm_i\ \constraint{\varphi}$, we have
  $C[\aterm\gamma] = C[\afun(\aterm_1\gamma,\ldots,\aterm_n\gamma)]
  \parlr{\E'}
  C[\afun(\bterm_1\gamma,\ldots,\bterm_n\gamma)] =
  C[\bterm\gamma]$.

\item (\textsc{Postulate})
  $\E \setminus \E' = \emptyset$, so there is nothing to prove!

\item (\textsc{Generalization})
  Suppose that $\aterm \approx \bterm\ \constraint{\varphi}$ is
  replaced by $\aterm' \approx \bterm'\ \constraint{\psi}$.
  Suppose that $C[\aterm\gamma] \leftrightarrow_{\{\aterm \approx
  \bterm\ \constraint{\varphi}\}} C[\bterm\gamma]$ for some
  substitution $\gamma$ which respects $\varphi$.  Then there
  exists a substitution $\delta$ which respects $\psi$ such that
  $C[\aterm\gamma] = C[\aterm'\delta] \leftrightarrow_{\{\aterm'
  \approx \bterm' \constraint{\varphi}} C[\bterm'\delta] =
  C[\bterm\gamma]$.
\qed
\end{itemize}
\end{proof}

\begin{lemma}
\label{lem:H-rules}
 Suppose that
 $(\E,\H,\flag)$ $\vdashri$
 $(\E',\H',\flag')$
 by any inference rule other than \textsc{Completeness}.
If
\[
 \arrz_{\Rules \cup \H} {\subseteq} 
 \left(\arrz_{\Rules} \cdot \arrz^*_{\Rules\cup\H} \cdot
 \parlr{\E}
 \cdot \gets^*_{\Rules\cup\H}\right)
\]
 on ground terms, then
\[
 \arrz_{\Rules \cup \H'} {\subseteq} 
 \left(\arrz_{\Rules} \cdot \arrz^*_{\Rules\cup\H'} \cdot
 \parlr{\E'}
 \cdot \gets^*_{\Rules\cup\H'}\right)
\]
 on ground terms. 
\end{lemma}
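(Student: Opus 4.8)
The plan is to take an arbitrary ground reduction $s \arrz_{\Rules \cup \H'} t$ and split on whether the rule used lies in $\Rules \cup \H$ or in $\H' \setminus \H$. Since $\H' = \H$ for every inference rule except the second form of \textsc{Expansion}, the only genuinely new situation is a step with the rule $\aterm \arrz \bterm\ \constraint{\varphi}$ that \textsc{Expansion} adds; in that case we additionally know that $\Expd(\aterm,\bterm,\varphi,p) \subseteq \E'$ for the chosen basic position $p$ of $\aterm$.

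First I would dispose of the case where the step uses a rule of $\Rules \cup \H$. Then $s \arrz_{\Rules \cup \H} t$ holds between ground terms, so the hypothesis gives $s\ \arrz_\Rules \cdot \arrz^*_{\Rules \cup \H} \cdot (\leftrightarrow_\E \cup =) \cdot \gets^*_{\Rules \cup \H}\ t$. From $\H \subseteq \H'$ we get $\arrz^*_{\Rules \cup \H} \subseteq \arrz^*_{\Rules \cup \H'}$ and symmetrically for $\gets^*$, and Lemma~\ref{lem:simplification} (which applies to the very same inference step) replaces $\leftrightarrow_\E$ on ground terms by $\arrz^*_{\Rules \cup \H'} \cdot (\leftrightarrow_{\E'} \cup =) \cdot \gets^*_{\Rules \cup \H'}$. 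Merging adjacent $\arrz^*_{\Rules \cup \H'}$ and $\gets^*_{\Rules \cup \H'}$ blocks, the chain collapses to $s\ \arrz_\Rules \cdot \arrz^*_{\Rules \cup \H'} \cdot (\leftrightarrow_{\E'} \cup =) \cdot \gets^*_{\Rules \cup \H'}\ t$ (if the middle relation was the identity, it just stays the identity). This handles every inference rule with $\H' = \H$ at once.

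The substantive case is a step with the rule $\aterm \arrz \bterm\ \constraint{\varphi}$ added by \textsc{Expansion}, so $s = C[\aterm\gamma]$ and $t = C[\bterm\gamma]$ with $\gamma$ respecting that rule. Groundness of $s$ and $t$ forces $\domain(\gamma)$ to contain $\FV(\aterm) \cup \FV(\bterm) \cup \FV(\varphi)$ and makes $\gamma$ a ground substitution respecting $\varphi$. Exactly as in the proof of Lemma~\ref{lem:theorem:alternative}, termination and quasi-reductivity yield a ground constructor substitution $\delta$ with the same domain and $\gamma(x) \arrr{\Rules} \delta(x)$ for all $x$; since values are irreducible, $\delta$ still respects $\aterm \approx \bterm\ \constraint{\varphi}$. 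Hence $s = C[\aterm\gamma] \arrr{\Rules} C[\aterm\delta]$ and $C[\bterm\gamma] \arrr{\Rules} C[\bterm\delta]$. Now Lemma~\ref{lem:expansion}(1), applied to the ground constructor substitution $\delta$ and the basic position $p$, gives $\aterm\delta\ (\arrz_{\Rules,p} \cdot \leftrightarrow_{\Expd(\aterm,\bterm,\varphi,p)})\ \bterm\delta$, hence $C[\aterm\delta]\ \arrz_\Rules \cdot \leftrightarrow_{\E'}\ C[\bterm\delta]$ because $\Expd(\aterm,\bterm,\varphi,p) \subseteq \E'$. Concatenating, $s = C[\aterm\gamma] \arrr{\Rules} C[\aterm\delta] \arrz_\Rules \cdot \leftrightarrow_{\E'} C[\bterm\delta] \gets^*_\Rules C[\bterm\gamma] = t$; since $\arrr{\Rules} \cdot \arrz_\Rules \subseteq \arrz_\Rules \cdot \arrr{\Rules} \subseteq \arrz_\Rules \cdot \arrz^*_{\Rules \cup \H'}$ and $\gets^*_\Rules \subseteq \gets^*_{\Rules \cup \H'}$, this is of the required shape, with no $=$-alternative needed.

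The step I expect to be the main obstacle is precisely this \textsc{Expansion}-with-rule case: Lemma~\ref{lem:expansion}(1) cannot be invoked directly for $\gamma$, so one must first normalise $\gamma$ to a ground \emph{constructor} substitution $\delta$ and then reassemble the resulting derivation so that it genuinely begins with a single $\arrz_\Rules$-step --- which is why the leading $\arrr{\Rules}$ of the normalisation has to be absorbed only after the guaranteed root step supplied by Lemma~\ref{lem:expansion}(1). Everything else reduces, through $\H' = \H$, to the first case together with Lemma~\ref{lem:simplification}, and needs no additional argument.
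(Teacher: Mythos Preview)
Your proof is correct and follows essentially the same route as the paper's. The paper is simply terser: it disposes of all non-\textsc{Expansion} inference rules (and implicitly also the $\Rules\cup\H$-steps under \textsc{Expansion}) with a one-line appeal to Lemma~\ref{lem:simplification}, and for a step with the newly added rule it just writes ``it follows from Lemma~\ref{lem:expansion} that $s\ (\rightarrow^*_\Rules \cdot \arrz_{\Rules} \cdot \leftrightarrow_{\E'} \cdot \leftarrow^*_\Rules)\ t$'', leaving the normalisation of $\gamma$ to a ground constructor substitution implicit (that argument having already been spelled out in the \textsc{Expansion} case of the proof of Lemma~\ref{lem:simplification}). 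Your version makes exactly these steps explicit, and your reshuffling $\arrr{\Rules}\cdot\arrz_\Rules = \arrz_\Rules\cdot\arrr{\Rules}$ is just the observation $\arrz_\Rules^+ = \arrz_\Rules\cdot\arrz_\Rules^*$.
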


\begin{proof}
 It suffices to consider the case that \textsc{Expansion} is applied (for the
 other cases, we use \lemshort\ref{lem:simplification}).
 Suppose that $s$ $\arrz_{\H' \setminus \H}$ $t$.
 Using that, by quasi-reductivity and termination, every
 ground term reduces to a ground constructor term,
it follows from \lemshort\ref{lem:expansion} that there
exist ground constructor terms $s',t'$ such that $s$
$\rightarrow^*_\Rules$ $s'$ $\left(\arrz_{\Rules} \cdot
 \leftrightarrow_{\E'}\right)$ $t'$ $\leftarrow^*_\Rules$ $t$, and
 hence:
 \[
 s\ \left(\arrz_{\Rules} \cdot \arrz^*_{\Rules\cup\H'} \cdot
 \parlr{\E'}
 \cdot \gets^*_{\Rules\cup\H'}\right)\ t
 \]
\qed
\end{proof}

\begin{lemma}
\label{lem:properties_of_ri-step}
 Suppose that
$(\E,\H,\flag)$ $\vdashri \cdots \vdashri$ $(\E',\H',\flag')$. 
 Then:
 \begin{enumerate}
  \item $\parlr{\E}$
        $\subseteq$
        $\left(\arrz^*_{\Rules\cup\H'} \cdot
        \parlr{\E'}
        \cdot \gets^*_{\Rules\cup\H'}\right)$
        on ground terms,
  \item if 
        $\arrz_{\Rules \cup \H}$ $\subseteq$
        $\left(\arrz_{\Rules} \cdot \arrz^*_{\Rules\cup\H} \cdot
        \parlr{\E}
        \cdot \gets^*_{\Rules\cup\H}\right)$
        on ground terms, then
        \[
        \arrz_{\Rules \cup \H'} {\subseteq} 
        \left(\arrz_{\Rules} \cdot \arrz^*_{\Rules\cup\H'} \cdot
        \parlr{\E'}
        \cdot \gets^*_{\Rules\cup\H'}\right)
        \]
        on ground terms, and
  \item if $\Rules\cup\H$ is terminating, then so is
        $\Rules\cup\H'$.
 \end{enumerate}
\end{lemma}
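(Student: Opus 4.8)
The plan is an induction on the length of the derivation $(\E,\H,\flag)\vdashri\cdots\vdashri(\E',\H',\flag')$, establishing all three statements simultaneously; the empty derivation is trivial. For the inductive step I would split the derivation as $(\E,\H,\flag)\vdashristar(\E'',\H'',\flag'')\vdashri(\E',\H',\flag')$, apply the induction hypothesis to the prefix, and then analyse the single last step, composing the two.

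Before the case analysis I would record two facts about the last step that hold for every inference rule: no inference rule ever deletes an induction rule, so $\H''\subseteq\H'$ and hence $\arrr{\Rules\cup\H''}\subseteq\arrr{\Rules\cup\H'}$; and the only rule that enlarges $\H$ is the second variant of \textsc{Expansion}, whose applicability condition is precisely that $\Rules\cup\H'$ be terminating. From these, statement~3 is immediate: if $\Rules\cup\H$ is terminating then inductively $\Rules\cup\H''$ is, and the last step preserves this by the \textsc{Expansion} condition.

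For statements~1 and~2, if the last step uses any rule other than \textsc{Completeness}, then \lemshort\ref{lem:simplification} and \lemshort\ref{lem:H-rules} are exactly the single-step forms of~1 and~2 for that step, so only the composition with the induction hypothesis remains. For~1, from $\lrarr{\E}\subseteq\arrr{\Rules\cup\H''}\cdot(\lrarr{\E''}\cup{=})\cdot\gets^*_{\Rules\cup\H''}$ and $\lrarr{\E''}\subseteq\arrr{\Rules\cup\H'}\cdot(\lrarr{\E'}\cup{=})\cdot\gets^*_{\Rules\cup\H'}$, substituting the second into the middle factor of the first and using $\H''\subseteq\H'$ lets the inner $\Rules\cup\H'$-steps be absorbed into the outer ones, collapsing the composite to $\arrr{\Rules\cup\H'}\cdot(\lrarr{\E'}\cup{=})\cdot\gets^*_{\Rules\cup\H'}$ — all on ground terms, which is preserved since reducts and $\lrarr{}$-steps of ground terms are ground. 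For~2, the induction hypothesis turns the assumed inclusion for $\Rules\cup\H$ into the corresponding inclusion for $\Rules\cup\H''$ and $\E''$, and \lemshort\ref{lem:H-rules} then upgrades it to $\Rules\cup\H'$ and $\E'$.

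The genuinely delicate point is the \textsc{Completeness} rule, which is not a local edit of the proof state but is, by definition, witnessed by a sub-derivation $(\E'',\H'',\complflag)\vdashristar(\E',\H',\incomplflag)$ built from the other rules. I would handle a top-level \textsc{Completeness} step by replacing it with its witness, obtaining a derivation from $(\E,\H,\flag)$ to $(\E',\H',\flag')$ that uses one fewer top-level \textsc{Completeness} step; running the induction on a weight that charges each \textsc{Completeness} step the weight of its witness plus one, rather than on plain length, keeps this well-founded, and the monotonicity $\H''\subseteq\H'$ survives because no rule deletes an induction rule. I expect this \textsc{Completeness} bookkeeping to be the only real obstacle; everything else is routine relational composition on top of Lemmas~\ref{lem:simplification} and~\ref{lem:H-rules} together with the termination side-condition built into \textsc{Expansion}.
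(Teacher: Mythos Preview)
Your proposal is correct and follows essentially the same route as the paper: a weighted induction (counting the steps inside \textsc{Completeness} witnesses), the single-step Lemmas~\ref{lem:simplification} and~\ref{lem:H-rules} for all rules other than \textsc{Completeness}, the monotonicity $\H\subseteq\H'$, and relational composition.

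The one structural difference is that the paper peels off the \emph{first} step rather than the last. This makes the \textsc{Completeness} case marginally cleaner: when the first step is \textsc{Completeness}, its witness is itself a shorter derivation starting from $(\E,\H)$, so the induction hypothesis applied directly to the witness already yields the needed inclusion from $\E$ to $\E_1$ (respectively the needed implication for part~2), after which the induction hypothesis on the remaining suffix finishes. Your last-step decomposition instead inlines the witness and reapplies the induction hypothesis to the concatenated derivation; this is equally valid under the same weight measure, just one level less direct. Either orientation buys the same thing.
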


\begin{proof}\ 
In the following, we will consider relations limited to
ground terms only.
We prove the statements by induction on the number of
$\vdashrinoblank$-steps,
where steps in the premise of a \textsc{Completeness} step are also counted.
The base case is evident, so suppose $(\E,\H,\flag) \vdashri (\E_1,\H_1,
\flag_1) \vdashristar (\E',\H',\flag')$.
\begin{enumerate}
 \item If the first step uses inference rule \textsc{Completeness}, then
       $(\E,\H,\flag) \vdashristar (\E_1,\H_1,\incomplflag)$ in fewer
       steps, so by the induction hypothesis:
       \[
       \parlr{\E}
       {\subseteq}
       \left(\arrz^*_{\Rules\cup\H_1} \cdot
       \parlr{\E_1}
       \cdot \gets^*_{\Rules\cup\H_1}\right)
       \]
       If the first step uses another inference rule, this same property
       follows from \lemshort\ref{lem:simplification}.  By the induction
       hypothesis we have
       \[
       \parlr{\E_1}
       {\subseteq}
       \left(\arrz^*_{\Rules\cup\H'} \cdot
       \parlr{\E'}
       \cdot \gets^*_{\Rules\cup\H'}\right)
       \]
       It follows from $\H_1$ $\subseteq$ $\H'$ that
       \[
       \parlr{\E}
       {\subseteq}
       \left(\arrz^*_{\Rules\cup\H'} \cdot
       \parlr{\E_1}
       \cdot \gets^*_{\Rules\cup\H'}\right)
       \]
       By replacing
       $\parlr{\E_1}$
       with
       $\left(\arrz^*_{\Rules\cup\H'} \cdot
       \parlr{\E'}
       \cdot \gets^*_{\Rules\cup\H'}\right)$,
       we
       thus obtain
       \[
       \parlr{\E}
       \subseteq
       \left(\arrz^*_{\Rules\cup\H'} \cdot
       \arrz^*_{\Rules\cup\H'} \cdot
       \parlr{\E'}
       \cdot \gets^*_{\Rules\cup\H'} \cdot
       \cdot \gets^*_{\Rules\cup\H'}\right)
       \]
 \item Assume
       $
       \arrz_{\Rules \cup \H} {\subseteq} 
       \left(\arrz_{\Rules} \cdot \arrz^*_{\Rules\cup\H} \cdot
       \parlr{\E}
       \cdot \gets^*_{\Rules\cup\H}\right)
       $.
       By the induction hypothesis (in case of \textsc{Completeness})
       or \lemshort\ref{lem:H-rules} (otherwise),
       \[
       \arrz_{\Rules \cup \H_1} {\subseteq} 
       \left(\arrz_{\Rules} \cdot \arrz^*_{\Rules\cup\H_1} \cdot
       \parlr{\E_1}
       \cdot \gets^*_{\Rules\cup\H_1}\right)
       \]
       We complete by the induction hypothesis on
       $(\E_1,\H_1,\flag_1) \vdashristar (\E',\H',\flag')$.

 \item Trivial with the induction hypothesis, with the first step 
       using either the induction hypothesis again (in case of
       \textsc{Completeness}), the definition of \textsc{Expansion},
       or the observation that other inference rules do not alter
       $\H$.
\qed
\end{enumerate}
\end{proof}

\noindent
Thus we obtain \lemshort\ref{lem:ri-correctness:sound} or,
equivalently, \lemshort\ref{lem:ri-correctness:part1},
as the
first part of \thmshort\ref{thm:ri-correctness}.

\begin{lemma}\label{lem:ri-correctness:part1}
If $(\E,\emptyset,\flag) \vdashri \cdots \vdashri (\emptyset,\H,
\flag')$, then every equation in $\E$ is an inductive theorem of
$\Rules$.
\end{lemma}

\begin{proof}
It is clear that $\arrz_{\Rules}$ $\subseteq$ $\arrz_{\Rules\cup\H}$.
It follows from \lemshort\ref{lem:properties_of_ri-step} that:
\begin{itemize}
 \item $\leftrightarrow_\E$ $\subseteq$
       $\parlr{\E}$ $\subseteq$ $\arrz^*_{\Rules\cup\H} \cdot
       \gets^*_{\Rules\cup\H}$ on ground terms, 
 \item $\arrz_{\Rules \cup \H}$ $\subseteq$ $\arrz_\Rules \cdot
       \arrz^*_{\Rules\cup\H} \cdot \gets^*_{\Rules\cup\H}$ on ground terms, and
 \item $\Rules\cup\H$ is terminating.
\end{itemize}
By \lemshort\ref{lem:principle} (as equality is included in
$\leftrightarrow_\Rules^*$) we find that $\leftrightarrow^*_\Rules$ $=$
$\leftrightarrow^*_{\Rules\cup\H}$,
and hence $\leftrightarrow_\E$ $\subseteq$ $\leftrightarrow^*_\Rules$,
on ground terms.
We complete with Lemma~\ref{lem:theorem:alternative}.
\qed
\end{proof}

Moving on to \emph{disproving}, we need two auxiliary lemmas:

\begin{lemma}\label{lem:completeness:final}
If $\Rules$ is confluent and $(\E,\H,\complflag) \vdashri \bot$, then
$\E$ contains an equation $\aterm \approx \bterm\ 
\constraint{\varphi}$ which is not an inductive theorem.
\end{lemma}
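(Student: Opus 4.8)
The plan is first to note that \textsc{Disprove} is the only inference rule that can produce $\bot$, so the step $(\E,\H,\complflag) \vdashri \bot$ must be an application of \textsc{Disprove} to some equation $e := \aterm \simeq \bterm\ \constraint{\varphi} \in \E$ meeting one of its three side conditions. It then suffices to show that this $e$ is not an inductive theorem, and since $\lrarrr{\Rules}$ is symmetric we may assume $e = \aterm \approx \bterm\ \constraint{\varphi}$. Concretely, for each case we will exhibit a ground constructor substitution $\gamma$ respecting $e$ with $\aterm\gamma \not\lrarrr{\Rules} \bterm\gamma$.

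The main tool is confluence together with the restrictions from \secshort\ref{subsec:ri:restrictions}. By termination every term has a normal form; by quasi-reductivity a normal form of a ground term is a ground constructor term; and by confluence such a normal form is unique, so two ground terms with distinct normal forms cannot be joined, hence are not $\lrarrr{\Rules}$-convertible. Three auxiliary observations make this usable: (i) a ground constructor term is itself a normal form (its root, and the root of each of its subterms, is a constructor, but no left-hand side of a rule in $\Rules \cup \Rulescalc$ is rooted by a constructor); (ii) likewise every value is a normal form, and a ground logical term reduces by calculations to its value, so that value is its unique normal form; and (iii) if $\afun$ is a constructor then a reduction of $\afun(\seq{\bterm})$ can only take place strictly below the root, so the normal form of $\afun(\seq{\bterm})$ is again rooted by $\afun$. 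Thus it is enough, in each case, to pick $\gamma$ so that $\aterm\gamma$ and $\bterm\gamma$ reduce to normal forms with different root symbols (or to two distinct ground constructor terms).

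In the first \textsc{Disprove} case, $\aterm,\bterm$ are logical terms of a theory sort and $\varphi \wedge \aterm \neq \bterm$ is satisfiable; choose $\gamma$ with domain $\FV(\varphi)\cup\FV(\aterm)\cup\FV(\bterm)$ mapping every variable to a value, making $\varphi$ true and $\interpret{\aterm\gamma} \neq \interpret{\bterm\gamma}$. As values are ground constructor terms, $\gamma$ is a ground constructor substitution respecting $e$, and $\aterm\gamma,\bterm\gamma$ are ground logical terms with different values, hence by (ii) different normal forms. In the second case, $\aterm = \afun(\seq{\aterm})$ and $\bterm = \bfun(\seq{\bterm})$ with $\afun \neq \bfun$ constructors; any ground constructor substitution $\gamma$ respecting $e$ (using that $\varphi$ is satisfiable, plus restriction~\ref{red:groundinstances} together with quasi-reductivity and termination to obtain ground constructor terms of every sort) yields normal forms of $\aterm\gamma,\bterm\gamma$ rooted respectively by $\afun$ and $\bfun$, hence distinct by (iii). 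In the third case, $\aterm$ is a variable outside $\FV(\varphi)$, $\varphi$ is satisfiable, and there are at least two constructors with output sort $\asort$; take $\gamma$ respecting $\varphi$, defined on $\FV(\varphi) \cup \FV(\bterm)$ by values and ground constructor terms, and then set $\gamma(\aterm)$ to a ground constructor term of sort $\asort$ whose root is a constructor distinct from (the root of) $\bterm\gamma$ --- possible exactly because two such constructors are available. If $\bterm$ is a variable this makes $\aterm\gamma$ and $\bterm\gamma$ two distinct ground constructor terms; if $\bterm = \bfun(\seq{\bterm})$ with $\bfun$ a constructor, the normal form of $\aterm\gamma$ has a root distinct from $\bfun$, whereas that of $\bterm\gamma$ is rooted by $\bfun$. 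In all cases $\aterm\gamma \not\lrarrr{\Rules} \bterm\gamma$, so $e$ is not an inductive theorem.

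I expect the main obstacle to be the case analysis in the third \textsc{Disprove} case: one has to track whether $\bterm$ itself occurs in $\varphi$ (so that $\gamma(\bterm)$ is already forced to be some value), and argue that having two distinct constructors of sort $\asort$ always leaves room to choose $\gamma(\aterm)$ differing from $\bterm\gamma$ in the root symbol. A related point to pin down carefully is observations (i)--(iii) --- that values and ground constructor terms are normal forms and that a constructor root is never rewritten away --- since these are precisely what let confluence convert ``distinct normal forms'' into ``not $\lrarrr{\Rules}$-convertible''. Finally, it is worth flagging that the second case of \textsc{Disprove} should assume $\varphi$ satisfiable (otherwise the equation is vacuously an inductive theorem), and this assumption is used above.
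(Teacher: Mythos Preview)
Your proposal is correct and follows essentially the same route as the paper's proof: identify \textsc{Disprove} as the only rule yielding $\bot$, use confluence plus termination plus quasi-reductivity to reduce the task to exhibiting a ground constructor substitution $\gamma$ for which $\aterm\gamma$ and $\bterm\gamma$ have distinct normal forms, and then do the three-way case split. Your explicit observations (i)--(iii) make the ``constructor roots survive reduction'' step more transparent than the paper does, and you are right to flag that the second \textsc{Disprove} case tacitly needs $\varphi$ satisfiable --- the paper's proof silently assumes this as well.
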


\begin{proof}
By confluence and termination together, we can speak of \emph{the}
normal form $\cterm\!\downarrow_\Rules$ of any term $\cterm$; if
$\cterm$ is ground, then by quasi-reductivity its normal form is a
ground constructor term.
A property of confluence is that if $\dterm \leftrightarrow_\Rules^*
\eterm$, then
$\dterm\!\downarrow_\Rules = \eterm\!\downarrow_\Rules$.
So, it suffices to prove that for some $\aterm \approx \bterm\ 
\constraint{\varphi} \in \E$ there is a ground constructor
substitution $\gamma$ which respects this equation, such that
$\aterm\gamma$ and $\bterm\gamma$ have distinct normal forms.

The only inference rule that could be used to obtain $(\E,\H,
\complflag) \vdashri \bot$ is \textsc{Disprove}, so $\E = \E' \cup
\{ \aterm \simeq \bterm\ \constraint{\varphi} \}$ and one of the
following holds:
\begin{enumerate}
\item $\aterm,\bterm \in \Terms(\Sigmalogic,\setvars)$ with
  $\varphi \wedge \aterm \neq \bterm$ satisfiable.  That is, there
  is a substitution $\gamma$ mapping all variables in the equation to
  values, such that $\varphi\gamma$ is valid and $\aterm\gamma$ and
  $\bterm\gamma$ reduce to different values by $\arrzcalc$.
  We are done since all values are normal forms.
\item $\aterm = \afun(\aterm_1,\ldots,\aterm_n)$ and $\bterm =
  \bfun(\bterm_1,\ldots,\bterm_{m})$ with $\afun$ and $\bfun$ different
  constructors, and $\varphi$ is satisfiable, so there is a
  substitution $\delta$ mapping all variables in $\varphi$ to values
  such that $\varphi\delta$ is valid.  Let $\gamma$ be an extension
  of $\delta$ which additionally maps all other variables in $\aterm,
  \bterm$ to ground terms (by assumption, ground instances of all
  variables exist).  Then $\varphi\gamma$ is still valid, and
  $\aterm\gamma$ and $\bterm\gamma$ are ground terms with
  $\aterm\gamma \rightarrow_\Rules^* (\aterm\gamma)\!\!
  \downarrow_\Rules = f((\vec{\aterm}\gamma)\!\!\downarrow_\Rules)
  \neq g((\vec{\bterm}\gamma)\!\!\downarrow_\Rules) =
  (\bterm\gamma)\!\!\downarrow_\Rules \leftarrow_\Rules^* \bterm
  \gamma$.
\item $\aterm : \asort$ is a variable not occurring in $\varphi$,\ 
  $\varphi$ is satisfiable, there are at least two different
  constructors $\afun,\bfun$ with output sort $\asort$ and either
  $\bterm$ is a variable distinct from $\aterm$ or $\bterm$ has a
  constructor symbol at the root.
  By satisfiability of $\varphi$, a substitution $\delta$ exists
  whose domain does not contain $\aterm$, with $\varphi\delta$ valid.
  If $\bterm$ is a variable, let $\gamma$ be an extension of $\delta$
  mapping $\aterm$ to some ground term rooted by $\afun$ and $\bterm$
  to a ground term rooted by $\bfun$ (by assumption ground instances
  always exist).
  If $\bterm = \afun(\seq{\bterm})$, then let $\gamma$ be an extension
  of $\delta$ mapping $\aterm$ to some ground term rooted by $\bfun$
  and mapping all other variables in $\bterm$ to ground terms as well.
  Either way, $\varphi\gamma$ is valid and $
  (\aterm\gamma)\!\downarrow_\Rules\ \neq\ 
  (\bterm\gamma)\!\downarrow_\Rules
  $.
\qed
\end{enumerate}
\end{proof}

\begin{lemma}\label{lem:completeness:induction}
Suppose that $\rightarrow_{\Rules \cup \H}$ is terminating and
that
$\rightarrow_{\Rules \cup \H} \ \subseteq\ \rightarrow_\Rules \cdot
\rightarrow_{\Rules \cup \H}^* \cdot
\parlr{\E}
\cdot \leftarrow_{\Rules \cup \H}^*$.  If, moreover,
$\Rules$ is confluent, $(\E,\H,\complflag) \vdashri (\E',\H',
\complflag)$, and $\leftrightarrow_{\E} \cup \leftrightarrow_{\H}\ 
\subseteq\ \leftrightarrow_\Rules^*$ on ground terms, then
$\leftrightarrow_{\E'} \cup \leftrightarrow_{\H'}\ \subseteq\ 
\leftrightarrow_\Rules^*$ on ground terms.
\end{lemma}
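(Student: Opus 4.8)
Restricting all relations to ground terms throughout, the hypotheses of the lemma are exactly what \lemshort\ref{lem:principle} requires with $\arrz_1:=\arr{\Rules}$ and $\arrz_2:=\arr{\Rules\cup\H}$ (as in the proof of \lemshort\ref{lem:ri-correctness:sound}, using that $\lrarr{\E}\cup{=}\subseteq\lrarrr{\Rules}$ on ground terms), so we first obtain $\lrarrr{\Rules}=\lrarrr{\Rules\cup\H}$ on ground terms. Moreover, by \lemshort\ref{lem:theorem:alternative} the hypothesis $\lrarr{\E}\subseteq\lrarrr{\Rules}$ on ground terms means precisely that every equation in $\E$ is an inductive theorem. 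The heart of the proof is the claim that then every equation in $\E'$ is an inductive theorem as well, equivalently (again by \lemshort\ref{lem:theorem:alternative}) $\lrarr{\E'}\subseteq\lrarrr{\Rules}$ on ground terms.

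I would establish the claim by a case analysis on the inference rule used in $(\E,\H,\complflag)\vdashri(\E',\H',\complflag)$. The rules \textsc{Postulate}, \textsc{Generalization} and \textsc{Disprove} cannot occur: the first two force the flag \incomplflag, and the third yields $\bot$. For \textsc{Deletion} and \textsc{Completeness} we have $\E'\subseteq\E$ (for \textsc{Completeness} this is built into the rule), so nothing needs to be shown. For \textsc{EQ-deletion} the single new equation only strengthens the constraint of an equation of $\E$, hence every ground constructor substitution respecting it respects that equation, and the theorem property carries over. For \textsc{Simplification}, where $\coterm{\aterm\approx\bterm}{\varphi}\arr{\Rules\cup\H}\coterm{\cterm\approx\bterm}{\psi}$ produced the replacement (with $\approx$ treated as a constructor), I would take a ground constructor substitution respecting $\cterm\approx\bterm\ \constraint{\psi}$, pull it back via \thmshort\ref{thm:constrainedterm:reverse} to a ground substitution respecting $\varphi$ with the reduction staying inside the left side of $\approx$, normalise it to a ground constructor substitution using quasi-reductivity and termination, invoke that $\aterm\simeq\bterm\ \constraint{\varphi}\in\E$ is a theorem, and chain the pieces through $\lrarrr{\Rules}=\lrarrr{\Rules\cup\H}$. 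For \textsc{Expansion}, \lemshort\ref{lem:expansion}(2) shows that every equation in $\Expd(\aterm,\bterm,\varphi,p)$, under any respecting ground constructor substitution, is joinable by one $\arr{\Rules}$-step followed by one $\lrarr{\{\aterm\simeq\bterm\ \constraint{\varphi}\}}$-step, which (as $\aterm\simeq\bterm\ \constraint{\varphi}$ is a theorem) lies in $\lrarrr{\Rules}$; the induction rule possibly added to $\H$ is dealt with below together with $\H'$. The only case that uses \emph{confluence} is \textsc{Constructor}: if $\afun(\aterm_1,\dots,\aterm_n)\approx\afun(\bterm_1,\dots,\bterm_n)\ \constraint{\varphi}$ is a theorem with $\afun$ a constructor, then for a respecting ground constructor substitution $\gamma$ both sides are joinable; since $\afun$ is a constructor no root step applies, so their common normal form is $\afun$ applied to the normal forms of the corresponding arguments, whence $\aterm_i\gamma$ and $\bterm_i\gamma$ have equal normal forms and each $\aterm_i\approx\bterm_i\ \constraint{\varphi}$ is a theorem.

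Granting the claim, I would finish as follows. A single $\vdashri$ step is an instance of $\vdashri\cdots\vdashri$, so \lemshort\ref{lem:properties_of_ri-step}(2) and (3) apply (these also handle \textsc{Completeness}, via its premise derivation) and transfer the two ``rewriting'' hypotheses to $\Rules\cup\H'$: it is terminating, and $\arr{\Rules\cup\H'}\subseteq\arr{\Rules}\cdot\arrr{\Rules\cup\H'}\cdot(\lrarr{\E'}\cup{=})\cdot\gets^*_{\Rules\cup\H'}$ on ground terms. By the claim $\lrarr{\E'}\cup{=}\subseteq\lrarrr{\Rules}$ on ground terms, so \lemshort\ref{lem:principle} applies once more with $\arrz_2:=\arr{\Rules\cup\H'}$ and gives $\lrarrr{\Rules}=\lrarrr{\Rules\cup\H'}$ on ground terms; in particular $\lrarr{\H'}\subseteq\lrarrr{\Rules\cup\H'}=\lrarrr{\Rules}$ on ground terms. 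Together with the claim this is the desired $\lrarr{\E'}\cup\lrarr{\H'}\subseteq\lrarrr{\Rules}$ on ground terms. I expect the \textsc{Constructor} case to be the main obstacle, being the sole place where confluence is used and where one must argue carefully that the normal form of a constructor-rooted ground term is obtained componentwise; the substitution bookkeeping in the \textsc{Simplification} and \textsc{Expansion} cases (passing between ground and ground \emph{constructor} substitutions, and turning $\Rules\cup\H$-conversions into $\Rules$-conversions on ground terms) is the only other mildly delicate point.
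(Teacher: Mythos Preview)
Your proposal is correct and follows essentially the same case analysis as the paper: \thmshort\ref{thm:constrainedterm:reverse} for \textsc{Simplification}, \lemshort\ref{lem:expansion}(2) for \textsc{Expansion}, confluence plus constructor-rootedness for \textsc{Constructor}, and the observation that \textsc{Postulate}, \textsc{Generalization}, and \textsc{Disprove} are excluded by the flag.

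The one organizational difference worth noting concerns $\H'$. The paper splits this in two: for \textsc{Completeness} it argues exactly as you do (via \lemshort\ref{lem:properties_of_ri-step} and \lemshort\ref{lem:principle}), but for every other rule it takes the more direct route of observing that each element of $\H'$ either already lies in $\H$ or (in the \textsc{Expansion} case) corresponds to an equation of $\E$, whence $\lrarr{\H'}\subseteq\lrarr{\E\cup\H}\subseteq\lrarrr{\Rules}$ immediately. Your uniform treatment via \lemshort\ref{lem:properties_of_ri-step} followed by a second invocation of \lemshort\ref{lem:principle} is equally valid and arguably tidier, at the cost of using slightly heavier machinery than needed outside the \textsc{Completeness} case. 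Similarly, your opening application of \lemshort\ref{lem:principle} to obtain $\lrarrr{\Rules}=\lrarrr{\Rules\cup\H}$ is correct but stronger than what the paper actually uses in \textsc{Simplification}, where it simply reads off $\arr{\Rules}\cup\arr{\H}\subseteq\lrarrr{\Rules}$ from the hypothesis.
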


\begin{proof}
Assume that all conditions are satisfied; we consider the inference
rule used to derive $(\E,\H,\complflag) \vdashri (\E',\H',
\complflag)$.

First, suppose the rule used was \textsc{Completeness}, so $(\E,\H,
\complflag) \vdashristar (\E',\H',\incomplflag)$ and $\E' \subseteq
\E$.  As we have assumed that $\leftrightarrow_\E \cup
\leftrightarrow_\H\ \subseteq\ \leftrightarrow_\Rules^*$, certainly
$\leftrightarrow_{\E'}\ \subseteq\ \leftrightarrow_\E\ \subseteq\ 
\leftrightarrow_\Rules^*$.  As for $\leftrightarrow_{\H'}$,
Lemma~\ref{lem:properties_of_ri-step} gives us that
$\rightarrow_{\Rules \cup \H'}\ \subseteq\ \rightarrow_\Rules \cdot
\rightarrow_{\Rules \cup \H'}^* \cdot
\parlr{\E'}
\cdot \leftarrow_{\Rules \cup \H'}^*$, so (using again that
$\leftrightarrow_{\E'}\ \subseteq\ \leftrightarrow_\Rules$ and
that $\parlr{\E'}\ \subseteq\ \leftrightarrow_{\E'}^*$)
we can
apply Lemma~\ref{lem:principle} and termination of $\rightarrow_{
\Rules \cup \H'}$ to obtain $\leftrightarrow_{\H'}\ \subseteq\ 
\leftrightarrow_{\Rules \cup \H'}\ \subseteq\ 
\leftrightarrow_{\Rules}^*$.

If a different rule was applied, then each
element in $\H'$ either also belongs to $\H$ or (in the case of
\textsc{Expansion}) corresponds to an equation in $\E$.  Thus,
$\leftrightarrow_{\H'}\ \subseteq\ \leftrightarrow_{\E \cup \H}
\subseteq \leftrightarrow_\Rules^*$.
So let $\aterm \approx \bterm\ \constraint{\varphi} \in \E' \setminus
\E$; we must see that $\leftrightarrow_{\{\aterm \approx \bterm\ 
\constraint{\varphi}\}}\ \subseteq\ \lrarrr{\Rules}$ on ground
terms.  By
Lemma~\ref{lem:theorem:alternative}, it suffices if for all ground
constructor substitutions $\gamma$ which respect this equation,
$\aterm\gamma \lrarrr{\Rules} \bterm\gamma$.
We fix $\gamma$ and use a case analysis on the applied inference
rule.

\begin{itemize}
\item (\textsc{Simplification}).
  There is $\aterm' \simeq \bterm'\ \constraint{\varphi'} \in
  \E$ such that $\aterm' \approx \bterm'\ \constraint{\varphi'}
  \arr{\Rules \cup \H} \aterm \approx \bterm\ \constraint{\varphi}$
  at position $1 \cdot p$.
  By \thmshort\ref{thm:constrainedterm:reverse}, we can find
  $\delta$ which respects $\varphi'$ such that
  $\aterm'\delta \arr{\Rules \cup \H} \aterm\gamma$ at position $p$
  and $\bterm'\delta = \bterm\gamma$.  As $\arr{\Rules} \cup
  \arr{\H}\ \subseteq\ \leftrightarrow_\Rules^*$ by the
  assumption, $\aterm\gamma \leftrightarrow_\Rules^*
  \aterm'\delta \leftrightarrow_\E \bterm'\delta = \bterm\gamma$,
  which suffices because $\lrarr{\E}\ \subseteq\ \lrarrr{\Rules}$.
\item (\textsc{Deletion}).
  No equations are added in this case.
\item (\textsc{Expansion}).
  There is $\aterm' \simeq \bterm'\ \constraint{\varphi'} \in \E$
  such that $\aterm \approx \bterm\ \constraint{\varphi} \in
  \Expdapp{\aterm'}{\bterm'}{\varphi'}{p}$ for some $p$.
  By \lemshort\ref{lem:expansion}(2), we have $\aterm\gamma\ 
  (\leftarrow_\Rules \cdot \leftrightarrow_\E)\ 
  \bterm\gamma$, which suffices because $\leftrightarrow_\E\ 
  \subseteq\ \leftrightarrow_\Rules^*$.
\item (\textsc{EQ-deletion})
  $\aterm \simeq \bterm\ \constraint{\varphi'} \in \E$, where
  $\varphi = \varphi' \wedge \neg (\aterm_1 = \bterm_1 \wedge
  \cdots \wedge \aterm_n = \bterm_n)$, and $\aterm = C[\aterm_1,
  \ldots,\aterm_n],\ \bterm = C[\bterm_1,\ldots,\bterm_n]$ for
  some $C,\seq{\aterm},\seq{\bterm}$.  Since any substitution
  which respects $\varphi$ also respects $\varphi'$, we must have
  $\aterm\gamma \lrarr{\E} \bterm\gamma$, so $\aterm\gamma
  \lrarrr{\Rules} \bterm\gamma$.
\item (\textsc{Disprove})
  A reduction with this rule does not have the required form.
\item (\textsc{Constructor}) There is $\afun(\ldots,\aterm,\ldots)
  \approx \afun(\ldots,\bterm,\ldots)\ \constraint{\varphi} \in \E$,
  and by assumption $\afun(\ldots,\aterm,\ldots)\gamma
  \leftrightarrow^*_\Rules \afun(\ldots,\bterm,\ldots)\gamma$.  By
  confluence, this means that $\afun(\ldots,\aterm\gamma,\ldots)\!
  \downarrow_\Rules = \afun(\ldots,\bterm\gamma,\ldots)\!\downarrow_{
  \Rules}$, which implies that $(\aterm\gamma)\!\downarrow_\Rules =
  (\bterm\gamma)\!\downarrow_\Rules$.
\item (\textsc{Postulate}, \textsc{Generalization})
  A reduction with these rules does not have the form
  required by the lemma (as the \complflag\ flag is removed).
\qed
\end{itemize}
\end{proof}

This leads to the second part of
\thmshort\ref{thm:ri-correctness}, which largely corresponds to
\lemshort\ref{lem:ri-correctness:complete}:

\begin{lemma}\label{lem:ri-correctness:part2}
If $\Rules$ is confluent
and $(\E,\emptyset,\complflag) \vdashri
\cdots \vdashri \bot$, then there is some equation in $\E$ which is
not an inductive theorem of $\Rules$.
\end{lemma}

\begin{proof}
If $(\E,\emptyset,\complflag) = (\E_1,\H_1,\flag_1) \vdashri \cdots
\vdashri (\E_n,\H_n,\flag_n) \vdashri \bot$, then we easily see that
$\flag_i = \complflag$ for all $i$.
By Lemma~\ref{lem:completeness:final}, $\E_n$ contains an equation
$\aterm \approx \bterm\ \constraint{\varphi}$ which is not an
inductive theorem.  Then $\leftrightarrow_{\E_n}\ \not\subseteq\ 
\leftrightarrow_\Rules^*$ on ground terms.  By
\lemshort\ref{lem:completeness:induction},
Lemma~\ref{lem:properties_of_ri-step}, and induction on $n-i$, this
means that $\leftrightarrow_\E \cup \leftrightarrow_\emptyset\ 
\not\subseteq\ \leftrightarrow_\Rules^*$ on ground terms, so by
Lemma~\ref{lem:theorem:alternative}, not all $e \in \E$ are
inductive theorems.
\qed
\end{proof}

\medskip\noindent
\emph{Proof of \thmshort\ref{thm:ri-correctness}}.
Immediately by Lemmas~\ref{lem:ri-correctness:part1}
and~\ref{lem:ri-correctness:part2}.
\qed

\section{Simple \texttt{sum}}
\label{sec:examples}

To demonstrate the difference in power between our technique and
earlier work, even when not considering advanced data structures which
were not supported in~\cite{nak:nis:kus:sak:sak:10}
or~\cite{fal:kap:12}, we have included an example that can be handled
with the technique in this paper (and is automatically proved by
\ctrl), but not with~\cite{nak:nis:kus:sak:sak:10}
or~\cite{fal:kap:12} (the latter of which is not surprising, as it
does not use any lemma generation at all).

\begin{example}
In the programming course in Nagoya, students in the first
lecture were asked to implement a function $\texttt{sum}$
which
computes the summation from $0$ to a given non-negative integer $x$.
The teacher's reference implementation was:
\smallskip

\begin{verbatim}
int sum(int x) {
    int z = 0;
    for (int i = 1; i <= x; i++) {
        z += i;
    }
    return z;
}
\end{verbatim}
\smallskip

\noindent
Some of the students solved (or tried to solve) this in the clever
way instead:
\medskip

\noindent
\begin{minipage}[t]{0.5\textwidth}
\begin{verbatim}
int sum1(int x) {
    return x * (x + 1) / 2;
}
\end{verbatim}
\end{minipage}
\begin{minipage}[t]{0.5\textwidth}
\begin{verbatim}
int sum2(int x) {
    return x * (x - 1) / 2;
}
\end{verbatim}
\end{minipage}

\bigskip\noindent
To stay close to the transformation from~\cite{nak:nis:kus:sak:sak:10}
(which does not use the $\symb{return}$ and $\symb{error}$ symbols),
we consider the following translation:
\[
\begin{array}{rcl}
\symb{sum}(x) & \arrz & \symb{u}(x,\one,\nul) \\
\symb{u}(x,i,z) & \arrz & \symb{u}(x,i+\one,z+i) ~~\hfill \constraint{i \leq x} \\
\symb{u}(x,i,z) & \arrz & z \hfill \constraint{i > x} \\
\symb{sum1}(x) & \arrz & x * (x + \one)\ \symb{div}\ \two \\
\symb{sum2}(x) & \arrz & x * (x - \one)\ \symb{div}\ \two \\
\end{array}
\]
Our implementation succeeds in proving that
$\symb{sum}(n) \approx \symb{sum1}(n)\ \constraint{n \geq \nul}$ is an
inductive theorem and that
$\symb{sum}(n) \approx \symb{sum2}(n)\ \constraint{n \geq \nul}$ is not.
We also succeed on the translation using the methods in the current
paper.
On the other hand, the method in~\cite{nak:nis:kus:sak:sak:10} fails to
prove or disprove these claims. 
\end{example}

\section{Some examples we cannot handle.}\label{app:difficult}

To demonstrate the kind of problems \ctrl\ cannot yet handle, we
compare a recursive definition $\texttt{sum}$ of the function $n
\mapsto \sum_{i=1}^n i$ with three iterative implementations.

\vspace{5pt}
\noindent
\begin{tabular}{c|c}
\begin{minipage}[t]{0.3\textwidth}
\begin{verbatim}
int sum(n) {
  if (n < 0) return 0;
  return n + sum(n-1);
}
\end{verbatim}
\end{minipage}
&
\begin{minipage}[t]{0.6\textwidth}
\begin{verbatim}
int sum1(n) {
  int i = 0, j = 0, sum = 0;
  for (; i <= n; i++,j++) sum += j;
  return len;
}
\end{verbatim}
\end{minipage} \\
\\
\hline
\\
\begin{minipage}[t]{0.3\textwidth}
\begin{verbatim}
int sum2(int n){
  int i,sum=0;
  for (i=n;i>=0;i--)
    sum=sum+i;
  return sum;
}

\end{verbatim}
\end{minipage}
&
\begin{minipage}[t]{0.6\textwidth}
\begin{verbatim}
int sum3(n) {
  int ret = 0;
  for (int i = 0; i <= n; i++)
    for (int j = 0; j < i; j++) ret++;
  return ret;
}

\end{verbatim}
\end{minipage}
\end{tabular}

Equivalence between $\texttt{sum}$ and each of $\texttt{sum1}$,
$\texttt{sum2}$ and $\texttt{sum3}$ fails for the three main reasons
discussed in \secshort\ref{subsec:experiments}.
For $\texttt{sum1}$, generalizing the initialization variables loses the
information that always $i = j$.  For $\texttt{sum2}$, our main
generalization method (\secshort\ref{subsec:generalise})
does not apply because we do not recognize $i=n$
as an initialization.  For $\texttt{sum3}$, our strategy fails because
the two loop counters are generalized together.

\bibliographystyle{acmtrans}
\bibliography{references}

\end{document}